\documentclass{article}
\usepackage{arxiv_preprint,times}
\usepackage{amsmath,amssymb,amsthm,array,booktabs,longtable,graphicx,microtype,url,xurl,placeins}
\usepackage{algorithm,algpseudocode}
\usepackage[hidelinks]{hyperref}
\hypersetup{pdftitle={Learning the Maximum Tolerated Dose for Continuous Toxicity via Monotone Bayesian Trees},pdfauthor={Se Yoon Lee}}
\theoremstyle{definition}
\newtheorem{definition}{Definition}
\theoremstyle{plain}
\newtheorem{proposition}{Proposition}
\newtheorem{theorem}[proposition]{Theorem}
\newcommand{\E}{\mathbb E}

\title{Learning the Maximum Tolerated Dose\\for Continuous Toxicity via\\Monotone Bayesian Trees}
\author{Se Yoon Lee}
\begin{document}
\maketitle
\begin{abstract}
Phase I cancer trials seek the maximum tolerated dose (MTD) while protecting patients from excessive toxicity. Dose assignments must therefore balance patient safety with learning the dose--toxicity relationship as data accrue. We model continuously measured toxicity outcomes using two forms of Bayesian additive regression trees (BART): isotonic BART projects posterior response curves onto nondecreasing functions, whereas monotone BART constrains the model. Joint curve and variance draws induce an MTD posterior that guides dose selection through escalation with overdose control (EWOC). We compare these methods with three parametric procedures across seven dose--toxicity curves in simulation. We assess dose-limiting toxicity (DLT) counts, above-MTD assignments, signed last-dose error, and relative absolute error. The tree procedures attained the lowest mean RAE on four nonlinear curves and jointly minimized mean DLT counts and above-MTD assignments on four curves. A Bayesian reinforcement learning perspective formulates these sequential decisions as a finite-horizon planning problem. Dose restrictions yield a lower bound on last-dose error; under exact posterior-predictive evaluation, an EWOC-based rollout policy has no greater expected weighted loss than its baseline. \textsf{\textbf{Dose Trial Lab}}, a desktop simulator for all five procedures, accompanies the supplementary material.
\end{abstract}
% BEGIN SOURCE: bart_conference_intro.tex
\section{Introduction}
Phase I cancer trials seek to identify the maximum tolerated dose (MTD), the highest dose associated with an acceptable toxicity risk \citep{tighiouart2010}. Traditional phase I cancer dose-escalation trials generally enroll roughly 30--50 patients \citep{dahlberg2014}. Responses from earlier patients or cohorts guide later dose assignments. Because participants often have advanced cancer and limited treatment options, dose escalation must pursue precise MTD estimation while carefully limiting exposure to excessive toxicity \citep{tighiouart2010,bartroff2011}.

In this paper, we model continuously measured toxicity outcomes, such as a composite toxicity score derived from adverse events graded using the Common Terminology Criteria for Adverse Events (CTCAE)\footnote{\url{https://dctd.cancer.gov/research/ctep-trials/for-sites/adverse-events}} or a surrogate biomarker \citep{lee2022,lee2023}. In our framework, a response meeting or exceeding a prespecified threshold defines a dose-limiting toxicity (DLT) event. The probability of this event determines the MTD.

We use escalation with overdose control (EWOC) as our dose-allocation rule. EWOC selects a conservative posterior quantile of the MTD, with a feasibility bound $\alpha$ limiting the posterior probability that the assigned dose exceeds the MTD \citep{babb1998,zacks1998}. For cytotoxic agents, toxicity risk is presumed nondecreasing with dose, although the shape of the dose--toxicity relationship is unknown \citep{tighiouart2010,pantoja2022,lee2023}. Given the small sample sizes of phase I cancer trials, the dose-finding literature has typically relied on parametric models. Our comparators are the one-parameter linear dose finder (1PLD), which estimates a slope; the two-parameter linear dose finder (2PLD), which also estimates residual variance; and the three-parameter nonlinear dose finder (3PND), which estimates scale, variance, and a curvature exponent \citep{eichhorn1973,lee2022,lee2023}. We also study two constructions of Bayesian additive regression trees (BART): isotonic BART (\textbf{iBART}) projects posterior draws onto nondecreasing functions, whereas monotone BART (\textbf{mBART}) constrains the model itself \citep{chipman2010,chipman2022}. Figure~\ref{fig:dose-toxicity-curves} illustrates the mean dose--toxicity curves for all five procedures.

We also study a Bayesian reinforcement learning (RL) formulation that connects these response models to sequential dose decisions under the trial protocol. The state comprises the joint posterior, last administered dose, enrollment, and cumulative DLT count. Under the upward cap and Only Escalation restriction, an excessive dose increase cannot be reversed, while holding the dose can leave the target unreachable within the remaining patient budget. We derive a lower bound on the error of the terminal last-dose estimate and develop a finite-candidate Bellman formulation with an EWOC-based rollout policy. Under exact posterior-predictive evaluation and the stated conditions, rollout has no greater expected weighted loss than the EWOC baseline. Appendix~\ref{app:rl-formulation} gives the proofs and planning algorithm; Appendix~\ref{sec:rl-seven-curve} reports the numerical 1PLD rollout comparison. This formulation accounts for dose controls, stopping after complete cohorts, and the use of the last administered dose as the terminal estimate.

In simulation experiments, we compare all five procedures under the same EWOC allocation, dose restrictions, and cumulative-DLT stopping, with 45 planned patients and seven dose--toxicity curves. The last dose estimates the MTD; the four criteria are DLT counts, above-MTD assignments, signed last-dose error, and relative absolute error. The accompanying desktop software supports EWOC and rollout with all five models, a user-selected rollout budget, paired policy comparisons, trial playback, posterior plots, and exports. Its bundled runtime permits offline use; Appendix~\ref{app:software-manual} provides an illustrated guide.

\paragraph{Novelty.} Table~\ref{tab:novelty-positioning} and Appendix~\ref{app:related-methods} establish our methodological distinction. To our knowledge, ours is the first phase I oncology framework combining continuously measured toxicity, monotone Bayesian trees, EWOC, and EWOC rollout within a Bayesian reinforcement learning formulation, with MTD-posterior consistency, a protocol-specific reachability bound, and exact expected-loss improvement of rollout over EWOC under the stated conditions. Alongside the methodological contribution, we developed a desktop simulator for users.

% END SOURCE: bart_conference_intro.tex
% BEGIN SOURCE: bart_conference_method.tex
\section{Bayesian dose finding}
\label{sec:conference-method}
\subsection{Problem setup}
Let $N$ denote the planned enrollment,
$n\leq N$ the number of patients observed
so far, and $\mathcal F_n$ the information contained in the observed
dose--response pairs $\{(X_i,Y_i)\}_{i=1}^n$. The MTD posterior guides the next dose after each patient or completed cohort. Clinicians prespecify a fixed dose interval $\mathcal X=(x_{\min},x_{\max})\subset(0,\infty)$ in original dose units, with width $L_x=x_{\max}-x_{\min}$. Preclinical studies typically inform these limits. Larger continuous responses $Y(x)$ indicate more severe toxicity; a DLT event occurs when $Y(x)\geq\eta$ for prespecified $\eta>0$, with cumulative count $D_n=\sum_{i=1}^n\mathbf1\{Y_i\geq\eta\}$. Sponsors and investigators seek the MTD while limiting excessive DLT risk \citep{bartroff2011}.

\begin{definition}[Maximum tolerated dose; \citealp{lee2023}]
\label{def:clinical-mtd}
For prespecified $\eta>0$ and $\gamma\in(1/2,1)$, a dose $x\in\mathcal X$ is acceptable if
\[
 \Pr\{Y<\eta\mid X=x\}\geq\gamma.
\]
Equivalently, its DLT probability is at most $\theta=1-\gamma$.
The MTD $\xi$ is the supremum of acceptable doses, with $\xi=x_{\min}$
if none are acceptable and $\xi=x_{\max}$ if all are acceptable.
\end{definition}

% BEGIN SOURCE: bart_clinical_concept.tex
\begin{figure}[!t]
 \centering
  \begin{minipage}[t]{200bp}
  \vspace{0pt}
   \includegraphics[width=200bp]{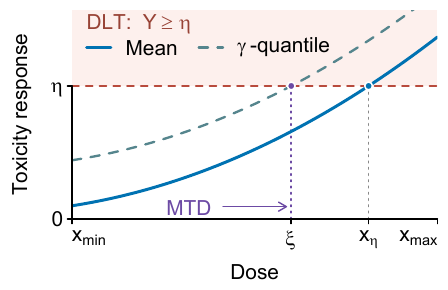}
 \end{minipage}\hfill
  \begin{minipage}[t]{184bp}
  \vspace{0pt}
  \setlength{\abovecaptionskip}{0pt}
  \setlength{\belowcaptionskip}{0pt}
  \caption[A continuously measured toxicity outcome and the MTD.]{A continuously measured toxicity outcome and the MTD.
  For an increasing Gaussian mean, the dashed response quantile
  $f+\sigma\Phi^{-1}(\gamma)$ reaches $\eta$ at $\xi$; the mean reaches it
  at $x_\eta$. DLT probabilities are $1-\gamma$ and $1/2$, respectively.
  The dashed curve describes outcome variability, not posterior uncertainty.
  Increasing $\sigma$ lowers the MTD when both crossings are interior.
  Joint curve--variance draws propagate uncertainty into dose selection.}
  \label{fig:clinical-concept}
 \end{minipage}
\end{figure}
% END SOURCE: bart_clinical_concept.tex

We model the continuously measured outcome with a joint prior on the mean dose--toxicity curve $f$ and residual standard deviation (SD) $\sigma$:
\begin{equation}
 \begin{gathered}
 Y_i=f(X_i)+\epsilon_i,\quad
 \epsilon_i\mid\mathcal F_{i-1},X_i\sim\mathcal N(0,\sigma^2),\\
 d\Pi_n(f,\sigma)\propto
 \left[\prod_{i=1}^n\frac{\phi\{(Y_i-f(X_i))/\sigma\}}{\sigma}\right]
 d\Pi(f,\sigma).
 \end{gathered}
 \label{eq:conference-posterior}
\end{equation}
Here $X_i\in\mathcal X$ is the dose administered to patient $i$. Subsection~\ref{sec:bart-likelihood} derives the posterior under adaptive allocation.

\subsection{Monotone Bayesian trees: \textnormal{iBART} and \textnormal{mBART}}

\textbf{iBART.} Fit the unconstrained BART mean $g=b_0+\sum_{j=1}^{m}h_j$, replacing $f$ by $g$ in \eqref{eq:conference-posterior}. On the dose cells $B_1,\ldots,B_J$, project each draw onto the nondecreasing cone:
\begin{equation*}
 (\mathcal P g)_j=\left[\operatorname*{arg\,min}_{z_1\leq\cdots\leq z_J}
 \sum_{k=1}^{J}|B_k|(g_k-z_k)^2\right]_j .
\end{equation*}
The pool-adjacent-violators algorithm (PAVA) solves this weighted least-squares problem \citep{deleeuw2009}. Cell-length weights give the Lebesgue $L_2$ projection on $[x_{\min},x_{\max}]$. Pair each projected draw with its sampled $\sigma$ for MTD inference.

\textbf{mBART.} Constrain each tree's leaf values to be nondecreasing from left to right \citep{chipman2022}. The sum $f$ is then nondecreasing, so no projection is needed. The constrained-product prior imposes this monotonicity in the fitted model. Appendix~\ref{sec:bart-priors} distinguishes that continuous prior from its numerical leaf-grid approximation.

Both methods target a nondecreasing true mean $f_0$, but the dose-finding estimand is the MTD rather than the curve itself. Thus $f$ (together with $\sigma$) is an intermediate, or nuisance, component for MTD inference, unlike in conventional curve fitting. Monotonicity concerns the mean response and toxicity risk, not individual outcomes. For MTD estimation, iBART uses $f=\mathcal P g$, while mBART uses $f$ directly. Define the toxicity-risk functional $p_{f,\sigma}$ and MTD functional $\xi(f,\sigma)$ by
\begin{equation*}
 p_{f,\sigma}(x)=\Phi\{(f(x)-\eta)/\sigma\},\qquad
 \xi(f,\sigma)=\sup\bigl(\{x_{\min}\}\cup
 \{x\in\mathcal X:p_{f,\sigma}(x)\leq\theta\}\bigr).
\end{equation*}
For mBART, $p_{f,\sigma}$ is conditional toxicity risk; for iBART it is a projected tail functional, with the unprojected curve retained in the likelihood. Definition~\ref{def:clinical-mtd} permits jumps and plateaus without requiring $p_{f,\sigma}(\xi)=\theta$; assigned doses remain in $\mathcal X$. Joint draws of $f$ and $\sigma$ induce the MTD posterior.

\paragraph{Priors and computation.}
We use the BART and mBART tree-depth defaults, 200 trees, and shrinkage $k=2$ \citep{chipman2010,chipman2022}. A common prior on residual variance completes each model. Before enrollment, we fix all priors and 100 equally spaced candidate tree-split points on $(x_{\min},x_{\max})$, the dose thresholds available for splitting tree nodes. Doses and split points retain their physical units. Appendix~\ref{sec:bart-priors} gives the response calibration, variance prior, cut locations, and numerical approximation.
Appendix~\ref{app:method-dags} compares the graphical models of all five procedures: 1PLD, 2PLD, 3PND, iBART, and mBART.

\subsection{EWOC allocation and stopping}
\label{sec:conference-ewoc-allocation}

After observing $\mathcal F_n$, let
$\Pi_n^\xi(x)=\Pr(\xi\leq x\mid\mathcal F_n)$ denote the posterior
cumulative distribution function (CDF) of the MTD.
EWOC \citep{babb1998,zacks1998,tighiouart2010,lee2022,lee2023} proposes the next
dose at its lower $\alpha$-quantile:
\[
D_\alpha(\mathcal F_n)=(\Pi_n^\xi)^{-1}(\alpha).
\]
For a continuous MTD posterior,
$\Pr\{\xi\leq D_\alpha(\mathcal F_n)\mid\mathcal F_n\}=\alpha$.
A smaller $\alpha$ therefore selects a more conservative dose.
The clinical target $\theta=1-\gamma$ specifies acceptable toxicity
risk, whereas $\alpha$ controls allocation under posterior uncertainty.

Tree-based MTD posteriors can place mass at cut points or endpoints.
We inward-adjust the EWOC quantile to obtain a proposal
$\widetilde x_{n+1}$, using the numerical interior floor
$x_{\min}+\delta_{\mathrm{num}}$, and assess the nominal condition $\Pr\{p_{f,\sigma}(\widetilde x_{n+1})>1-\gamma
\mid\mathcal F_n\}\leq\alpha.$ For iBART, this probability concerns the projected tail functional.

The assigned dose also respects the protocol's dose restrictions.
With a prespecified upward cap $\Delta_{\max}>0$ and
Only Escalation enabled, the next cohort receives
\[
X_{n+1}
=
\max\left\{
X_n,\,
\min\left(\widetilde x_{n+1},X_n+\Delta_{\max}\right)
\right\}.
\]
The cap limits the size of an increase, while Only Escalation prevents
a decrease. Thus, the implemented allocation combines an EWOC proposal
with dose restrictions that prevent abrupt escalation: even when the
posterior $\alpha$-quantile suggests a substantially higher dose,
the increase for the next patient or cohort cannot exceed
$\Delta_{\max}$.

Let $K$ denote the prespecified DLT-count limit and $D_n$ the cumulative
number of observed DLTs. Enrollment ends at the first completed cohort
for which $D_n>K$, or when the planned enrollment $N$ is reached.
Clinical thresholds, dose range, $\alpha$, initial dose, patient
budget, cohort size, dose restrictions, stopping limit, and priors
are specified before enrollment.
Appendix~\ref{sec:bart-target} gives the full rule.

This comparison uses $N=45$, cohorts of three,
$\Delta_{\max}=3.5$, and $K=\lfloor0.1N\rfloor=4$, with Only
Escalation enabled. The generating curve, variance, and MTD remain
unknown to the allocation rule. \textsf{\textbf{Dose Trial Lab}}
allows users to disable either dose restriction and explore
alternative settings.

\begin{samepage}
\paragraph{A common adaptive design for five models.}
Each model supplies an MTD posterior; the common rule $\mathcal D$ applies the resulting EWOC proposal, dose restrictions, and count-based stopping. Algorithm~\ref{alg:main-dose-finding} gives the common sequence. Appendix~\ref{sec:bart-target} specifies MTD-posterior inversion and the common restrictions; Appendix~\ref{app:bart-numerics} gives the model-specific posterior updates.
\par\end{samepage}

% BEGIN SOURCE: bart_main_algorithm.tex
\begin{algorithm}[!htb]
\caption{EWOC trial loop for any of the five response models}
\label{alg:main-dose-finding}
\begin{algorithmic}[1]
\Require Prespecified initial dose $x_1$, budget $N$, cohort size $c$, threshold $K_N=\lfloor0.1N\rfloor$, and the chosen model's rule $\mathcal D$, including priors, clinical inputs, $\alpha$, inward margin, floor, upward cap, and Only Escalation setting.
\Ensure Reported MTD estimate $\widehat\xi_{n_{\mathrm{end}}}=X_{n_{\mathrm{end}}}$ and information $\mathcal F_{n_{\mathrm{end}}}$, where $n_{\mathrm{end}}\leq N$.
\State Set $n=0$, $D_0=0$, and $x=x_1$.
\While{$n<N$}
  \State Set $c'=\min(c,N-n)$.
  \State Assign $X_{n+1}=\cdots=X_{n+c'}=x$.
  \State Observe all $c'$ responses, append them to the history, and set $n\gets n+c'$.
  \State Update the posterior using all of $\mathcal F_n$.
  \State Set $D_n=\sum_{i=1}^n\mathbf1\{Y_i\geq\eta\}$.
  \If{$n=N$ or $D_n>K_N$} \State \textbf{break} \EndIf
  \State Set $x=\mathcal D(\mathcal F_n)$ using the inward EWOC proposal, upward cap, and Only Escalation hold.
\EndWhile
\State Set $n_{\mathrm{end}}=n$ and retain the final posterior.
\end{algorithmic}
\smallskip
Each cohort shares one dose. All of its responses enter the posterior before stopping is checked; the next dose is computed only if the trial continues.
\end{algorithm}
% END SOURCE: bart_main_algorithm.tex

\paragraph{MTD posterior consistency.}
Appendix~\ref{sec:doob-consistency} proves consistency of the MTD
posterior under a fixed-cell Gaussian model, prior support at the
truth, persistent cell sampling, and a strict risk-threshold margin.
The proof uses a Schwartz-type likelihood numerator--denominator
argument \citep{schwartz1965}. Under these conditions, posterior mass
concentrates at the true MTD as observations accumulate. This asymptotic
result does not establish finite-sample operating characteristics.

% END SOURCE: bart_conference_method.tex
% BEGIN SOURCE: bart_rl_main.tex
\section{Bayesian reinforcement learning under the trial protocol}
\label{sec:rl-framework}
% BEGIN SOURCE: bart_rl_feedback_figure.tex
\begin{figure}[!t]
\centering
\includegraphics[width=\textwidth]{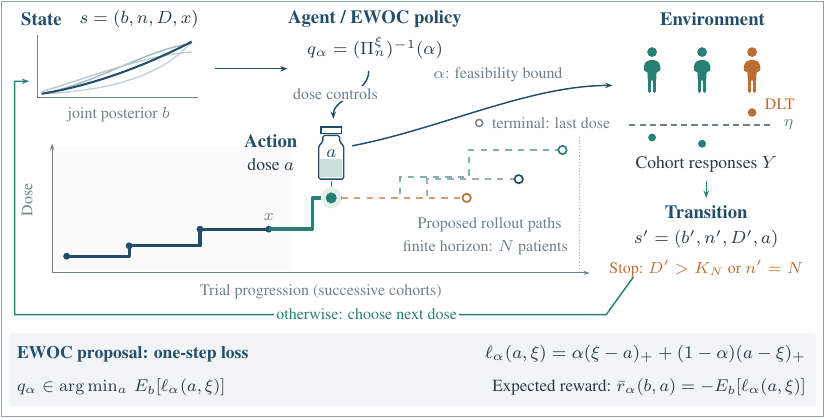}
\caption{\textbf{Dose finding as a Bayesian RL decision process.}
The state combines the joint posterior and trial status. The EWOC policy
selects a dose; complete cohort responses update the state before the
stopping check. Idealized EWOC minimizes posterior expected asymmetric
dose loss \citep{bartroff2011}, whose realized value depends on the unknown
MTD. The implemented policy additionally applies the inward adjustment
and dose controls, including the protocol's floor and hold fallback.
Dashed paths illustrate rollout continuations; the five-model comparison
uses the EWOC rule.}
\label{fig:rl-dose-feedback}
\end{figure}
% END SOURCE: bart_rl_feedback_figure.tex
Dose selection changes both patient exposure and the information available
for subsequent decisions. A Bayesian RL perspective represents this
coupling through a joint posterior state and a finite enrollment horizon
\citep{ross2011bayesian}. Bayesian dynamic programming and EWOC-based
rollout have clinical precedents \citep{bartroff2010dp}; here we incorporate
continuously measured toxicity outcomes, monotone tree inference, and the last-dose estimate.
The upward cap, Only Escalation, and completed-cohort stopping apply to
both actual and hypothetical trajectories. Figure~\ref{fig:rl-dose-feedback}
summarizes the resulting loop. We formulate EWOC and Bayesian planning
within this belief-state decision process.
Section~\ref{sec:conference-experiments} compares five EWOC procedures;
Appendix~\ref{app:rl-formulation} provides the planning proofs, and Appendix~\ref{sec:rl-seven-curve} reports the numerical rollout comparison.

\paragraph{State and admissible doses.}
After a completed cohort, use the state
$s=(b,n,D,x)$: the joint parameter posterior, patient count, cumulative
DLT count, and last assigned dose. The joint posterior is needed because
an MTD distribution alone does not determine the next response
distribution. For example, different positive slopes can have the same
Gaussian tail crossing but different responses below that crossing.
The same MTD belief can consequently produce different information
from the next cohort. The last dose $x$ determines the no-decrease
constraint; $n$ and $D$ determine the remaining budget and proximity
to a compulsory stop. A state is terminal exactly when $D>K_N$ or $n=N$.
With Only Escalation and the upward cap active after the prespecified
initial cohort, the protocol interval is
\[
 I(s)=[x,\min\{x+\Delta_{\max},x_{\max}-\delta_{\mathrm{num}}\}].
\]
Set $r_b(a)=b\{p_{f,\sigma}(a)>1-\gamma\}$ and
$F(s)=\{a\in I(s):r_b(a)\leq\alpha\}$. The planning action set is
\begin{equation}
 \mathcal A(s)=
 \begin{cases}
 F(s),&F(s)\ne\varnothing,\\
 \{x\},&F(s)=\varnothing .
 \end{cases}
 \label{eq:rl-actions}
\end{equation}
The second case enforces a hold and records a feasibility exception;
it does not stop enrollment. Monotonicity implies that if the current
dose is infeasible, every allowed higher dose is also infeasible.
With Only Escalation disabled, replace the lower endpoint by
$x_{\min}+\delta_{\mathrm{num}}$ and use that floor as the fallback.
With the cap disabled, replace the upper endpoint by
$x_{\max}-\delta_{\mathrm{num}}$.
Thus the same formulation accommodates all four option combinations.
For the exact-posterior EWOC policy, the inward proposal is below its
lower $\alpha$-quantile unless the lower floor binds. Below that quantile, $r_b(a)\leq b\{\xi\leq a\}<\alpha$;
an upward cap can only lower the proposal. Any infeasible baseline
assignment therefore occurs at the permitted lower endpoint and agrees
with the fallback in \eqref{eq:rl-actions}.

\paragraph{The one-step loss underlying the EWOC proposal.}
The agent applies a dose-selection policy, its action is the next
cohort's dose, and the patient response model defines the environment. EWOC is
a specified policy in this decision process. Its idealized, unadjusted
posterior quantile $q_\alpha=(\Pi_n^\xi)^{-1}(\alpha)$ minimizes the
posterior expected asymmetric dose loss \citep{bartroff2011}:
\[
 \ell_\alpha(a,\xi)=\alpha(\xi-a)_+ +(1-\alpha)(a-\xi)_+,
 \qquad q_\alpha\in\arg\min_a\E_b[\ell_\alpha(a,\xi)].
\]
The two terms penalize underdosing and overdosing, respectively; a small
$\alpha$ gives greater weight to overdosing. Thus the quantile maximizes
the one-step expected reward $-\E_b[\ell_\alpha(a,\xi)]$. The implemented
policy also applies the inward adjustment and dose controls. The
planning extension below evaluates the full trial under a separate
cumulative loss.

\paragraph{Learning transition.}
Write $\vartheta$ for the model parameter and $\mu_\vartheta$ for its
likelihood mean: the raw $g$ for iBART, and $f$ for mBART.
For $m=\min(c,N-n)$ patients at dose $a$, where $c$ is cohort size,
\begin{equation*}
 P_b(d\mathbf y\mid a)=\int\prod_{j=1}^{m}
 \mathcal N(dy_j;\mu_\vartheta(a),\sigma^2)\,b(d\vartheta),\quad
 s'=\bigl(\mathcal U(b,a,\mathbf y),\,n+m,\,D+d(\mathbf y),\,a\bigr),
\end{equation*}
where $d(\mathbf y)=\sum_{j=1}^{m}\mathbf1\{y_j\geq\eta\}$ and
$\mathcal U$ is the joint Bayesian update. Integrating a product of
conditional densities accounts for the shared unknown parameter;
multiplying marginal predictive densities would not. The final
cohort is fully observed and its posterior update is completed before
checking the stopping rule. Under the numerical-study protocol, $N=45$, $c=3$,
and $K_N=4$: five or more cumulative DLTs trigger the count rule after
the cohort, and enrollment also ends at 45 patients. Neither the upward
cap nor Only Escalation is a stopping rule. These variables and the
joint posterior suffice for the conditional law of future responses,
costs, and states (Proposition~\ref{prop:rl-markov}).

\paragraph{Planning with the last dose as the estimate.}
The planning extension uses a different loss from $\ell_\alpha$: it
combines DLT counts, above-MTD assignments, and terminal relative
absolute error with prespecified nonnegative weights $w_D,w_O,w_R$.
EWOC supplies a baseline policy; it is not assumed to minimize this
loss at one step or over the trial. Signed error retains its role as
a reporting criterion. Let $o_b(a)=b\{\xi<a\}$. This probability
defines above-MTD exposure and may differ from $r_b(a)$ at a tree
jump. Write
\[
 R(b,x)=w_R\int\frac{|x-\xi(\vartheta)|}{\xi(\vartheta)}\,b(d\vartheta).
\]
The continuous-action value $V(s)$ is the infimum of posterior expected
continuation loss over admissible policies: future DLT and above-MTD
counts plus terminal last-dose error.
Past count contributions are fixed at the current state. The
following finite-horizon Bellman recursion holds when measurable
$\varepsilon$-optimal selectors exist for the admissible action sets:
\begin{equation}
 \begin{gathered}
 V(s)=R(b,x),\quad D>K_N\ \text{or}\ n=N,\\
 V(s)=\inf_{a\in\mathcal A(s)}\biggl\{w_Om\,o_b(a)
 +\int\bigl[w_Dd(\mathbf y)+V(s')\bigr]
 P_b(d\mathbf y\mid a)\biggr\},\quad\text{otherwise}.
 \end{gathered}
 \label{eq:rl-bellman}
\end{equation}
Its terminal estimate is the dose actually administered to the last
cohort, including after a DLT stop. There is no additional terminal
dose choice. The terminal expectation quantifies loss under current
uncertainty; it does not replace the last dose by a posterior median.
The weights express the relative importance assigned to the existing
clinical criteria; they are not additional reported performance metrics.
In RL terms, maximizing expected undiscounted return is equivalent to
minimizing this loss: each cohort contributes reward
$-w_Dd(\mathbf y)-w_Om\mathbf1\{\xi<a\}$, followed by terminal reward
$-w_R|x-\xi|/\xi$. MTD-dependent rewards are latent and are evaluated
through posterior expectations.
The DLT term in \eqref{eq:rl-bellman} uses the predictive distribution
from the likelihood. In particular, iBART uses its raw mean $g$ to
generate responses, while its projected mean defines the MTD and
posterior feasibility. Substituting the projected tail for the raw
predictive DLT probability would change this planning problem.

The horizon is finite because every action enrolls at least one patient.
With $x_{\min}>0$, bounded doses, and fixed nonnegative weights, all
costs are bounded. For a measurable finite candidate set
$C(s)\subseteq\mathcal A(s)$, the infimum becomes a minimum; backward
induction with a fixed tie rule yields a Markov optimal policy on those
candidates (Proposition~\ref{prop:rl-bellman}). No assertion that a
minimizer exists on the full continuous action set is needed for the
finite-candidate planner.

\paragraph{Reachability and terminal error.}
The allocation restrictions determine the range of terminal doses
reachable from a given state. Let $H(s)=\lceil(N-n)/c\rceil$ at a nonterminal state and
$H(s)=0$ at a terminal state, and put
\[
 u_H=\min\{x+H(s)\Delta_{\max},x_{\max}-\delta_{\mathrm{num}}\}.
\]

\begin{proposition}[Reachability bound]
\label{prop:rl-reachability}
Under Only Escalation and the upward cap, every admissible continuation
from $s$ has terminal dose $X_{n_{\mathrm{end}}}\in[x,u_H]$. If
$\xi(\vartheta)\in[x_{\min},x_{\max}]$ and $x_{\min}>0$, then for
every such policy $\pi$,
\begin{equation*}
 \E_b^\pi\!\left[\frac{|X_{n_{\mathrm{end}}}-\xi(\vartheta)|}{\xi(\vartheta)}\right]
 \geq\int\frac{(x-\xi(\vartheta))_+
               +(\xi(\vartheta)-u_H)_+}{\xi(\vartheta)}\,b(d\vartheta),
\end{equation*}
where $(z)_+=\max(z,0)$ and $n_{\mathrm{end}}$ is terminal enrollment.
\end{proposition}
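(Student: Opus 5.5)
The argument has two parts: first a deterministic, pathwise statement about which terminal doses are reachable, then a pointwise inequality that we integrate against the posterior.

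\textbf{Step 1 (reachability).} Fix $s=(b,n,D,x)$. If $s$ is terminal, then $H(s)=0$ and $n_{\mathrm{end}}=n$, so $X_{n_{\mathrm{end}}}=x$ and $u_H=\min\{x,x_{\max}-\delta_{\mathrm{num}}\}$. This presumes $x\le x_{\max}-\delta_{\mathrm{num}}$, which holds because every assigned dose lies in a protocol interval. If $s$ is nonterminal, the continuation consists of at most $H(s)=\lceil (N-n)/c\rceil$ further cohorts, since each cohort enrolls $\min(c,N-n')\ge1$ patients and exhausts the budget after $H(s)$ steps. It may be fewer cohorts if the DLT rule stops earlier.

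Along any admissible trajectory, each action lies in $\mathcal A(s')\subseteq I(s')$, where $s'$ is the current state. This holds in both cases of \eqref{eq:rl-actions}, since the fallback $\{x'\}$ is contained in $I(s')$. So if $x'$ is the current last dose, the next dose lies in $[x',\min\{x'+\Delta_{\max},x_{\max}-\delta_{\mathrm{num}}\}]$. By induction on the number $k$ of completed cohorts, the last dose after $k$ steps lies in $[x,\min\{x+k\Delta_{\max},x_{\max}-\delta_{\mathrm{num}}\}]$. The lower bound follows from monotone nondecrease. The upper bound follows from $\min\{\min\{y,x_{\max}-\delta\}+\Delta,x_{\max}-\delta\}\le\min\{y+\Delta,x_{\max}-\delta\}$. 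Taking $k\le H(s)$ gives $X_{n_{\mathrm{end}}}\in[x,u_H]$ surely, for every policy and every realization of responses. This holds whether the policy is randomized or history-dependent, because only the action constraint is used.

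\textbf{Step 2 (pointwise bound).} For any $y\in[x,u]$ and any $\xi>0$, we claim
\[
|y-\xi|\ge (x-\xi)_++(\xi-u)_+ .
\]
We check this by cases. If $\xi<x$, the right side is $x-\xi\le y-\xi$. If $\xi>u$, it is $\xi-u\le\xi-y$. Otherwise it is $0$. The cases $\xi<x$ and $\xi>u$ cannot hold together, since $x\le u$. Dividing by $\xi(\vartheta)\ge x_{\min}>0$ keeps every quantity finite and nonnegative.

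\textbf{Step 3 (integrate).} Under $\E_b^\pi$, the parameter $\vartheta$ is drawn from $b$ and the trajectory is generated given $\vartheta$ and $\pi$. Applying Step 2 pathwise with $y=X_{n_{\mathrm{end}}}$ and $u=u_H$ bounds the integrand below by a function of $\vartheta$ alone. Because that lower bound does not depend on the trajectory, taking the conditional expectation given $\vartheta$ and then integrating over $b(d\vartheta)$ yields the displayed inequality. All terms are nonnegative, and the integrand is bounded by $L_x/x_{\min}$, so Tonelli/Fubini justifies the interchange.

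\textbf{Main obstacle.} The only delicate point is Step 1. It has to cover the feasibility-exception hold, the terminal-state convention, and the final partial cohort. It also has to handle stopping after a DLT, where the reported dose is the one actually administered. That dose was itself chosen within the allowed interval, so the induction still covers it.
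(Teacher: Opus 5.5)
Your proposal is correct and matches the paper's proof: induction over at most $H(s)$ remaining cohorts confines the terminal dose to $[x,u_H]$, and the distance from $\xi(\vartheta)$ to that interval gives the pathwise bound. Dividing by $\xi(\vartheta)\geq x_{\min}>0$ and integrating over $b$ and the future outcomes then gives the inequality; your explicit treatment of terminal states and of the fallback hold simply spells out details the paper's sketch leaves implicit.
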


Each cohort can increase the dose by at most $\Delta_{\max}$ and cannot
decrease it, so the bound follows by measuring the distance from the
latent MTD to the reachable interval. The first term is the error from
overshooting a target to which the trial cannot return. The second is
the error from a target too high to reach in the remaining cohorts.
DLT stopping can shorten the path and cannot invalidate the bound.
After the initial cohort in our protocol, $n=3$, $x=6$, and $H=14$;
the largest reachable last dose is $6+14(3.5)=55$. Feasibility and
DLT stopping can further narrow the realized path. For example, at
$n=33$ and $x=36$, four cohorts remain and
the cap permits a last dose of at most 50. A hold leaves three cohorts,
lowering this upper bound to 46.5; if the true MTD is 50, every continuation then has
relative absolute error at least 7\%. This bound isolates lost
reachability, while the Bellman objective jointly accounts for toxicity
and learning. With four DLTs already observed, any additional
DLT in the next completed cohort ends the trial, potentially making
that cohort's dose the terminal estimate.

\paragraph{Policy improvement over the EWOC baseline.}
Let $\pi_0$ denote the exact-posterior EWOC policy, including inward
adjustment, dose controls, fallback, and mandatory stopping. Choose a finite candidate
set $C(s)\subseteq\mathcal A(s)$ containing $\pi_0(s)$. Define
\begin{equation*}
 \begin{aligned}
 Q^{\pi_0}(s,a)&=w_Om\,o_b(a)
 +\int\bigl[w_Dd(\mathbf y)+V^{\pi_0}(s')\bigr]
 P_b(d\mathbf y\mid a),\\
 \pi_+(s)&\in\arg\min_{a\in C(s)}Q^{\pi_0}(s,a).
 \end{aligned}
\end{equation*}
Thus a candidate is evaluated using its immediate cohort and the
complete future EWOC continuation, including a possible early DLT
stop. The improved policy recomputes this comparison at every reached
state and therefore governs the full realized trial.

\begin{proposition}[Exact rollout improvement]
\label{prop:rl-rollout}
For fixed nonnegative weights, exact posterior predictive expectations,
and measurable finite candidate sets containing the baseline action,
the policy $\pi_+$ with a fixed tie rule satisfies
$V^{\pi_+}(s)\leq V^{\pi_0}(s)$ at every state. Both policies use the
same response model, dose restrictions, and terminal rules.
\end{proposition}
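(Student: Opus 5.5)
This is the standard rollout (policy-improvement) argument on a finite-horizon problem, proved by backward induction on the number of remaining cohorts. By Proposition~\ref{prop:rl-markov}, the state $s=(b,n,D,x)$ determines the conditional law of future responses, costs and states. Every policy that is a measurable function of the state therefore has a value satisfying the policy-evaluation recursion
\[
 V^{\pi}(s)=w_Om\,o_b(\pi(s))+\int\bigl[w_Dd(\mathbf y)+V^{\pi}(s'_{\pi(s),\mathbf y})\bigr]P_b(d\mathbf y\mid \pi(s)),
\]
with terminal condition $V^\pi(s)=R(b,x)$. In particular, $V^{\pi_0}(s)=Q^{\pi_0}(s,\pi_0(s))$ at every nonterminal state.

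First, I would check that $\pi_+$ is admissible and measurable. Since $C(s)\subseteq\mathcal A(s)$ is finite and measurable, and a fixed tie rule is used, the argmin can be taken as a measurable selector, as in Proposition~\ref{prop:rl-bellman}. The actions of $\pi_+$ lie in $\mathcal A(s)$, so the cap, Only Escalation, the fallback and the mandatory stopping rule all apply unchanged. Boundedness of the costs is guaranteed by $x_{\min}>0$, bounded doses and nonnegative weights, so every integral is finite.

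Next comes the induction. Index states by the number of remaining cohorts $H(s)$; every action enrolls at least one patient, so $H$ strictly decreases along transitions. At terminal states ($H=0$) both values equal $R(b,x)$. Suppose $V^{\pi_+}\le V^{\pi_0}$ at all states with fewer remaining cohorts than $s$. Let $a_+=\pi_+(s)$. Every successor $s'$ of $(s,a_+)$ has smaller $H$, so
\[
 V^{\pi_+}(s)=w_Om\,o_b(a_+)+\int\bigl[w_Dd(\mathbf y)+V^{\pi_+}(s')\bigr]P_b(d\mathbf y\mid a_+)
 \le Q^{\pi_0}(s,a_+).
\]
The inequality uses the induction hypothesis, monotonicity of the integral, and the same transition kernel $P_b$, which holds because both policies use the same model. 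Since $\pi_0(s)\in C(s)$ and $a_+$ minimizes $Q^{\pi_0}(s,\cdot)$ over $C(s)$, we get
\[
 Q^{\pi_0}(s,a_+)\le Q^{\pi_0}(s,\pi_0(s))=V^{\pi_0}(s),
\]
which closes the induction. Because $H(s)\le\lceil N/c\rceil$, the induction terminates.

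The main obstacle is not the inequality chain but justifying the evaluation recursion for both policies on the belief-state process. This means confirming three things: that the Bayesian update $\mathcal U$ and the successor map are measurable; that the iBART likelihood mean $g$, rather than the projected $f$, drives $P_b$ in both $Q^{\pi_0}$ and $V^{\pi_+}$; and that the "exact posterior predictive" hypothesis makes $Q^{\pi_0}$ the true expected continuation cost, not an estimate. I would handle this by invoking Proposition~\ref{prop:rl-markov} for the Markov property. I would then note that the final cohort size $m=\min(c,N-n)$ and the completed-cohort stopping check are both functions of $s'$ alone, so the terminal assignment is identical under both policies.
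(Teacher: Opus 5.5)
Your proposal is correct and follows the paper's own route: backward induction on the remaining horizon (you index by remaining cohorts, the paper by remaining patients), giving $V^{\pi_+}(s)\le Q^{\pi_0}(s,\pi_+(s))\le Q^{\pi_0}(s,\pi_0(s))=V^{\pi_0}(s)$ from the induction hypothesis at successor states and the inclusion of the baseline action in $C(s)$. Your added checks on the policy-evaluation recursion and on measurability of the argmin selector spell out what the paper leaves to Propositions~\ref{prop:rl-markov} and~\ref{prop:rl-bellman}.
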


The baseline candidate gives
$Q^{\pi_0}(s,\pi_+(s))\leq V^{\pi_0}(s)$; induction on the remaining
cohorts extends that comparison to repeated rollout decisions.
For the same candidate sets, $V_C(s)\leq V^{\pi_+}(s)\leq V^{\pi_0}(s)$:
rollout has no greater expected loss than the baseline, while $V_C$
records the finite-candidate optimum. This ordering concerns exact
Bayesian evaluation of the specified weighted loss.

Rollout evaluates candidate doses through hypothetical trials, updating the
posterior after every simulated cohort until stopping.
\citet{bartroff2010dp} implemented EWOC rollout in phase I trials and noted
its computational cost. Algorithm~\ref{alg:rl-app-rollout} and
Figure~\ref{fig:rl-repeated-rollout} in Appendix~\ref{sec:rl-mc-rollout}
give the full procedure and schematic. Appendix~\ref{sec:rl-seven-curve} and
Table~\ref{tab:rl-seven-curve} give a 1PLD rollout illustration with analytic posterior updates. Across
upward caps 3.5, 7, and 10.5, rollout lowered observed mean weighted loss
under all three weight profiles at both $B=1,000$ and $2,000$. Safety emphasis
reduced NPD and NPO, precision emphasis reduced RAE, and results changed little
when $B$ doubled.
\textsf{\textbf{Dose Trial Lab}} also implements rollout for 2PLD, 3PND, iBART, and mBART, whose posterior updates are numerical.
 
\paragraph{The final-cohort decision.}
When only one cohort remains, every candidate dose becomes the terminal
estimate. Write $\rho_b(a)$ for the posterior predictive DLT probability,
computed from the likelihood mean $\mu_\vartheta$. The candidate's expected
loss then reduces to
\[
 Q(s,a)=m w_D\rho_b(a)+m w_Oo_b(a)+R(b,a).
\]
This follows by averaging the terminal posterior loss over the cohort's
possible responses: the law of iterated expectation gives
$\int R\{\mathcal U(b,a,\mathbf y),a\}P_b(d\mathbf y\mid a)=R(b,a)$.
The final observations can update the posterior but cannot change the
already assigned dose. Earlier in the trial, information can change
subsequent assignments, and its value depends on which doses remain
reachable. The Bellman recursion accounts for this difference without
adding a separate reward for reducing posterior uncertainty.

% END SOURCE: bart_rl_main.tex
% BEGIN SOURCE: bart_conference_experiments.tex
\section{Experiments}
\label{sec:conference-experiments}
\subsection{Design and performance criteria}
The study compares 1PLD, 2PLD, 3PND, iBART, and mBART across
seven dose--toxicity curves: linear, piecewise linear, quadratic, square
root, sigmoid, exponential, and logarithmic. Figure~\ref{fig:bart-truth-scenarios}
in Appendix~\ref{sec:bart-seven-curves} shows their shapes and shared MTD
calibration. The dose interval is $(5,80)$,
initial dose 6, toxicity threshold $\eta=3$, true residual SD $0.1$, and
true MTD $\xi_0=50$. Each trial plans $N=45$ patients in cohorts of three,
with $\gamma=0.9$ and $\alpha=0.05$. Both controls are enabled:
the upward cap $3.5$ and Only Escalation
(Section~\ref{sec:conference-ewoc-allocation}). They are required
components of the evaluated protocol and the software defaults. The posterior uses every
response in each completed cohort. The stopping rule is checked after
every cohort: enrollment ends when cumulative DLTs exceed $K_N=4$,
or when 45 patients have been observed.

The study uses 100 paired replications per curve; each replication evaluates
all five methods with shared presampled patient innovations, for 3,500 trials.
Posterior sampling uses separate streams. Priors are specified in
Appendices~\ref{sec:bart-priors} and~\ref{app:bayesian-comparators}; 1PLD
knows the residual SD, whereas the other models estimate it. Induced MTD
priors differ, so this is a comparison of complete procedures.

The reported MTD estimate is the last administered dose
$\widehat\xi_{n_{\mathrm{end}}}=X_{n_{\mathrm{end}}}$. The four performance criteria are
\begin{equation*}
 \mathrm{NPD}=\sum_{i=1}^{n_{\mathrm{end}}}\mathbf1\{Y_i\geq\eta\},\enspace
 \mathrm{NPO}=\sum_{i=1}^{n_{\mathrm{end}}}\mathbf1\{X_i>\xi_0\},\enspace
 \mathrm{BTM}=X_{n_{\mathrm{end}}}-\xi_0,\enspace
 \mathrm{RAE}=|\mathrm{BTM}|/\xi_0.
\end{equation*}
NPD and NPO assess safety performance. The precision criteria BTM and
RAE describe MTD estimation error: BTM retains its direction, whereas
RAE measures its relative magnitude without cancellation.
All trials remain in the summaries with their actual enrollment and
early stopping status.

\subsection{Seven-curve comparison}
Table~\ref{tab:primary-learning} reports replicate means, medians, and
Monte Carlo standard errors (MCSEs) for the four criteria. No trial ended
before $N=45$, so the early-stopping percentage was zero.
Appendices~\ref{sec:bart-experiment} and~\ref{app:bart-numerics} give
the simulation design, enrollment and stopping summaries, and posterior
computation.

% BEGIN SOURCE: bart_mechanism_primary_main_table.tex
\begingroup
\fontsize{8}{9}\selectfont
\setlength{\tabcolsep}{2.75pt}
\renewcommand{\arraystretch}{1.0}
% Fixed numeric fields align both separators down each metric column.
% Future tables can widen these measured formats without changing the cells.
\ifdefined\MainTableCountWidth\else\newlength{\MainTableCountWidth}\fi
\ifdefined\MainTableSignedWidth\else\newlength{\MainTableSignedWidth}\fi
\ifdefined\MainTableRelativeWidth\else\newlength{\MainTableRelativeWidth}\fi
\ifdefined\MainTableMCSEWidth\else\newlength{\MainTableMCSEWidth}\fi
\settowidth{\MainTableCountWidth}{0.00}
\settowidth{\MainTableSignedWidth}{-00.00}
\settowidth{\MainTableRelativeWidth}{00.00}
\settowidth{\MainTableMCSEWidth}{0.00}
\providecommand{\MainTableTriple}[5]{%
 \makebox[#1][r]{#3}\,\textbar\,%
 \makebox[#2][r]{#4}\,\textbar\,%
 \makebox[\MainTableMCSEWidth][r]{#5}}
\providecommand{\NPDStats}[3]{\MainTableTriple{\MainTableCountWidth}{\MainTableCountWidth}{#1}{#2}{#3}}
\providecommand{\NPOStats}[3]{\MainTableTriple{\MainTableCountWidth}{\MainTableCountWidth}{#1}{#2}{#3}}
\providecommand{\BTMStats}[3]{\MainTableTriple{\MainTableSignedWidth}{\MainTableSignedWidth}{#1}{#2}{#3}}
\providecommand{\RAEStats}[3]{\MainTableTriple{\MainTableRelativeWidth}{\MainTableRelativeWidth}{#1}{#2}{#3}}

\setlength{\LTpre}{6pt}
\setlength{\LTpost}{6pt}
\setlength{\LTcapwidth}{\textwidth}
\setlength{\LTleft}{0pt}
\setlength{\LTright}{0pt}
\begin{longtable}{@{}@{\extracolsep{\fill}}llrrrr@{}}
\caption{\small Entries are \emph{mean \textbar\ median \textbar\ MCSE}; RAE is a
percentage. Bold, underlined methods minimize mean RAE ($p$),
both mean NPD and NPO ($s$), or both categories ($s,p$) within each curve,
using unrounded means and exact ties. A zero MCSE may reflect rounding.}\label{tab:primary-learning}\\
\toprule
 & & \multicolumn{2}{c}{\textbf{Safety performance}} & \multicolumn{2}{c}{\textbf{Precision performance}} \\
\cmidrule(lr){3-4}\cmidrule(lr){5-6}
Curve & Method & \multicolumn{1}{c}{NPD} & \multicolumn{1}{c}{NPO} & \multicolumn{1}{c}{BTM} & \multicolumn{1}{c}{RAE (\%)}\\
\midrule
\endfirsthead
\multicolumn{6}{@{}l}{\small Table \thetable\ (continued)}\\[2pt]
\toprule
 & & \multicolumn{2}{c}{\textbf{Safety performance}} & \multicolumn{2}{c}{\textbf{Precision performance}} \\
\cmidrule(lr){3-4}\cmidrule(lr){5-6}
Curve & Method & \multicolumn{1}{c}{NPD} & \multicolumn{1}{c}{NPO} & \multicolumn{1}{c}{BTM} & \multicolumn{1}{c}{RAE (\%)}\\
\midrule
\endhead
\bottomrule
\endfoot
\bottomrule
\endlastfoot

Linear & \textbf{\underline{1PLD}}\textsuperscript{p} & \NPDStats{0.34}{0.00}{0.06} & \NPOStats{0.33}{0.00}{0.13} & \BTMStats{-0.57}{-0.58}{0.04} & \RAEStats{1.20}{1.16}{0.07} \\*
 & 2PLD & \NPDStats{0.26}{0.00}{0.05} & \NPOStats{0.24}{0.00}{0.11} & \BTMStats{-0.75}{-0.75}{0.05} & \RAEStats{1.53}{1.49}{0.09} \\*
 & 3PND & \NPDStats{0.20}{0.00}{0.04} & \NPOStats{0.27}{0.00}{0.11} & \BTMStats{-0.89}{-0.87}{0.06} & \RAEStats{1.83}{1.74}{0.11} \\*
 & \textbf{\underline{iBART}}\textsuperscript{s} & \NPDStats{0.03}{0.00}{0.02} & \NPOStats{0.00}{0.00}{0.00} & \BTMStats{-1.99}{-1.93}{0.06} & \RAEStats{3.98}{3.86}{0.12} \\*
 & mBART & \NPDStats{0.06}{0.00}{0.02} & \NPOStats{0.06}{0.00}{0.06} & \BTMStats{-1.61}{-1.93}{0.06} & \RAEStats{3.24}{3.86}{0.11} \\
\addlinespace[1.5pt]
Piecewise linear & 1PLD & \NPDStats{3.86}{4.00}{0.09} & \NPOStats{6.00}{6.00}{0.00} & \BTMStats{5.00}{5.00}{0.00} & \RAEStats{10.00}{10.00}{0.00} \\*
 & 2PLD & \NPDStats{3.86}{4.00}{0.09} & \NPOStats{6.00}{6.00}{0.00} & \BTMStats{5.00}{5.00}{0.00} & \RAEStats{10.00}{10.00}{0.00} \\*
 & 3PND & \NPDStats{3.86}{4.00}{0.09} & \NPOStats{6.00}{6.00}{0.00} & \BTMStats{4.98}{5.00}{0.01} & \RAEStats{9.97}{10.00}{0.01} \\*
 & \textbf{\underline{iBART}}\textsuperscript{s} & \NPDStats{2.02}{2.00}{0.12} & \NPOStats{6.00}{6.00}{0.00} & \BTMStats{1.51}{1.50}{0.00} & \RAEStats{3.02}{3.00}{0.01} \\*
 & \textbf{\underline{mBART}}\textsuperscript{s,p} & \NPDStats{2.02}{2.00}{0.12} & \NPOStats{6.00}{6.00}{0.00} & \BTMStats{1.50}{1.50}{0.00} & \RAEStats{3.00}{3.00}{0.00} \\
\addlinespace[1.5pt]
Quadratic & 1PLD & \NPDStats{5.21}{5.00}{0.07} & \NPOStats{6.00}{6.00}{0.00} & \BTMStats{5.00}{5.00}{0.00} & \RAEStats{10.00}{10.00}{0.00} \\*
 & 2PLD & \NPDStats{4.36}{4.00}{0.10} & \NPOStats{6.00}{6.00}{0.00} & \BTMStats{1.51}{1.50}{0.01} & \RAEStats{3.02}{3.00}{0.01} \\*
 & \textbf{\underline{3PND}}\textsuperscript{p} & \NPDStats{0.93}{1.00}{0.10} & \NPOStats{3.60}{6.00}{0.29} & \BTMStats{0.12}{0.09}{0.04} & \RAEStats{0.55}{0.42}{0.05} \\*
 & \textbf{\underline{iBART}}\textsuperscript{s} & \NPDStats{0.39}{0.00}{0.09} & \NPOStats{0.99}{0.00}{0.22} & \BTMStats{-0.46}{-0.45}{0.05} & \RAEStats{1.19}{0.89}{0.07} \\*
 & mBART & \NPDStats{1.05}{0.00}{0.15} & \NPOStats{2.73}{0.00}{0.30} & \BTMStats{0.04}{-0.45}{0.07} & \RAEStats{1.20}{0.89}{0.07} \\
\addlinespace[1.5pt]
Square root & \textbf{\underline{1PLD}}\textsuperscript{s} & \NPDStats{0.00}{0.00}{0.00} & \NPOStats{0.00}{0.00}{0.00} & \BTMStats{-10.02}{-10.03}{0.04} & \RAEStats{20.05}{20.07}{0.08} \\*
 & \textbf{\underline{2PLD}}\textsuperscript{s} & \NPDStats{0.00}{0.00}{0.00} & \NPOStats{0.00}{0.00}{0.00} & \BTMStats{-17.78}{-17.81}{0.06} & \RAEStats{35.56}{35.62}{0.13} \\*
 & \textbf{\underline{3PND}}\textsuperscript{p} & \NPDStats{0.11}{0.00}{0.03} & \NPOStats{0.00}{0.00}{0.00} & \BTMStats{-3.45}{-3.40}{0.12} & \RAEStats{6.90}{6.80}{0.24} \\*
 & iBART & \NPDStats{0.03}{0.00}{0.02} & \NPOStats{0.00}{0.00}{0.00} & \BTMStats{-4.75}{-4.90}{0.09} & \RAEStats{9.49}{9.80}{0.18} \\*
 & mBART & \NPDStats{0.06}{0.00}{0.02} & \NPOStats{0.00}{0.00}{0.00} & \BTMStats{-4.37}{-4.16}{0.09} & \RAEStats{8.73}{8.32}{0.19} \\
\addlinespace[1.5pt]
Sigmoid & 1PLD & \NPDStats{4.73}{5.00}{0.09} & \NPOStats{6.00}{6.00}{0.00} & \BTMStats{5.00}{5.00}{0.00} & \RAEStats{10.00}{10.00}{0.00} \\*
 & 2PLD & \NPDStats{3.52}{4.00}{0.12} & \NPOStats{6.00}{6.00}{0.00} & \BTMStats{1.50}{1.50}{0.00} & \RAEStats{2.99}{3.00}{0.01} \\*
 & \textbf{\underline{3PND}}\textsuperscript{s} & \NPDStats{0.12}{0.00}{0.04} & \NPOStats{0.00}{0.00}{0.00} & \BTMStats{-0.78}{-0.76}{0.03} & \RAEStats{1.55}{1.52}{0.07} \\*
 & iBART & \NPDStats{0.14}{0.00}{0.04} & \NPOStats{0.27}{0.00}{0.12} & \BTMStats{-0.84}{-1.19}{0.05} & \RAEStats{1.74}{2.38}{0.09} \\*
 & \textbf{\underline{mBART}}\textsuperscript{p} & \NPDStats{0.53}{0.00}{0.09} & \NPOStats{1.71}{0.00}{0.27} & \BTMStats{-0.40}{-0.45}{0.07} & \RAEStats{1.43}{0.89}{0.09} \\
\addlinespace[1.5pt]
Exponential & 1PLD & \NPDStats{5.67}{6.00}{0.06} & \NPOStats{6.00}{6.00}{0.00} & \BTMStats{5.00}{5.00}{0.00} & \RAEStats{10.00}{10.00}{0.00} \\*
 & 2PLD & \NPDStats{5.28}{5.00}{0.07} & \NPOStats{6.00}{6.00}{0.00} & \BTMStats{1.59}{1.50}{0.02} & \RAEStats{3.18}{3.00}{0.04} \\*
 & 3PND & \NPDStats{3.97}{4.00}{0.17} & \NPOStats{5.94}{6.00}{0.06} & \BTMStats{1.11}{1.21}{0.04} & \RAEStats{2.23}{2.43}{0.08} \\*
 & \textbf{\underline{iBART}}\textsuperscript{s,p} & \NPDStats{0.72}{0.00}{0.13} & \NPOStats{2.25}{0.00}{0.29} & \BTMStats{-0.18}{-0.45}{0.06} & \RAEStats{1.08}{0.89}{0.07} \\*
 & mBART & \NPDStats{2.24}{2.00}{0.21} & \NPOStats{4.47}{6.00}{0.26} & \BTMStats{0.47}{0.30}{0.08} & \RAEStats{1.49}{0.89}{0.10} \\
\addlinespace[1.5pt]
Logarithmic & \textbf{\underline{1PLD}}\textsuperscript{s} & \NPDStats{0.00}{0.00}{0.00} & \NPOStats{0.00}{0.00}{0.00} & \BTMStats{-8.24}{-8.22}{0.04} & \RAEStats{16.49}{16.45}{0.08} \\*
 & \textbf{\underline{2PLD}}\textsuperscript{s} & \NPDStats{0.00}{0.00}{0.00} & \NPOStats{0.00}{0.00}{0.00} & \BTMStats{-12.63}{-12.65}{0.05} & \RAEStats{25.26}{25.30}{0.10} \\*
 & 3PND & \NPDStats{0.03}{0.00}{0.02} & \NPOStats{0.00}{0.00}{0.00} & \BTMStats{-4.72}{-4.83}{0.09} & \RAEStats{9.44}{9.66}{0.18} \\*
 & iBART & \NPDStats{0.02}{0.00}{0.01} & \NPOStats{0.00}{0.00}{0.00} & \BTMStats{-4.46}{-4.16}{0.08} & \RAEStats{8.93}{8.32}{0.16} \\*
 & \textbf{\underline{mBART}}\textsuperscript{p} & \NPDStats{0.04}{0.00}{0.02} & \NPOStats{0.00}{0.00}{0.00} & \BTMStats{-3.95}{-4.16}{0.09} & \RAEStats{7.90}{8.32}{0.19} \\
\end{longtable}
\endgroup
% END SOURCE: bart_mechanism_primary_main_table.tex

For safety, Table~\ref{tab:primary-learning} shows that iBART jointly
minimized the observed mean NPD and NPO on four of seven curves, including
a tie with mBART for the piecewise linear curve. On the square-root and
logarithmic curves, 1PLD and 2PLD recorded zero NPD and NPO but
substantially underestimated the MTD; iBART recorded zero mean NPO and
mean NPD of only 0.03 and 0.02, respectively, with lower RAE than those
zero-count procedures. The sigmoid curve instead favored 3PND on safety.

For precision, the lowest observed mean RAE was attained by 1PLD on the
linear curve, 3PND on the quadratic and square-root curves, iBART on the
exponential curve, and mBART on the piecewise linear, sigmoid, and
logarithmic curves. mBART on the piecewise linear curve and iBART on the
exponential curve attained both safety and precision minima. No method attained both the safety and precision minima across all seven curves.
% END SOURCE: bart_conference_experiments.tex
% BEGIN SOURCE: bart_conference_discussion.tex
\section{Discussion}
Flexible nonparametric and machine learning methods often need more data than correctly specified parametric models, a particular obstacle in phase I cancer trials with limited enrollment. Our design uses the available structure: dose is the sole predictor, both Bayesian tree models enforce monotonicity, and controlled EWOC limits escalation. Across seven dose--toxicity curves with 45 patients, these models balance safety and MTD precision without prespecifying curve shape, showing that flexible MTD estimation remains practical in this small-sample setting. We also applied all five methods to the $O^6$-benzylguanine example to estimate an AGT-based dose target. Their final doses of 80--100\,mg/m$^2$ are broadly consistent with the 100\,mg/m$^2$ biochemical modulatory dose reported by \citet{friedman1998}; see Appendix~\ref{app:bart-case}.

Future work will compare deep reinforcement learning \citep{matsuura2022dose,matsuura2023phasei}, Super Learner ensembles \citep{vanderlaan2007superlearner}, and Gaussian processes \citep{gotovos2013lse,losalka2023monotone,chien2024safet} under state-dependent dose controls, and quantify the effects of finite rollout budgets and approximate posterior updates.

% END SOURCE: bart_conference_discussion.tex
\FloatBarrier
\label{page:main-end}
\clearpage

\label{page:references-start}
\bibliography{bart_references}
\bibliographystyle{arxiv_preprint}
\clearpage
\appendix
% BEGIN SOURCE: bart_supplement.tex
% BEGIN SOURCE: bart_supplement_navigation.tex
\pdfbookmark[0]{Supplementary material}{bart.supplement}
\section*{Guide to the supplementary material}
\label{app:bart-supplement}

\begin{center}
\small
\setlength{\tabcolsep}{3pt}
\renewcommand{\arraystretch}{1.0}
\begin{tabular*}{\textwidth}{@{}@{\extracolsep{\fill}}p{0.10\textwidth}p{0.78\textwidth}r@{}}
\toprule
Section & Contents & Page\\
\midrule
\multicolumn{3}{@{}l}{\textbf{A. Related methods}}\\
\ref{app:related-methods} & Clinical outcomes and decision targets & \pageref{app:related-methods}\\
\addlinespace
\multicolumn{3}{@{}l}{\textbf{B--E. Models and allocation}}\\
\ref{sec:bart-target} & MTD definition, posterior inversion, and EWOC allocation & \pageref{sec:bart-target}\\
\ref{sec:bart-priors} & Tree priors, calibration, and isotonic projection & \pageref{sec:bart-priors}\\
\ref{app:bayesian-comparators} & 1PLD, 2PLD, and 3PND likelihoods and priors & \pageref{app:bayesian-comparators}\\
\ref{app:method-dags} & Mean dose--toxicity curves and graphical models & \pageref{app:method-dags}\\
\addlinespace
\multicolumn{3}{@{}l}{\textbf{F--G. Posterior theory and computation}}\\
\ref{sec:doob-consistency} & MTD posterior consistency & \pageref{sec:doob-consistency}\\
\ref{app:bart-numerics} & Posterior computation & \pageref{app:bart-numerics}\\
\addlinespace
\multicolumn{3}{@{}l}{\textbf{H--I. Reinforcement learning and rollout}}\\
\ref{app:rl-formulation} & Belief-state planning, reachability, and rollout proofs & \pageref{app:rl-formulation}\\
\ref{sec:rl-seven-curve} & Linear 1PLD rollout illustration and budget sensitivity & \pageref{sec:rl-seven-curve}\\
\addlinespace
\multicolumn{3}{@{}l}{\textbf{J. Numerical study}}\\
\ref{sec:bart-experiment} & Dose--toxicity scenarios, enrollment, and performance metrics & \pageref{sec:bart-experiment}\\
\addlinespace
\multicolumn{3}{@{}l}{\textbf{K. Pharmacodynamic case study}}\\
\ref{app:bart-case} & Sequential pharmacodynamic example & \pageref{app:bart-case}\\
\addlinespace
\multicolumn{3}{@{}l}{\textbf{L. Software manual}}\\
\ref{app:software-manual} & Installation, settings, trial playback, visualization, and exports & \pageref{app:software-manual}\\
\bottomrule
\end{tabular*}
\end{center}

%\paragraph{Allocation policies and algorithms.}
% END SOURCE: bart_supplement_navigation.tex
% BEGIN SOURCE: bart_related_methods.tex
\section{Related methods}
\label{app:related-methods}

The literature on dose finding in phase I cancer trials is extensive and
diverse. We review the work most closely related to our methods.

\subsection{Binary toxicity responses}

\paragraph{Parametric models.}
Several widely used designs learn from the same binary DLT indicator but
translate it into a dose decision differently. The continual reassessment
method (CRM) updates a one-parameter monotone working model; common
implementations encode prior DLT probabilities across dose levels in a
skeleton and choose the dose whose posterior toxicity estimate is closest
to the target \citep{oquigley1990,oquigley2010crm}. Classical EWOC instead
chooses a lower posterior quantile of the MTD, with a feasibility bound
that directly limits the posterior probability of assigning a dose above
the MTD \citep{babb1998,zacks1998,tighiouart2010}. The Bayesian logistic
regression model (BLRM) uses a two-parameter logistic curve in log dose,
a posterior target interval, and an overdose-probability screen
\citep{neuenschwander2008}. The Bayesian optimal interval (BOIN) design
compares the observed DLT rate at the current dose with prespecified
escalation and de-escalation boundaries \citep{liu2015boin}. Thus, CRM,
EWOC, BLRM, and BOIN all use binary DLT feedback, but their dose-selection
rules are distinct.

\paragraph{Nonparametric models.}
Gaussian-process (GP) models allow more flexible dose--toxicity
relationships. SAFE-T uses multiple-output GPs with a probit link
to model binary toxicity and efficacy in heterogeneous participants
\citep{chien2024safet}. The latent function is continuous-valued,
but the observed toxicity response is binary.
Thus, flexibility in the probability curve does not change the
type of toxicity information entering the likelihood.

\subsection{Continuously measured toxicity outcomes}

\paragraph{Parametric models.}
Earlier designs retained toxicity severity in forms other than a binary
DLT indicator. \citet{ivanova2009} give a unified local allocation rule
for binary, ordinal, or continuous monotone objectives. EWOC-NETS places
a normalized equivalent toxicity score in a quasi-Bernoulli likelihood
and applies EWOC to the resulting quasi-continuous outcome
\citep{chen2012ewocnets}. These approaches establish that graded toxicity
information can guide phase I allocation, while using a local monotone
rule or a parametric score model rather than a flexible tree-based mean.

Methods using the measured toxicity outcome retain more information
than its binary DLT indicator. The one-parameter linear dose
finder (1PLD) estimates a slope with residual variance treated as
known, while the two-parameter linear dose finder (2PLD) estimates
both slope and residual variance
\citep{eichhorn1973,lee2022}. The three-parameter nonlinear dose
finder (3PND) additionally allows curvature through a power exponent
\citep{lee2023}. Although 3PND accommodates nonlinear mean curves,
it remains parametric because the curve belongs to a specified
finite-dimensional family. In these models, a response threshold
defines DLT events, and the response distribution determines their
dose-dependent probability.

\paragraph{Nonparametric models.}
Flexible modeling is central when toxicity is continuously measured
because the MTD depends on a tail probability of the response
distribution; misspecifying the mean curve can therefore shift the
clinical target. BART represents the mean as a sum of regression trees
and can recover nonlinear shapes without committing to a linear or
power-law curve \citep{chipman2010}. We develop two monotone Bayesian
tree procedures for this setting: iBART projects each posterior draw
onto nondecreasing functions, whereas mBART imposes monotonicity within
the tree ensemble \citep{chipman2022}. Both retain flexible nonlinear
learning while enforcing the clinical requirement that toxicity not
decrease with dose.

The core contribution is to turn these monotone tree models into a
complete phase I decision framework. Each joint posterior draw of the
mean curve and residual standard deviation induces an MTD draw under
the specified DLT threshold and acceptable toxicity probability. The
resulting MTD posterior directly drives EWOC and EWOC rollout, so
uncertainty in curve shape and outcome variability propagates into each
dose decision. This framework uses the full continuously measured
toxicity outcome, accommodates nonlinear relationships beyond the 1PLD,
2PLD, and 3PND forms, and supports MTD-posterior consistency, a
reachability bound, and exact expected-loss rollout improvement under
the stated conditions.

Table~\ref{tab:novelty-positioning} shows how representative phase I
dose-finding methods differ in toxicity outcome type, dose--toxicity
model, dose-selection approach, and theoretical scope. The final row
highlights the combination developed here: continuously measured
toxicity, monotone Bayesian trees, EWOC and EWOC rollout, and formal
guarantees for MTD learning and policy improvement under the stated
conditions.

\begin{table}[H]
\centering
\caption{\textbf{Representative phase I dose-finding methods and their relation to the present study.}}
\label{tab:novelty-positioning}
\scriptsize
\setlength{\tabcolsep}{2.2pt}
\renewcommand{\arraystretch}{1.05}
\begin{tabular*}{\textwidth}{@{\extracolsep{\fill}}
>{\raggedright\arraybackslash}p{.22\textwidth}
>{\raggedright\arraybackslash}p{.12\textwidth}
>{\raggedright\arraybackslash}p{.22\textwidth}
>{\raggedright\arraybackslash}p{.15\textwidth}
>{\raggedright\arraybackslash}p{.21\textwidth}@{}}
\toprule
Selected method & Toxicity outcome type & Dose--toxicity model & Dose-selection approach & Main scope or result \\
\midrule
CRM \citep{oquigley1990}
& Binary
& One-parameter monotone working model; calibrated skeleton
& Skeleton
& Foundational continual reassessment \\
\addlinespace[1pt]
Classical EWOC \citep{babb1998}
& Binary
& Parametric DLT-risk and MTD model
& EWOC
& Direct posterior overdose control \\
\addlinespace[1pt]
BLRM \citep{neuenschwander2008}
& Binary
& Two-parameter logistic model in log standardized dose
& Target interval
& Prior-calibrated, interval-based decisions \\
\addlinespace[1pt]
BOIN \citep{liu2015boin}
& Binary
& Model-assisted interval construction
& Interval rule
& Simple decision rule with finite- and large-sample properties \\
\addlinespace[1pt]
Graded-outcome designs \citep{ivanova2009,chen2012ewocnets}
& Ordinal / continuous
& Monotone objective or quasi-Bernoulli NETS model
& Up/down or EWOC
& Retains severity beyond one DLT indicator \\
\addlinespace[1pt]
Bartroff--Lai rollout \citep{bartroff2010dp}
& Binary
& Two-parameter logistic model
& EWOC rollout
& Combines cumulative patient loss and terminal estimation loss \\
\addlinespace[1pt]
1PLD, 2PLD, and 3PND \citep{eichhorn1973,lee2022,lee2023}
& Continuous
& Linear (1PLD/2PLD) or nonlinear power (3PND)
& EWOC
& Parametric mean; residual variation learned in 2PLD and 3PND \\
\midrule
\textbf{This paper}
& \textbf{Continuous}
& \textbf{Monotone Bayesian Trees (iBART/mBART) with 1PLD, 2PLD, and 3PND as comparators}
& \textbf{EWOC and EWOC rollout}
& \textbf{MTD-posterior consistency, a reachability bound, and exact expected-loss rollout improvement under stated conditions} \\
\bottomrule
\end{tabular*}
\end{table}

\subsection{Clinical objectives and sequential planning}

Beyond the response type and curve model, dose-finding designs
differ in their clinical objectives. Related BART-based approaches
address broader phase I/II dose optimization. DOD-BART combines
prognostic factors and accruing outcomes in seamless phase I/II
allocation \citep{zhao2024dodbart}, while DOD-PRO-BART incorporates
patient-reported outcomes alongside clinician-reported toxicity
and efficacy \citep{chung2025dodprobart}. Our study focuses on
phase I dose finding with a single continuously measured toxicity outcome,
where the primary objective is to estimate the MTD and guide dose
allocation under a tightly limited patient budget.

Bayesian planning represents accumulated information through a
posterior belief state, allowing decisions to account for immediate
outcomes and subsequent learning. \citet{ross2011bayesian} develop
this approach for partially observed systems with unknown models,
while \citet{guez2012bayes} study sample-based Bayes-adaptive search.
In phase I design, \citet{bartroff2010dp} formulate Bayesian dynamic
programming with cumulative patient losses and terminal estimation
loss, and study rollout with EWOC as the base policy. Deep
reinforcement learning has also been used for adaptive allocation
in dose--response studies \citep{matsuura2022dose} and dose
escalation in phase I oncology trials \citep{matsuura2023phasei}.

Safe optimization and level-set learning provide related approaches
to sequential exploration using GP confidence bounds
\citep{sui2015safeopt,gotovos2013lse}. Monotone variants address
safety, regret, and safe-boundary learning
\citep{losalka2023monotone,losalka2024noregret}, while robust
level-set methods consider environmental uncertainty
\citep{inatsu2021robustlse}. Their function-value feedback and
function-level safety objectives differ from the patient-level
toxicity probability induced by a continuous response distribution.

Our planning state contains the joint posterior, cumulative DLT count,
enrollment, and last administered dose; these variables determine the next
transition and permitted actions. The
upward cap and no-decrease restriction yield the terminal-error
bound in Proposition~\ref{prop:rl-reachability}, and the final loss
evaluates the last administered dose. The rollout result specializes
exact expected-loss policy improvement to these dose controls,
cohort stopping rules, and terminal estimation convention.

\subsection{Clinical risk and posterior feasibility}
\label{app:clinical-goal}

The clinical target is
$p_0(x)=\Pr_0(Y\geq\eta\mid X=x)\leq1-\gamma$.
For a continuously measured outcome, this probability depends on both the
mean response and residual variability. The EWOC feasibility
criterion concerns posterior uncertainty about the modeled risk,
with nominal bound $\alpha$. The clinical risk limit $1-\gamma$
and the posterior feasibility bound $\alpha$ therefore have
different roles. For iBART, the modeled risk is a projected tail
functional. A no-decrease restriction can require a dose exceeding
the nominal feasibility bound, which is recorded as an allocation
exception. The observed DLT-count stopping threshold is separate
from both probability bounds.
Figure~\ref{fig:supp-clinical-goal} illustrates the measured-outcome
target and the effect of residual variability on its dose location.

\begin{figure}[!htbp]
	\centering
	\includegraphics[width=\textwidth]{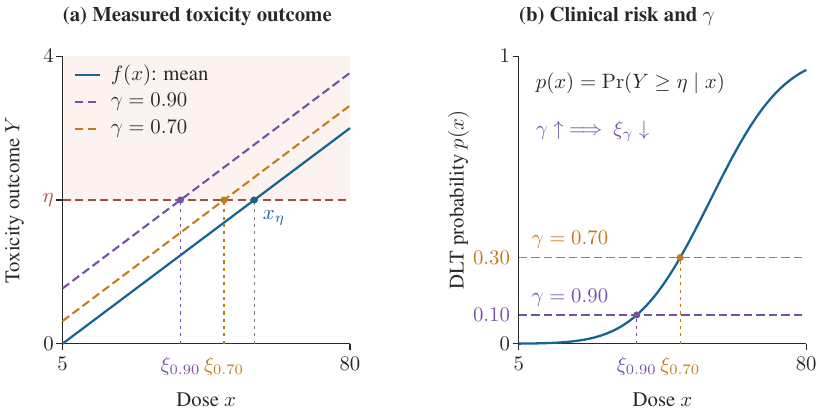}
	\caption{\textbf{A continuously measured toxicity outcome, the clinical target, and the MTD.}
		Gaussian schematic with $f(x)=0.04(x-5)$, $\sigma=0.6$, and $\eta=2$.
		The subscript in $\xi_\gamma$ identifies the required non-DLT probability.
		(a) The response quantile $f(x)+\sigma\Phi^{-1}(\gamma)$ reaches $\eta$ at the MTD;
		the mean reaches it at $x_\eta$, where DLT probability is $1/2$.
		(b) Increasing $\gamma$ lowers the acceptable risk $1-\gamma$ and the MTD;
		the underlying risk curve is unchanged.
		Dashed response curves describe outcome variability, not posterior uncertainty.
		The schematic follows the interpretation in \citet[Figure~3]{lee2023}.}
	\label{fig:supp-clinical-goal}
\end{figure}

% END SOURCE: bart_clinical_goal_figure.tex
% END SOURCE: bart_related_methods.tex
% BEGIN SOURCE: bart_target.tex
\section{The MTD posterior and allocation rule}
\label{sec:bart-target}
\subsection{Clinical specification}
Let $x$ denote dose in its original units and $Y(x)$ a continuously measured toxicity outcome. Before enrollment, clinicians specify the interval
$\mathcal X=(x_{\min},x_{\max})\subset(0,\infty)$, the maximum toxicity level $\eta>0$, and the homogeneity constant $\gamma\in(1/2,1)$ \citep{lee2022,lee2023}. Set $\theta=1-\gamma$. The criterion $\Pr\{Y(x)<\eta\}\geq\gamma$ defines an acceptable dose. These quantities determine the target; the feasibility bound $\alpha$ determines how doses are selected under posterior uncertainty.

\subsection{Likelihood}
\label{sec:bart-likelihood}
Write $\mathcal F_n$ for the history of the first $n$ dose--response pairs $\{(X_i,Y_i)\}_{i=1}^n$, where $i$ indexes patients and $N$ is planned enrollment. For a cohort, doses are selected using the history at the preceding completed-cohort boundary and remain fixed until all cohort responses are observed. For iBART, let $g$ denote the unconstrained mean. The sequential working model is
\begin{equation}
 Y_i\mid\mathcal F_{i-1},X_i,g,\sigma\sim \mathcal N\{g(X_i),\sigma^2\},\qquad \sigma>0.
 \label{eq:bart-observation}
\end{equation}
The allocation kernel is the same function of past observations under every candidate parameter. Prespecified dose restrictions are also functions of that history and do not change this property. The allocation factors therefore cancel from the posterior:
\begin{equation}
 d\Pi(g,\sigma\mid\mathcal F_n)
 \propto\prod_{i=1}^{n}\sigma^{-1}
 \phi\{(Y_i-g(X_i))/\sigma\}\,d\Pi(g,\sigma).
 \label{eq:bart-posterior}
\end{equation}
For mBART, replace $g$ by its monotone mean $f$ in \eqref{eq:bart-observation} and \eqref{eq:bart-posterior}. In iBART, the unconstrained draw $g$ yields $f=\mathcal P g$ by projection onto the nondecreasing cone in $L_2([x_{\min},x_{\max}])$. The pool-adjacent-violators algorithm (PAVA) uses the lengths of the constant dose intervals as weights. In mBART, each tree is nondecreasing and no projection is needed. In both cases the mean draw retains its corresponding residual standard deviation (SD).

\subsection{MTD posterior}
For the projected curve $f=\mathcal P g$ in iBART or the directly sampled monotone curve $f$ in mBART, let
\begin{equation}
 \begin{split}
 p_{f,\sigma}(x)&=\Phi\{(f(x)-\eta)/\sigma\},\\
 \xi(f,\sigma)&=\sup\bigl(\{x_{\min}\}\cup
             \{x\in(x_{\min},x_{\max}):p_{f,\sigma}(x)\leq1-\gamma\}\bigr).
 \end{split}
 \label{eq:bart-root}
\end{equation}
The lower endpoint $\xi=x_{\min}$ represents the absence of an acceptable admissible dose; $\xi=x_{\max}$ represents an entirely acceptable interval. The posterior of $\xi$ is supported on the ordinary dose interval $[x_{\min},x_{\max}]$. Neither endpoint is administered.

Applying \eqref{eq:bart-root} to every paired mean and variance draw propagates their joint uncertainty to the MTD. Inverting a posterior mean curve generally produces a different quantity.

\subsection{The feasibility bound \texorpdfstring{$\alpha$}{alpha}}
Write $\Pi_n^\xi(x)=\Pr(\xi\leq x\mid\mathcal F_n)$. The common EWOC proposal is the posterior $\alpha$-quantile:
\begin{equation*}
 D_\alpha(\mathcal F_n)
   =(\Pi_n^\xi)^{-1}(\alpha)
   =\inf\{x\in[x_{\min},x_{\max}]:\Pi_n^\xi(x)\geq\alpha\}.
\end{equation*}
When the posterior is continuous, $\Pi_n^\xi\{D_\alpha(\mathcal F_n)\}=\alpha$. With atoms, the generalized quantile satisfies
$\Pi_n^\xi\{D_\alpha(\mathcal F_n)-\}\leq\alpha
 \leq\Pi_n^\xi\{D_\alpha(\mathcal F_n)\}$.
A smaller $\alpha$ selects a lower posterior quantile. It changes the allocation rule while leaving the MTD definition and the response likelihood unchanged.

The feasibility condition for the next dose is
\begin{equation}
 \Pr\{p_{f,\sigma}(x_{n+1})>1-\gamma\mid\mathcal F_n\}\leq\alpha.
 \label{eq:bart-feasible}
\end{equation}
Right-continuous tree curves can jump at the MTD. We move inward from
the quantile $q_n=D_\alpha(\mathcal F_n)$, using a prespecified numerical
margin $\delta_{\mathrm{num}}>0$:
\[
 \widetilde x_{n+1}
 =\max\{x_{\min}+\delta_{\mathrm{num}},
             \min(x_{\max},q_n)-\delta_{\mathrm{num}}\}.
\]
The numerical study uses $\delta_{\mathrm{num}}=7.5\times10^{-7}$
dose units. Apply an optional prespecified upward cap first,
then the optional \emph{Only Escalation} restriction:
\begin{equation}
 \begin{split}
 x^{\mathrm{cap}}_{n+1}
   &=\min\{\widetilde x_{n+1},X_n+\Delta_{\max}\},
       \qquad 0\leq\Delta_{\max}\leq\infty,\\
 X_{n+1}&=
 \begin{cases}
  \max\{X_n,x^{\mathrm{cap}}_{n+1}\},&\text{Only Escalation enabled},\\
  x^{\mathrm{cap}}_{n+1},&\text{otherwise}.
 \end{cases}
 \end{split}
 \label{eq:bart-restricted-dose}
\end{equation}
The cap limits increases only; Only Escalation permits a hold but forbids
a decrease. These options can be used separately or together. The
primary five-model comparison uses $\Delta_{\max}=3.5$ with Only
Escalation enabled.

We evaluate \eqref{eq:bart-feasible} at the actual assigned dose.
An inward proposal strictly below $q_n$ satisfies the nominal bound;
the upward cap alone can only lower this proposal and preserves that
property. A binding lower floor can violate the bound. Only Escalation
can also violate it by holding the current dose above a newly lowered
proposal. We record the assigned-dose posterior exceedance probability
and flag these exceptions to nominal $\alpha$; they do not trigger stopping.
If the lower floor binds, it is assigned unless Only Escalation requires
holding the current dose. In particular, posterior mass above $\alpha$
at $\xi=x_{\min}$ implies that no interior dose satisfies the bound,
but does not terminate this protocol.
For iBART the feasibility calculation concerns the projected tail
functional, not the unprojected model's predictive risk.

The first dose $x_1$ is prespecified in $(x_{\min},x_{\max})$, and
$0<\delta_{\mathrm{num}}<
\min\{x_1-x_{\min},x_{\max}-x_1\}$.
The numerical study uses $x_{\min}=5$ and $x_1=6$.
Initialization is included in the patient
budget and remains an exception to the posterior allocation rule.
Optional dose restrictions can accompany the quantile proposal
\citep[Section~4.2]{lee2022}.

\subsection{Cumulative DLT stopping}
Set $D_n=\sum_{i=1}^n\mathbf1\{Y_i\geq\eta\}$ and
$K_N=\lfloor0.1N\rfloor$ before enrollment. After observing every response
in a completed cohort and updating the posterior, stop if $D_n>K_N$
or $n=N$. Thus $N=45$ gives $K_N=4$, and the fifth DLT triggers the
count rule. A cohort can cross the threshold by more than one DLT.
Neither the posterior endpoint mass nor the numerical distance of the
EWOC proposal from $x_{\min}$ is a stopping criterion. The terminal
posterior includes all responses from the final cohort.
Figure~\ref{fig:sequential-workflow} shows the history updates used for
subsequent decisions.

Algorithm~\ref{alg:bart-dose-finding} gives the complete cohort sequence
for the two tree procedures, including the final posterior update.

% BEGIN SOURCE: bart_algorithm.tex
\begin{algorithm}[tb]
\caption{Tree-model implementation of EWOC: iBART and mBART}
\label{alg:bart-dose-finding}
\begin{algorithmic}[1]
\Require Dose interval $(x_{\min},x_{\max})$, initial dose $x_1$, prior, cuts, thresholds $\eta,\theta$, allocation parameter $\alpha$, cohort size $c$, budget $N$, optional upward cap $\Delta_{\max}$ and Only Escalation setting, numerical margin $\delta_{\mathrm{num}}$, and posterior sample size $S$.
\State Set $\mathcal H=\varnothing$, $n=0$, $D_0=0$, $K_N=\lfloor0.1N\rfloor$, and $x=x_1$.
\While{$n<N$}
  \State Assign $c'=\min(c,N-n)$ patients to $x$ and observe all their responses.
  \State Append the observations to $\mathcal H$; set $n\gets n+c'$ and $D_n=\sum_{i=1}^n\mathbf1\{Y_i\geq\eta\}$.
  \If{using iBART}
    \State Draw $(g^{(s)},\sigma^{(s)})_{s=1}^S$; set $f^{(s)}=\mathcal P g^{(s)}$ for each $s$.
  \Else
    \State Draw $(f^{(s)},\sigma^{(s)})_{s=1}^S$ directly from the mBART posterior.
  \EndIf
  \For{$s=1,\ldots,S$}
    \State Compute $\xi^{(s)}\in[x_{\min},x_{\max}]$ by \eqref{eq:bart-root}.
  \EndFor
  \State Record posterior summaries and last-assigned-dose metrics.
  \If{$n=N$ or $D_n>K_N$} \State \textbf{break} \EndIf
  \State Let $q_\alpha$ be the lower empirical $\alpha$-quantile of the MTD draws.
  \State Set $\widetilde x=\max\{x_{\min}+\delta_{\mathrm{num}},\min(x_{\max},q_\alpha)-\delta_{\mathrm{num}}\}$.
  \State Set $x'=\min\{\widetilde x,x+\Delta_{\max}\}$.
  \State If Only Escalation is enabled, set $x'\gets\max\{x,x'\}$.
  \State Report $\widehat u=S^{-1}\sum_s\mathbf1\{p_{f^{(s)},\sigma^{(s)}}(x')>\theta\}$ and flag any floor or held-dose exception $\widehat u>\alpha$.
  \State Set $x\gets x'$; the nominal $\alpha$ check does not determine stopping.
\EndWhile
\State Set $n_{\mathrm{end}}=n$ and retain the final posterior.
\end{algorithmic}
\end{algorithm}
The primary five-model comparison uses $N=45$, cohorts of three, $x_1=6$,
$\delta_{\mathrm{num}}=7.5\times10^{-7}$, and $\Delta_{\max}=3.5$,
with Only Escalation enabled. All assignments remain in the open interval.
The first cohort is prespecified; subsequent exceedance probabilities
are estimated from posterior draws. The terminal posterior includes the
final cohort even when its DLT count triggers stopping.
% END SOURCE: bart_algorithm.tex

% BEGIN SOURCE: bart_sequential_workflow.tex
\begin{figure}[!htbp]
 \centering
 \includegraphics[width=\textwidth]{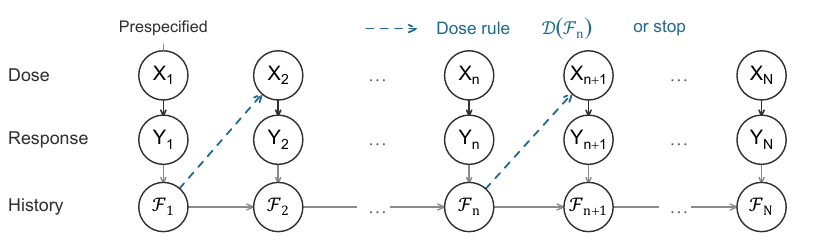}
 \caption{\textbf{Sequential learning during a dose-finding trial.}
 The initial dose is prespecified. After $n$ patients, the cumulative history
 $\mathcal F_n$ contains the first $n$ dose--response pairs and determines the
 updated posterior belief. The selected rule $\mathcal D(\mathcal F_n)$
 chooses the next dose $X_{n+1}$ or stops enrollment;
 it may use 1PLD, 2PLD, 3PND, iBART, or mBART with feasibility bound
 $\alpha$. Black arrows denote observation, gray arrows information
 updates, and dashed arrows dose decisions. Ellipses omit intermediate
 updates; the displayed path ends at the budget $N$ and may stop earlier.
 With cohorts, doses are assigned before that cohort's responses are
 observed, and the rule updates after the cohort is complete.}
 \label{fig:sequential-workflow}
\end{figure}
% END SOURCE: bart_sequential_workflow.tex
% END SOURCE: bart_target.tex
% BEGIN SOURCE: bart_model.tex
\section{Tree priors and isotonic projection}
\label{sec:bart-priors}
\subsection{Prior specification}
Both procedures use $m=200$ trees and shrinkage $k=2$.
At depth $d$, a node splits with probability $0.95(1+d)^{-2}$ in
BART and $0.25(1+d)^{-0.8}$ in mBART
\citep{chipman2010,chipman2022}. The response reference range $[0,4]$
gives mean offset $b_0=2$ and Gaussian base leaf standard deviations
$\tau_{\mathrm{iBART}}=1/\sqrt{200}$ and
$\tau_{\mathrm{mBART}}=\sqrt{1.467}/\sqrt{200}$.
This range calibrates the prior; Gaussian responses are not restricted
to it. Our implementation applies the inflated mBART base scale to all
leaves, including single-leaf trees. Section~3.3 of \citet{chipman2022}
specifies inflation only for means subject to monotonicity constraints.
Independently of the mean, the residual variance has prior
$\sigma^2\sim\nu_\sigma\lambda/\chi^2_{\nu_\sigma}$, with
$\nu_\sigma=3$ and $\lambda=0.20^2F^{-1}_{\chi^2_3}(0.10)/3$,
so $\Pr(\sigma<0.20)=0.90$.

The offset, leaf scales, variance prior, and cut set are fixed before
enrollment. Each tree node may split only at one of the 100 candidate dose thresholds
$x_{\min}+j(x_{\max}-x_{\min})/101$, $j=1,\ldots,100$;
in the simulation these are $5+75j/101$. Both samplers use doses and
cuts in their original units. A split is eligible whenever its
prespecified cut lies inside the node, including when a resulting
child has no observations.

\subsection{Isotonic projection of BART draws}
iBART fits the unconstrained Gaussian BART likelihood. For each
posterior draw, let $g_j$ be its value on dose cell $B_j$. Its
Lebesgue $L_2([x_{\min},x_{\max}])$ projection is obtained by
weighted isotonic regression:
\begin{equation}
 (z_1,\ldots,z_J)
 =\arg\min_{z_1\leq\cdots\leq z_J}
             \sum_{j=1}^{J}|B_j|(g_j-z_j)^2,\qquad
 (\mathcal P g)|_{B_j}=z_j.
 \label{eq:bart-pava}
\end{equation}
PAVA solves this problem with physical cell lengths as weights \citep{deleeuw2009}.
The projected mean retains its paired residual-SD draw when computing
risk and the MTD. The projection is a posterior functional; it does
not replace $g$ in the fitted likelihood.

\subsection{Monotone tree priors}
For reference, the two continuous normalizations are
\begin{equation}
 \underbrace{\rho(\mathcal T)K!\prod_{j=1}^{K}\phi_\tau(\mu_j)\mathbf1_{\rm ordered}}_{\text{conditional ordered-leaf prior}}
 \quad\text{and}\quad
 \underbrace{\widetilde\pi(\mathcal T,\mu)\propto\rho(\mathcal T)\prod_{j=1}^{K}\phi_\tau(\mu_j)\mathbf1_{\rm ordered}}_{\text{constrained-product prior}}.
 \label{eq:conference-priors}
\end{equation}

mBART orders the leaves of each univariate tree, making their sum
nondecreasing. For a tree with probability $\rho(\mathcal T)$ and
$K$ independent and identically distributed (iid) continuous Gaussian base heights, conditioning on their
order multiplies the product density by $K!$ and retains tree
marginal $\rho(\mathcal T)$. The constrained-product formulation is
instead
\begin{equation}
 \widetilde\pi(\mathcal T,\mu)\propto
 \rho(\mathcal T)\prod_{j=1}^{K}\phi_\tau(\mu_j)
       \mathbf1\{\mu_1\leq\cdots\leq\mu_K\}.
 \label{eq:bart-main-working-prior}
\end{equation}
Global normalization gives tree marginal proportional to
$\rho(\mathcal T)/K!$. The computation uses this formulation,
approximating leaf integration with 50 points in $[-3\tau,3\tau]$.
Discrete ties change the ordering probability. The implemented grid is
a numerical approximation; its posterior sampling is described in
Appendix~\ref{app:bart-numerics}.
% END SOURCE: bart_model.tex
% BEGIN SOURCE: bayesian_comparator_implementation.tex
\section{Parametric dose finders}
\label{app:bayesian-comparators}
All three parametric procedures observe the measured outcome and use the same dose interval and feasibility bound as the tree procedures. Write $q_\gamma=\Phi^{-1}(\gamma)$ and $L_x=x_{\max}-x_{\min}$.

% BEGIN SOURCE: bart_onepld_model.tex
\paragraph{1PLD.}
The Bayesian search using a continuously measured toxicity outcome from \citet{eichhorn1973}, reproduced as the one-parameter linear dose finder in \citet[Eq.~(2)]{lee2022}, uses
\[
 Y_i\mid X_i,\beta\sim \mathcal N\{\beta(X_i-x_{\min}),s^2\},\qquad
 \beta\sim \mathcal N(a,b^2),
\]
with known $s$. The Gaussian slope prior is untruncated. For $(x_{\min},x_{\max})=(5,80)$, we set $a=0.04$, $b=0.02$, and $s=0.1$ before simulation. These are study calibrations.

Let $A=\eta-s\Phi^{-1}(\gamma)>0$ and $L_x=x_{\max}-x_{\min}$. The MTD is
\begin{equation}
 \xi(\beta)=
 \begin{cases}
 x_{\max},&\beta\leq A/L_x,\\
 x_{\min}+A/\beta,&\beta>A/L_x.
 \end{cases}
 \label{eq:onepld-root}
\end{equation}
Negative slopes contribute to the mass at $x_{\max}$. The risk near $x_{\min}$ is below the target for every slope, so this calibration has no posterior mass at the lower endpoint.

\paragraph{Posterior calculation.}
Write $z_i=X_i-x_{\min}$,
$P_n=b^{-2}+s^{-2}\sum_i z_i^2$, and
$S_n=a/b^2+s^{-2}\sum_i z_iY_i$. Completing the square gives
\begin{equation}
 \beta\mid\mathcal F_n\sim \mathcal N(\mu_n,V_n),\qquad
 \mu_n=S_n/P_n,\quad V_n=P_n^{-1}.
 \label{eq:onepld-posterior}
\end{equation}
For $x\in(x_{\min},x_{\max})$,
\begin{align*}
 \Pr\{p_\beta(x)>\theta\mid\mathcal F_n\}
 &=1-\Phi\!\left\{\frac{A/(x-x_{\min})-\mu_n}{\sqrt{V_n}}\right\},\\
 \Pr(\xi=x_{\max}\mid\mathcal F_n)
 &=\Phi\!\left\{\frac{A/L_x-\mu_n}{\sqrt{V_n}}\right\}.
\end{align*}
For $0<u<1$, put $B_u=\mu_n+\sqrt{V_n}\Phi^{-1}(1-u)$. The lower posterior $u$-quantile is $x_{\max}$ if $B_u\leq A/L_x$ and $x_{\min}+A/B_u$ otherwise. The allocation uses $u=\alpha$ and the common dose restrictions.

The known noise SD equals the simulation value $0.1$. This supplies 1PLD with information that the remaining methods estimate.
% END SOURCE: bart_onepld_model.tex

\subsection{2PLD and 3PND}
The likelihoods of \citet{lee2022}, Eq.~(3), and \citet{lee2023}, Eq.~(4), are
\[
 Y_i\mid X_i,\beta,\nu,\sigma\sim
 \mathcal N\{\beta(X_i-x_{\min})^\nu,\sigma^2\}.
\]
2PLD fixes $\nu=1$; 3PND estimates $\nu>0$. Both fix the mean at zero at $x_{\min}$. Their toxicity risk and MTD are
\begin{equation}
 p_{\beta,\nu,\sigma}(x)=
 \Phi\left\{\frac{\beta(x-x_{\min})^\nu-\eta}{\sigma}\right\},\qquad
 \xi=x_{\min}+
 \left\{\frac{\eta-\sigma q_\gamma}{\beta}\right\}^{1/\nu}.
 \label{eq:comparator-root}
\end{equation}

The priors in \citet{lee2022}, Eqs.~(4)--(5), and \citet{lee2023}, Eqs.~(7)--(9), are
\begin{equation}
 \begin{gathered}
 \pi_\sigma(s)=\frac{\mathbf1\{0<s<\eta/q_\gamma\}}
 {\arctan(\eta/q_\gamma)(1+s^2)},\quad
 \beta\mid\sigma,\nu\sim\operatorname{Unif}\{l(\sigma,\nu),u(\sigma,\nu)\},\\
 l(\sigma,\nu)=\frac{\eta-\sigma q_\gamma}{L_x^\nu},\qquad
 u(\sigma,\nu)=\frac{\eta}{L_x^\nu}+\sigma q_\gamma .
 \end{gathered}
 \label{eq:comparator-prior}
\end{equation}
For 3PND, $\log\nu\sim \mathcal N(0,\delta^2)$ independently of $\sigma$, with $\delta=0.1$ in the simulation. The scale-one half-Cauchy prior is truncated on the standard deviation. The uniform density includes the normalizer
\[
 \{u-l\}^{-1}=\{\sigma q_\gamma(1+L_x^{-\nu})\}^{-1}.
\]
In particular, the $\sigma q_\gamma$ term in the upper endpoint is not divided by $L_x^\nu$.

Independent $U,V\sim\operatorname{Unif}(0,1)$ and $Z\sim \mathcal N(0,1)$ give the exact prior coordinates
\begin{equation}
 \sigma=\tan\{V\arctan(\eta/q_\gamma)\},\qquad
 \nu=\exp(\delta Z),\qquad
 \beta=l(\sigma,\nu)+U\{u(\sigma,\nu)-l(\sigma,\nu)\}.
 \label{eq:comparator-latent-prior}
\end{equation}
For 2PLD, omit $Z$ and set $\nu=1$. These constraints imply $x_{\min}<\xi<x_{\max}$ almost surely. The priors exclude both endpoint states, whereas tree priors permit them.

\subsection{Allocation and evaluation}
The posterior dose rule is $D_\alpha(\mathcal F_n)=F^{-1}_{\xi\mid\mathcal F_n}(\alpha)$. For the increasing parametric curves, the event that dose $x$ is excessive equals $\{\xi<x\}$. Under the 1PLD convention in Eq.~\eqref{eq:onepld-root}, the same equivalence holds for its allocated positive doses; negative slopes contribute to the all-acceptable endpoint. The tree procedures assess the toxicity-risk event directly to account for step boundaries. Posterior feasibility is a statement under each method's own posterior and is not a common fixed-truth frequentist bound.

In Eq.~\eqref{eq:bart-metrics}, BTM and RAE use the last assigned dose; NPD and NPO use individual patient records. Posterior summaries incorporate all observed responses, including the final cohort, and are distinct from the last-dose estimator.
% END SOURCE: bayesian_comparator_implementation.tex
% BEGIN SOURCE: bart_method_dags.tex
\section{Dose--toxicity curves and graphical models}
\label{app:method-dags}

Figure~\ref{fig:method-dags}
shows what each method learns from the measured outcome and how those
unknowns determine the MTD. Shaded circles denote observed responses
and assigned doses; unshaded circles denote unknown quantities.
Rectangles contain fixed inputs, and diamonds are deterministic
functions of their parents. Solid arrows specify stochastic
dependencies; dashed arrows lead to deterministic functions. The
patient plate repeats the response factor for $i=1,\ldots,n$.

The clinical inputs $C=(x_{\min},x_{\max},\eta,\gamma)$ are fixed
throughout the trial. To keep the diagrams readable, arrows from $C$
are suppressed, including its contribution to the priors and the MTD.
The dose nodes show the values used in the sequential likelihood
$p(Y_i\mid X_i,\mathcal F_{i-1},\vartheta)$, where $\vartheta$ denotes
a method's unknown parameters. The graphs omit the allocation history
and do not assert conditional independence after conditioning on the
entire random sequence of adaptive doses. The allocation kernels
cancel from the posterior for the reason given in
Section~\ref{sec:bart-target}. These are probability-model diagrams,
not causal diagrams.

\begin{figure}[!ht]
 \centering
 \includegraphics[width=0.90\textwidth]{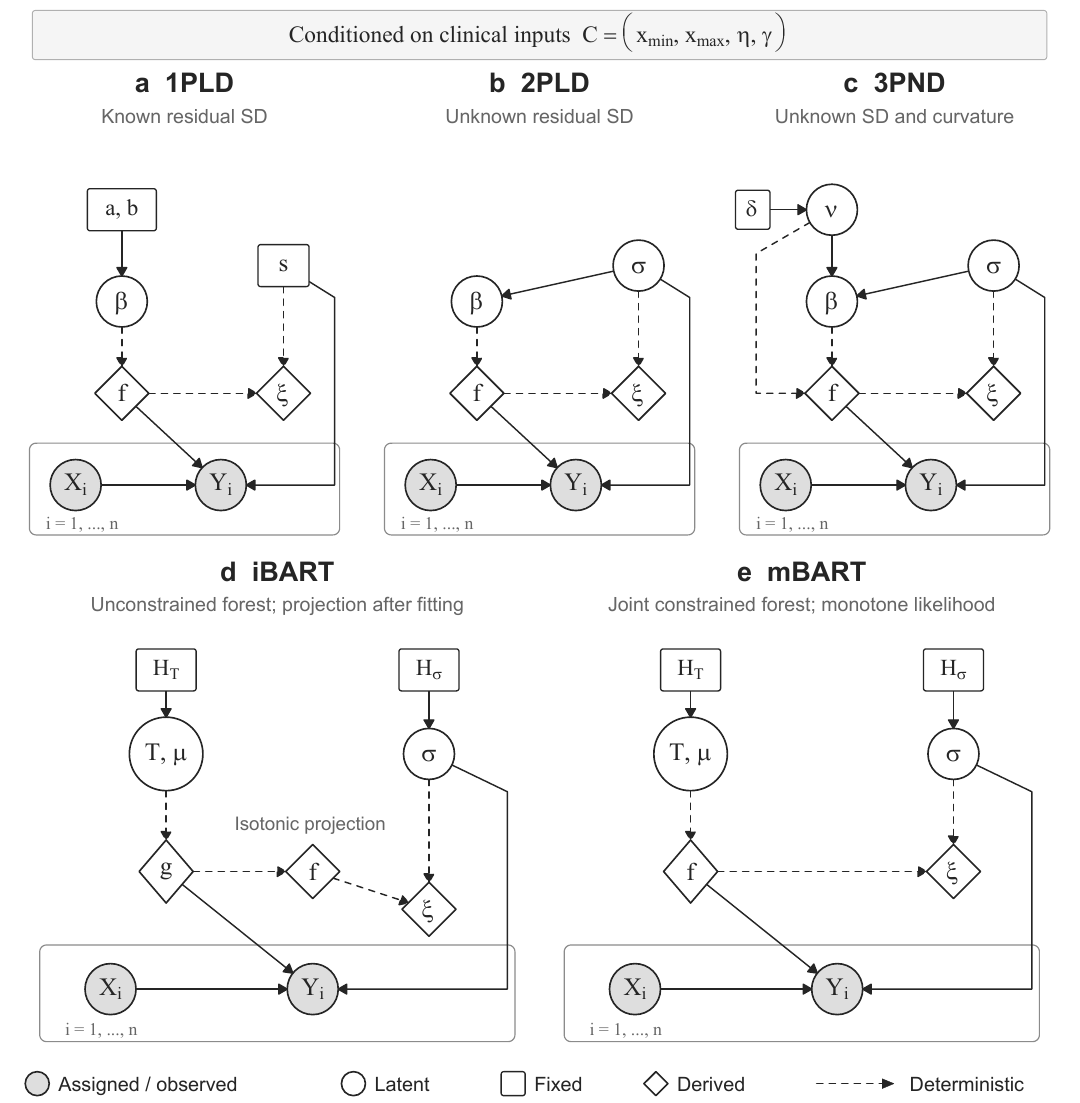}
 \caption{\textbf{Graphical models of the five procedures.}
 Top: 1PLD, 2PLD, and 3PND; bottom: iBART and mBART.
 $X_i$ and $Y_i$ are patient $i$'s assigned dose and response.
 $\beta$ is the mean-curve coefficient; $a,b$ are its 1PLD prior mean and SD;
 $s$ is the known residual SD and $\sigma$ an unknown residual SD.
 The 3PND exponent is $\nu$, with prior SD $\delta$ for $\log\nu$.
 $T$ collects tree structures and $\mu$ their leaf heights.
 $H_T$ fixes the tree count, depth probabilities, cut set, and leaf scale;
 $H_\sigma$ fixes the residual-variance prior's degrees of freedom and scale.
 $g$ is the unconstrained forest mean, $f$ the curve defining the MTD $\xi$.
 The fixed clinical inputs $C=(x_{\min},x_{\max},\eta,\gamma)$ are the dose
 bounds, DLT threshold, and required non-DLT probability.
 The arrows $\sigma\to\beta$ and $\nu\to\beta$ encode conditional priors.
 In iBART, $g$ enters the likelihood while its isotonic projection
 $f=\mathcal P g$ determines the MTD; in mBART, the joint constrained forest gives the monotone mean
 $f$ used by both. Every MTD retains its paired residual SD.
 Allocation history and arrows from $C$ are omitted.}
 \label{fig:method-dags}
\end{figure}

\begin{figure}[!t]
 \centering
 \includegraphics[width=0.95\textwidth]{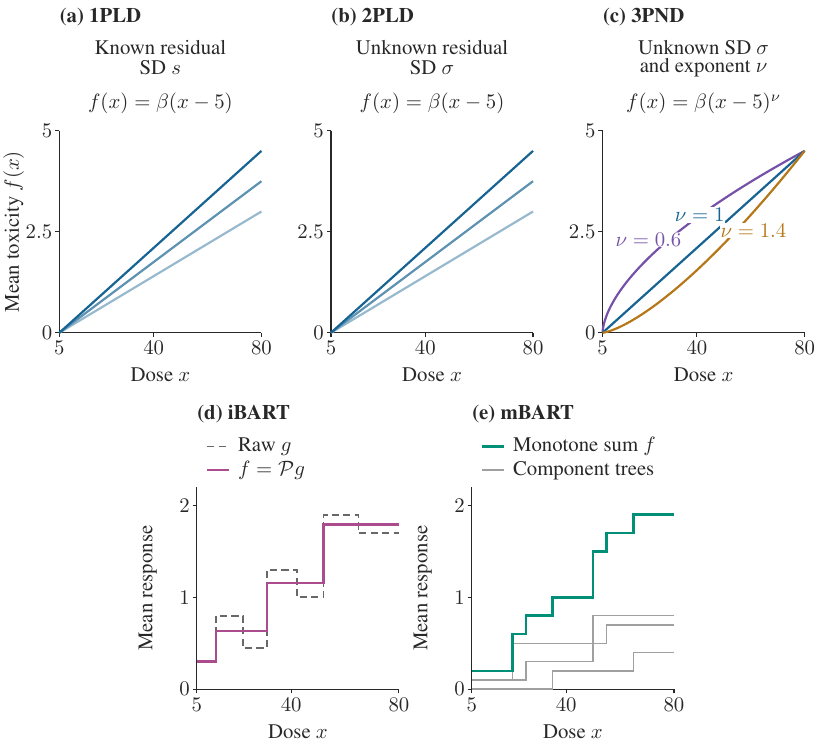}
 \caption{\textbf{Illustrative mean dose--toxicity curves for the five procedures.}
 (a,b) 1PLD and 2PLD share a linear mean family, differing in whether residual
 SD is known or estimated; the examples use $\beta\in\{0.04,0.05,0.06\}$.
 (c) 3PND also estimates the exponent $\nu$; the examples use
 $\nu\in\{0.6,1,1.4\}$ and $\beta=4.5/75^\nu$, giving a common mean
 at the upper dose. Only positive
 slopes are illustrated; the 1PLD prior also permits negative slopes.
 (d) iBART applies cell-length-weighted isotonic projection to the
 unconstrained step curve $g$ (dashed gray), giving $f=\mathcal P g$
 (purple). The likelihood retains $g$, while $f$ and its paired residual SD
 determine the MTD. (e) Three nondecreasing component trees (thin gray)
 sum to a monotone curve $f$ (green), used in both the likelihood and MTD
 calculation. These schematic curves are not fitted estimates.}
 \label{fig:dose-toxicity-curves}
 \label{fig:parametric-dose-curves}
 \label{fig:tree-dose-curves}
\end{figure}

\paragraph{Parametric procedures.}
For 1PLD, the only unknown is the slope $\beta$; the residual standard
deviation $s$ is known. Its prior is the untruncated
$\mathcal N(a,b^2)$ distribution. For 2PLD, $\sigma$ is unknown and the support
of the slope prior depends on it. For 3PND, the exponent $\nu$ also
changes the slope-prior support. Conditional on the fixed inputs,
the respective prior factorizations are
\begin{align*}
 \pi_{\rm 1PLD}(\beta)&=\phi_b(\beta-a),\\
 \pi_{\rm 2PLD}(\beta,\sigma)
   &=\pi_\sigma(\sigma)\,\pi_\beta(\beta\mid\sigma,\nu=1),\\
 \pi_{\rm 3PND}(\beta,\sigma,\nu)
   &=\pi_\sigma(\sigma)\,\pi_\nu(\nu\mid\delta)\,
     \pi_\beta(\beta\mid\sigma,\nu).
\end{align*}
Here $\pi_\sigma$ and $\pi_\beta$ are exactly the truncated half-Cauchy
and conditional uniform densities in
Eq.~\eqref{eq:comparator-prior}, and $\log\nu\sim \mathcal N(0,\delta^2)$.
Thus $\sigma$ and $\nu$ are independent before observing data in 3PND,
but $\beta$ is independent of neither. The functions represented by
the $f$ nodes are $\beta(x-x_{\min})$ for 1PLD and 2PLD, and
$\beta(x-x_{\min})^\nu$ for 3PND. All three use a Gaussian response
likelihood. The MTD node uses Eq.~\eqref{eq:onepld-root} for 1PLD,
including its upper-endpoint convention for small or negative slopes,
and Eq.~\eqref{eq:comparator-root} for 2PLD and 3PND.
Figure~\ref{fig:dose-toxicity-curves}(a)--(c) illustrates their mean-curve families.

\paragraph{Tree procedures.}
In the lower row of Figure~\ref{fig:method-dags}, $T$ denotes the collection of
tree structures and $\mu$ their leaf heights. The fixed inputs $H_T$
specify the tree count, depth probabilities, cut set, and leaf scale;
$H_\sigma$ specifies the residual-variance prior. The fixed mean
offset is also conditioned on. These settings are given in
Section~\ref{sec:bart-priors}. The forest and residual variance are
independent under the prior; observing the responses generally makes
them dependent in the posterior.

For iBART, the forest produces the unconstrained mean $g$. This
$g$ enters the Gaussian likelihood. The separate deterministic
branch applies the weighted isotonic projection $f=\mathcal P g$
and then computes $\xi(f,\sigma)$. There is no arrow from $f$ to
$Y_i$: projection changes the posterior quantity used for dose
selection, not the likelihood that produced the draw. The
corresponding $\sigma$ draw is retained through both calculations.

For mBART, the forest node represents the \emph{joint}
constrained-product prior for structures and leaf heights in
Eq.~\eqref{eq:bart-main-working-prior}. Each tree is nondecreasing,
so the sum $f$ enters both the likelihood and the
MTD calculation. Grouping structures and heights in one node is
deliberate: using the unconstrained topology marginal followed by
a normalized ordered-leaf conditional would instead give the
different prior in Eq.~\eqref{eq:conference-priors}. The numerical
implementation approximates the constrained-product formulation
on the stated leaf grid; the diagram does not identify that finite
approximation with the continuous prior.
Figure~\ref{fig:dose-toxicity-curves}(d,e) illustrates projection and monotone summation.

\paragraph{The common EWOC decision.}
The graphs end at the MTD because $\alpha$ is a decision parameter,
not a response-model parameter. After conditioning on the available
patient data, each method computes an MTD posterior and uses its
lower $\alpha$-quantile to propose the next dose. The tree procedures
also check the posterior risk event at the proposed dose, as in
Eq.~\eqref{eq:bart-feasible}, to handle step boundaries. The initial
dose, cohort size, dose restrictions, and stopping rules belong to
this allocation stage. Placing that stage after the posterior
update keeps the diagrams acyclic while preserving the shared
EWOC structure.

% END SOURCE: bart_method_dags.tex
%\FloatBarrier
% BEGIN SOURCE: bart_doob_consistency.tex
\section{MTD posterior consistency}
\label{sec:doob-consistency}
For the fixed-cell model below, we establish posterior concentration at
the true MTD under prior support, persistent cell sampling, and a strict
risk-threshold margin. The proof uses a Schwartz-type numerator--denominator
argument \citep{schwartz1965}. The margin makes the MTD locally unchanged,
allowing likelihood comparisons to exclude parameters giving a different MTD.

Fix positive-length dose cells $B_1,\ldots,B_J$, with boundaries
$c_0=x_{\min}<c_1<\cdots<c_J=x_{\max}$ and a fixed convention at each
boundary. Write $\vartheta=(u,v)$ for the cell means
$u=(u_1,\ldots,u_J)$ and residual variance $v>0$.
The parameter space $\Theta$ is $\mathbb R^J\times(0,\infty)$ for the
unconstrained iBART likelihood, or its subset with nondecreasing means
for mBART. The prior may have smaller support. Throughout, $\|\cdot\|$
is the Euclidean norm on these identifiable parameters.

For MTD inference, let $f(u)=u$ for mBART and $f(u)=\mathcal P u$
for iBART, where $\mathcal P$ is the fixed cell-length-weighted
isotonic projection. The cell risks and MTD functional are
\[
 \begin{aligned}
 r_j(\vartheta)&=\Phi\{(f_j(u)-\eta)/\sqrt v\},\\
 \xi(\vartheta)&=\sup\bigl(\{x_{\min}\}\cup
              \{x\in\mathcal X:r_{j(x)}(\vartheta)\le\theta\}\bigr),
 \end{aligned}
\]
where $j(x)$ identifies the dose cell and $\theta=1-\gamma$.
The MTD takes values among the boundaries $\{c_0,\ldots,c_J\}$,
with the endpoint conventions of Definition~\ref{def:clinical-mtd}.
Write $\Pi_n^\xi$ for its induced posterior distribution.

Fix a true parameter $\vartheta_0=(u_0,v_0)\in\Theta$ and its MTD
$\xi_0=\xi(\vartheta_0)$.
For iBART, assume additionally that $u_0$ is nondecreasing, so
$\mathcal P u_0=u_0$ and $\xi_0$ is the MTD of the data-generating
response law.
Suppose an indefinitely continued, predictable allocation satisfies
\[
 \begin{aligned}
 Y_i\mid\mathcal F_{i-1},X_i&\sim\mathcal N(u_{0j},v_0)
       &&\text{when }X_i\in B_j,\\
 \liminf_{n\to\infty}\frac{N_{j,n}}n&\ge\kappa>0
       &&(j=1,\ldots,J)
 \end{aligned}
\]
almost surely under $P_{\vartheta_0}$, where
$N_{j,n}=\sum_{i=1}^n\mathbf1\{X_i\in B_j\}$ and $\kappa$ is fixed.
The allocation rule is the same function of observed history under
all parameters. Let $\Pi$ be a fixed proper prior with
$\Pi\{\|\vartheta-\vartheta_0\|<\delta\}>0$ for every $\delta>0$.
Assume also that the true projected or monotone cell risks have a
strict margin from the clinical target:
\[
 \min_{1\le j\le J}|r_j(\vartheta_0)-\theta|>0.
\]

\begin{theorem}[Posterior consistency of the MTD]
\label{prop:doob-grid}
Under the model and assumptions stated above, the posterior distribution
of the MTD $\xi$ is consistent at its true value $\xi_0$: for every
$\epsilon>0$,
\[
 \Pi_n^\xi\{z:|z-\xi_0|\ge\epsilon\}
 \longrightarrow0
 \qquad P_{\vartheta_0}\text{-almost surely}.
\]
\end{theorem}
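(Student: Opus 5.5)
We reduce MTD consistency to posterior concentration of the finite-dimensional parameter $\vartheta=(u,v)$ near $\vartheta_0$, and then use the strict risk margin to show that the MTD functional is locally constant there. Since $\xi$ takes values in the finite boundary set $\{c_0,\ldots,c_J\}$, it suffices to prove $\Pi_n\{\vartheta:\xi(\vartheta)\ne\xi_0\mid\mathcal F_n\}\to0$ almost surely.

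\textbf{Step 1 (local constancy).} The map $u\mapsto f(u)$ is continuous: the identity for mBART, and the Euclidean projection onto a closed convex cone in the cell-length-weighted inner product for iBART, which is $1$-Lipschitz in that norm. Hence each $r_j(\vartheta)$ is continuous on $\Theta$. By the margin assumption there is $\delta_0>0$ such that every $\vartheta\in\Theta$ with $\|\vartheta-\vartheta_0\|<\delta_0$ satisfies $\operatorname{sign}\{r_j(\vartheta)-\theta\}=\operatorname{sign}\{r_j(\vartheta_0)-\theta\}$ for all $j$. The acceptable-cell pattern therefore coincides with that of $\vartheta_0$, and so does the supremum, giving $\xi(\vartheta)=\xi_0$. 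For iBART we use $\mathcal P u_0=u_0$, so the pattern at $\vartheta_0$ is that of the true response law. It follows that
\[
\Pi_n^\xi\{|z-\xi_0|\ge\epsilon\}\le\Pi_n(U^c),\qquad U=\{\|\vartheta-\vartheta_0\|<\delta_0\}.
\]

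\textbf{Step 2 (posterior concentration via Schwartz).} Write $\Pi_n(U^c)=N_n/D_n$, where the integrands are the likelihood ratios $\prod_i p_\vartheta(Y_i\mid X_i)/p_{\vartheta_0}(Y_i\mid X_i)$; the allocation factors cancel as in \eqref{eq:bart-posterior}. Grouping by cell, the log likelihood ratio is a function of the cell sufficient statistics $(N_{j,n},\bar Y_{j,n},S^2_{j,n})$. For the denominator, apply the strong law for martingale differences in each cell, which is valid under predictable allocation because the cell innovations are iid $\mathcal N(0,v_0)$ indexed by visit order and $N_{j,n}\to\infty$. This gives $n^{-1}\log(\text{ratio})\to-\sum_j\lambda_j K_j(\vartheta)$ locally uniformly, where $\lambda_j$ is the limiting frequency along subsequences and $K_j$ is the Gaussian KL divergence $\tfrac12\{\log(v/v_0)+(v_0+(u_{0j}-u_j)^2)/v-1\}$, which is continuous and vanishes at $\vartheta_0$. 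Combined with prior positivity of small balls, Fatou's lemma yields $\liminf e^{n\beta}D_n=\infty$ almost surely for every $\beta>0$.

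For the numerator, we show $\limsup n^{-1}\log N_n\le-c<0$. Using $\liminf N_{j,n}/n\ge\kappa$, on $U^c$ each parameter has some cell $j$ or the variance far from the truth. We bound the likelihood ratio by an exponential tilt in the cell statistics: for Gaussian families, $\sup_{\vartheta\in U^c}$ of the normalized log ratio is at most $-\kappa\min_{U^c}\sum_jK_j(\vartheta)+o(1)$, computed explicitly from the cell means and variances converging to $(u_{0j},v_0)$.

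\textbf{Main obstacle.} The main difficulty is controlling the numerator uniformly over the noncompact set $U^c$, where $u_j\to\pm\infty$, $v\to0$ or $v\to\infty$. We plan to handle this by splitting $U^c$ into a compact annulus, where uniform convergence of the empirical cell statistics gives a strictly negative exponent, and a tail region. In the tail region the explicit Gaussian log-likelihood $-\tfrac{N_{j,n}}{2}\{\log v+(S^2_{j,n}+(\bar Y_{j,n}-u_j)^2)/v\}$ is uniformly dominated by its value at $\vartheta_0$ minus $c\,n$. This requires using at least two observations per cell so that $S^2_{j,n}$ is bounded away from zero, preventing $v\to0$ from inflating the likelihood; note that $\sum_j N_{j,n}S^2_{j,n}/n\to v_0$. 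Because the prior is proper, integrating the resulting bound gives $N_n\le e^{-cn}$ eventually. Choosing $\beta<c$ then yields $\Pi_n(U^c)\to0$ almost surely, which completes the proof by Step 1.
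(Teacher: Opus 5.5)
Your proof is correct and has the same overall structure as the paper's. The strict margin and the continuity of $\mathcal P$ and $\Phi$ make $\xi$ constant on a ball around $\vartheta_0$. A Schwartz numerator--denominator bound then removes the posterior mass off that ball. That bound is driven by the cell means $\bar Y_{j,n}$ and the pooled variance $\widehat v_n$; you obtain their limits by optional skipping, the paper by martingale differences and Kronecker's lemma.

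The real difference is which likelihood you normalize by. You divide by $L_n(\vartheta_0)$ and work with KL divergences. That is the textbook template, and it is the version that generalizes to non-Gaussian cell families. It costs you the compact-annulus/tail split you flag, because the additive $o(1)$ error is not uniform as $v\to0$, $v\to\infty$, or $\|u\|\to\infty$.

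The paper divides instead by the unrestricted profile maximum $\widehat L_n=L_n(\bar Y_n,\widehat v_n)$. This gives the exact identity $L_n/\widehat L_n=\exp\{-nH_n/2\}$, where $H_n(u,v)=h(v/\widehat v_n)+\sum_j p_{j,n}(u_j-\bar Y_{j,n})^2/v\ge 0$ and $h(t)=\log t+t^{-1}-1$. A contrapositive then does the work: $H_n\le c$ forces $v/\widehat v_n$ close to one and $\|u-\bar Y_n\|^2\le 2vc/\kappa$. So $H_n\ge c_r$ holds on the entire wrong-MTD set in one step, including arbitrarily large means and variances, with no tail splitting. An explicit small-ball bound replaces Fatou and yields the rate $\Pi(B)^{-1}e^{-nc_r/4}$.

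Two simplifications apply to your version. First, the cell frequencies need not converge. Replace $\lambda_j$ by $N_{j,n}/n$ and use $\sum_j (N_{j,n}/n)K_j\le\sum_j K_j$ in the denominator, which avoids subsequences. Second, for the $v\to0$ tail only $\widehat v_n\to v_0$ matters, since $v$ is shared across cells. You therefore need no per-cell lower bound on $S^2_{j,n}$.
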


\noindent\emph{Proof.}
The weighted isotonic projection and Gaussian tail are continuous.
There are finitely many cells, and none of their true risks equals
$\theta$. Hence there is a radius $r>0$ such that
$\|\vartheta-\vartheta_0\|<r$ preserves every acceptable/unacceptable
cell classification and therefore gives $\xi(\vartheta)=\xi_0$.
Define the set of parameters giving the wrong MTD by
$A=\{\vartheta\in\Theta:\xi(\vartheta)\ne\xi_0\}$.
Every parameter in $A$ is outside this radius-$r$ neighborhood.
It suffices to show that its posterior probability tends to zero.
We compare the two integrals in
\begin{equation}
 \Pi_n(A)=
 \frac{\int_A L_n(\vartheta)\,\Pi(d\vartheta)}
      {\int_\Theta L_n(\vartheta)\,\Pi(d\vartheta)}.
 \label{eq:cell-posterior-ratio}
\end{equation}
The dose assigned to patient $i$ is determined by the preceding
history. Its allocation factor therefore has no unknown-parameter
term and cancels from the posterior ratio. Multiplying the conditional
Gaussian response densities gives
\[
 L_n(u,v)=(2\pi v)^{-n/2}
 \exp\left\{-\frac1{2v}
       \sum_{j=1}^J\sum_{i:X_i\in B_j}(Y_i-u_j)^2\right\}.
\]
This factorization uses the sequential conditional model; it does not
require the adaptive observations to be iid.

Every cell is visited infinitely often by the sampling assumption.
For all sufficiently large $n$, define the cell proportions, cell
means, and pooled residual variance by
\[
 p_{j,n}=\frac{N_{j,n}}n,\qquad
 \bar Y_{j,n}=\frac1{N_{j,n}}\sum_{i:X_i\in B_j}Y_i,\qquad
 \widehat v_n=\frac1n\sum_j\sum_{i:X_i\in B_j}
                         (Y_i-\bar Y_{j,n})^2.
\]
Let $\varepsilon_i=Y_i-u_{0j}$ when $X_i\in B_j$.
Predictability makes
$\mathbf1\{X_i\in B_j\}\varepsilon_i$ and
$\varepsilon_i^2-v_0$ martingale differences, with conditional
variances at most $v_0$ and $2v_0^2$, respectively. Dividing these
increments by $i$ gives summable variances. The martingale convergence
theorem and Kronecker's lemma therefore give
\[
 \frac1n\sum_{i=1}^n\mathbf1\{X_i\in B_j\}\varepsilon_i\to0,
 \qquad
 \frac1n\sum_{i=1}^n\varepsilon_i^2\to v_0
 \quad\text{almost surely}.
\]
Since $p_{j,n}\ge\kappa/2$ eventually, the first limit implies
$\bar Y_{j,n}\to u_{0j}$. Completing the square within each cell also
shows
\[
 \widehat v_n=
 \frac1n\sum_{i=1}^n\varepsilon_i^2
       -\sum_jp_{j,n}(\bar Y_{j,n}-u_{0j})^2
 \longrightarrow v_0>0.
\]
For the rest of the proof, work on the probability-one event where
these limits and the allocation assumption hold. No convergence of
the proportions themselves is needed.

Let $\widehat L_n=L_n(\bar Y_n,\widehat v_n)$, using the unrestricted
cell estimates even when $\bar Y_n$ is not nondecreasing.
Completing the square in the likelihood gives
\begin{equation*}
 \begin{aligned}
 \frac{L_n(u,v)}{\widehat L_n}
    &=\exp\{-nH_n(u,v)/2\},\\
 H_n(u,v)
    &=h\left(\frac v{\widehat v_n}\right)
       +\frac{\sum_jp_{j,n}(u_j-\bar Y_{j,n})^2}{v},
 \qquad h(t)=\log t+t^{-1}-1.
 \end{aligned}
\end{equation*}
The function $h$ is nonnegative, vanishes only at $t=1$, and tends
to infinity as $t$ tends to zero or infinity. Thus $H_n\ge0$ and,
eventually, $0<L_n(u,v)\le\widehat L_n<\infty$ for every parameter.
A proper prior then gives a finite, positive posterior denominator.
The common factor $\widehat L_n$ can now be canceled from
\eqref{eq:cell-posterior-ratio}.

A small value of $H_n$ forces both the variance and the cell means
to be close to their true values. To see this, suppose $H_n(u,v)\le c$.
Then $h(v/\widehat v_n)\le c$. The properties of $h$ imply that
$v/\widehat v_n$ is as close to one as desired when $c$ is small.
Because $\widehat v_n\to v_0$, we can choose $c>0$ small enough that,
eventually, this entails $v\le2v_0$ and $|v-v_0|<r/2$,
uniformly over all such parameters.

The remaining term of $H_n$, together with $p_{j,n}\ge\kappa/2$,
gives the explicit bound
\[
 \|u-\bar Y_n\|^2
 \le\frac{2v}{\kappa}H_n(u,v)
 \le\frac{4v_0c}{\kappa}.
\]
Choose $c$ smaller if necessary so the last expression is at most
$r^2/16$. Eventually
$\|\bar Y_n-u_0\|<r/4$, so the triangle inequality yields
$\|u-u_0\|<r/2$. Together with the variance bound, this implies
$\|\vartheta-\vartheta_0\|<r$.
Taking the contrapositive, there is a fixed $c_r>0$ such that
$H_n(\vartheta)\ge c_r$ for every
$\vartheta\in A$ and all sufficiently large $n$.
Hence the normalized numerator satisfies
\[
 \int_{A}e^{-nH_n(\vartheta)/2}\,\Pi(d\vartheta)
 \le e^{-nc_r/2}.
\]
The bound applies to every parameter giving the wrong MTD, including
arbitrarily large means and variances.

Choose a small fixed ball
$B=\{\vartheta\in\Theta:\|\vartheta-\vartheta_0\|<\delta\}$.
For small enough $\delta$, and eventually in $n$, all its variances
are bounded below by $v_0/2$ and its cell means are close to $\bar Y_n$.
More explicitly, since the proportions sum to one,
\[
 \frac{\sum_jp_{j,n}(u_j-\bar Y_{j,n})^2}{v}
 \le \frac{2}{v_0}\|u-\bar Y_n\|^2
 \quad(\vartheta\in B).
\]
This bound can be made uniformly smaller than $c_r/4$ by
choosing $\delta$ small and using $\bar Y_n\to u_0$.
Continuity of $h$ at one and $\widehat v_n\to v_0$ similarly make
the variance term smaller than $c_r/4$. Thus
$\sup_{\vartheta\in B}H_n(\vartheta)\le c_r/2$ eventually.
The support assumption gives $\Pi(B)>0$, so
\[
 \int_\Theta e^{-nH_n(\vartheta)/2}\,\Pi(d\vartheta)
 \ge \Pi(B)e^{-nc_r/4}.
\]
Combining the numerator and denominator bounds in the posterior ratio gives
\[
 \Pi_n(A)
 \le\Pi(B)^{-1}e^{-nc_r/4}\longrightarrow0.
\]
For every $\epsilon>0$, the event $|\xi-\xi_0|\ge\epsilon$ is contained
in the event $\xi\ne\xi_0$. Thus
\[
 \Pi_n^\xi\{z:|z-\xi_0|\ge\epsilon\}\le\Pi_n(A)\longrightarrow0.
\]
This holds on the probability-one event established above and proves
the stated MTD consistency.\hfill$\square$

% END SOURCE: bart_doob_consistency.tex
% BEGIN SOURCE: bart_numerical_appendix.tex
\section{Posterior computation}
\label{app:bart-numerics}
The five procedures use different posterior calculations. The 1PLD
posterior is analytic. For 2PLD and 3PND, we use coordinate slice
sampling in transformed prior coordinates. For the tree models, we
use Bayesian backfitting: Metropolis--Hastings (MH) updates of tree
structure, followed by conditional draws of leaf values and residual
variance. Thus, \emph{Gibbs sampling} alone would not specify the tree
algorithm. This appendix describes posterior computation for the
$N=45$ study.

\subsection{Parametric procedures}
\paragraph{1PLD: analytic updating.}
Let $d_i=X_i-x_{\min}$. With the untruncated prior
$\beta\sim\mathcal N(a,b^2)$ and known residual SD $s$, conjugacy gives
$\beta\mid\mathcal F_n\sim\mathcal N(\mu_n,V_n)$, where
\[
 V_n=\left(b^{-2}+s^{-2}\sum_{i=1}^n d_i^2\right)^{-1},\qquad
 \mu_n=V_n\left(ab^{-2}+s^{-2}\sum_{i=1}^n d_iY_i\right).
\]
We use $(a,b,s)=(0.04,0.02,0.1)$.
For $A=\eta-sq_\gamma>0$, where $q_\gamma=\Phi^{-1}(\gamma)$,
the MTD is $x_{\max}$ if $\beta\leq A/L_x$ and
$x_{\min}+A/\beta$ otherwise, with $L_x=x_{\max}-x_{\min}$.
MTD quantiles, upper-endpoint mass and assigned-dose overdose
probabilities follow analytically, including the negative-slope
probability; see Eqs.~\eqref{eq:onepld-root}--\eqref{eq:onepld-posterior}.
No MCMC, burn-in or thinning is needed.

\paragraph{2PLD and 3PND: target in prior coordinates.}
The likelihood and normalized hierarchy are given in
Appendix~\ref{app:bayesian-comparators}. The coefficient bounds in
Eq.~\eqref{eq:comparator-prior} have width
$u(\sigma,\nu)-l(\sigma,\nu)=\sigma q_\gamma(1+L_x^{-\nu})$,
which depends on both $\sigma$ and $\nu$. To retain this normalizer
and the parameter-dependent support, put
$a_\sigma=\arctan(\eta/q_\gamma)$ and use
Eq.~\eqref{eq:comparator-latent-prior}:
\[
 \begin{gathered}
 U,V\sim\operatorname{Unif}(0,1),\quad Z\sim\mathcal N(0,1)
 \quad\text{independently},\\
 \sigma=\tan(a_\sigma V),\qquad \nu=\exp(\delta Z),\qquad
 \beta=l(\sigma,\nu)+U\{u(\sigma,\nu)-l(\sigma,\nu)\}.
 \end{gathered}
\]
For 3PND, $\delta=0.1$; for 2PLD, fix $\nu=1$ and omit $Z$.
The sampler updates $z_1=\operatorname{logit}(U)$,
$z_2=\operatorname{logit}(V)$ and, for 3PND, $z_3=Z$.
Including the transformation Jacobian gives the log target,
up to an additive constant,
\begin{equation*}
 \ell_n(z)=
 -n\log\sigma-\frac{1}{2\sigma^2}
       \sum_{i=1}^n\{Y_i-\beta d_i^\nu\}^2
 +\log\{U(1-U)V(1-V)\}-\frac{Z^2}{2}.
\end{equation*}
The last term is omitted for 2PLD. The physical prior normalizers
cancel with the corresponding change-of-variables factors; no
additional physical-coordinate Jacobian is applied.

\paragraph{Slice transitions and physical parameters.}
Each sweep visits $z_1,z_2,z_3$ in that order, omitting the third for
2PLD. For one coordinate, draw
$\log h=\ell_n(z)+\log U_0$, $U_0\sim\operatorname{Unif}(0,1)$.
Initialize a randomly positioned interval of width 1.5 around its
current value, then step out while its endpoints lie above this
height. A random integer $J_0$ uniform on $\{0,\ldots,39\}$ assigns
$J_0$ additional outward increments to the left and $39-J_0$ to the
right. Uniform proposals from the resulting interval are accepted
when their log target exceeds $\log h$; otherwise shrink the
interval toward the current point and propose again
\citep{neal2003slice}. The width is fixed, and failure to accept
within 1,000 shrinkage proposals is reported as an error.

These are slice-within-Gibbs moves in the \emph{transformed}
coordinates. Updating $z_2$ changes both $\sigma$ and $\beta$;
updating $z_3$ changes both $\nu$ and $\beta$. They do not reproduce
the physical-parameter transitions in Algorithms~3--5 of
\citet{lee2023}, whose Gibbs cycle itself includes slice moves.
Our specification concerns the normalized hierarchy stated here.
For example, a physical update of $\nu$ holding $(\beta,\sigma)$
fixed would retain
\begin{equation*}
 \pi(\nu\mid\beta,\sigma,\mathcal F_n)\ \propto\
 \exp\left[-\frac{\sum_i\{Y_i-\beta d_i^\nu\}^2}{2\sigma^2}\right]
 \frac{\operatorname{logN}(\nu;0,\delta^2)}
      {1+L_x^{-\nu}}\,
 \mathbf1\{l(\sigma,\nu)<\beta<u(\sigma,\nu)\}.
\end{equation*}
The displayed normalizer and support are not constants in $\nu$;
the prior-coordinate target retains both automatically.
The change-of-variables identity does not assert identical finite
chains or adaptive dose paths for different samplers.
Each retained draw gives an MTD via Eq.~\eqref{eq:comparator-root}.

\subsection{Tree backfitting and conditional updates}
\paragraph{One ensemble sweep.}
The likelihood mean is $b_0+\sum_{j=1}^m h_j(x)$, with $m=200$ and
fixed $b_0=2$. For tree $j$, remove its old contribution from the
current ensemble and form the partial residuals
\[
 R_{ij}=Y_i-b_0-\sum_{k\ne j}h_k(X_i),\qquad i=1,\ldots,n.
\]
Update its structure against these residuals, refresh its leaf
values, and restore its contribution. After visiting all trees,
update $\sigma^2$. The priors, response calibration and physical-dose
cuts in Appendix~\ref{sec:bart-priors} remain fixed as data accrue.

Both tree samplers use birth and death proposals, without
separate change-rule or swap moves. Birth selects an eligible leaf
and an available split threshold uniformly; death selects an
internal node with two leaf children uniformly. With one predictor,
the split variable is dose. Birth probability is one for a
splittable stump, zero if no leaf can split, and one half otherwise.
For a proposed birth $T\to T'$, let $p_d$ be the parent's prior split
probability, $p_L,p_R$ the children's split probabilities,
$B(T)$ the eligible leaves and $D(T')$ the removable sibling pairs.
The topology-prior and proposal factor is
\[
 A_T=
 \frac{p_d(1-p_L)(1-p_R)}{1-p_d}\,
 \frac{P_D(T')/|D(T')|}{P_B(T)/|B(T)|}.
\]
The uniform threshold prior and proposal factors cancel.
If $I_{\rm parent}$ and $I_{\rm children}$ are the corresponding
leaf-integrated likelihood-times-prior masses, accept with
probability $\min\{1,A_T I_{\rm children}/I_{\rm parent}\}$.
Death uses the reverse factor. For mBART, these masses also retain
the order constraints imposed by unchanged leaves.

\paragraph{iBART: Gaussian leaves and posterior projection.}
For iBART, the Gaussian leaf update uses unit observation weights,
giving the usual homoskedastic sufficient statistics.
For leaf $\ell$, let $n_\ell$ be its count and
$S_\ell=\sum_{i\in\ell}R_{ij}$. With Gaussian leaf prior variance
$\tau^2$, the conditional update is
\begin{equation*}
 V_\ell=(\tau^{-2}+n_\ell/\sigma^2)^{-1},\qquad
 \mu_\ell\mid\cdots\sim
 \mathcal N(V_\ell S_\ell/\sigma^2,V_\ell).
\end{equation*}
Put $b_\ell=n_\ell/\sigma^2$ and $M_\ell=S_\ell/\sigma^2$.
Dropping likelihood factors common to the compared partitions,
the integrated log factor for a topology move is
\[
 H_\ell=-\tfrac12\log(1+b_\ell\tau^2)
       +\frac{M_\ell^2\tau^2}{2(1+b_\ell\tau^2)}.
\]
Thus the birth likelihood ratio is
$\exp(H_L+H_R-H_{\rm parent})$.
Accepted new leaves receive conditional Gaussian draws, followed by
a refresh of all leaves. Empty children are allowed under the fixed
cut prior: for $n_\ell=S_\ell=0$, the integrated factor is one and
the draw is exactly $\mathcal N(0,\tau^2)$.

After sampling the unconstrained likelihood mean $g$, evaluate
each draw on every cell of the common dose-cut grid and apply
weighted PAVA, Eq.~\eqref{eq:bart-pava}, to obtain $f=\mathcal Pg$
using physical cell lengths as weights \citep{deleeuw2009}.
Retain the original paired $\sigma$ draw. Projection neither
replaces $g$ in the likelihood nor refits variance from projected
residuals.

\paragraph{Residual variance in both tree models.}
With residual sum of squares $\mathrm{SSE}=\sum_i\{Y_i-b_0-\sum_j h_j(X_i)\}^2$, draw
\begin{equation*}
 \sigma^2\mid\cdots\sim
 \operatorname{IG}\left(\frac{n+\nu_\sigma}{2},
              \frac{\nu_\sigma\lambda+\mathrm{SSE}}{2}\right),
 \qquad
 \sigma^2=\frac{\nu_\sigma\lambda+\mathrm{SSE}}
                  {U_\sigma},\quad U_\sigma\sim\chi^2_{n+\nu_\sigma}.
\end{equation*}
Here $\operatorname{IG}(a,b)$ is the inverse-gamma distribution with density proportional to
$v^{-a-1}\exp(-b/v)$, $\nu_\sigma=3$, and $\lambda$ is fixed as in
Appendix~\ref{sec:bart-priors}. The SSE uses $g$ for iBART and
the constrained likelihood mean $f$ for mBART.

\subsection{\textnormal{mBART}: ordered finite-grid leaf updates}
The numerical target uses $Q=50$ fixed leaf values
\[
 u_q=-3\tau+\frac{6\tau q}{Q+1},\qquad
 p_q=\frac{\phi(u_q/\tau)}
           {\sum_{r=1}^{Q}\phi(u_r/\tau)},\qquad q=1,\ldots,Q,
\]
with $\tau=\sqrt{1.467}/\sqrt{200}$. These masses are proportional
to Gaussian density values, not Gaussian-bin probabilities; the
endpoints $\pm3\tau$ are excluded.
Other leaves determine the allowable interval $[L_\ell,U_\ell]$:
the current leaf lies above all lower-dose leaf values and below
all higher-dose leaf values, with ties allowed.
Its finite-grid conditional masses are
\[
 w_{\ell q}
 =p_q\exp\left\{\frac{u_qS_\ell-\tfrac12n_\ell u_q^2}{\sigma^2}\right\}
       \mathbf1\{L_\ell\leq u_q\leq U_\ell\}.
\]
Normalize these masses and draw categorically, visiting leaves
sequentially and recomputing the constraints after each draw.
This is a Gibbs update for the finite-grid model, not a continuous
truncated-normal draw. An empty leaf has constant likelihood and
uses the restricted prior masses.

For a birth, the single-leaf mass and ordered pair mass are
\[
 I_1=\sum_q w_{{\rm parent},q},\qquad
 I_2=\sum_{q\leq r}w_{Lq}w_{Rr},
\]
where each child's weights include constraints from unchanged
leaves. The MH ratio uses $I_2/I_1$. An accepted pair is drawn jointly
from the normalized pair weights; death uses $I_1/I_2$ and draws
the merged leaf from its normalized weights.
The restricted prior masses remain in $I_1,I_2$: there is no
division by a separate order-cone probability for each tree.
Normalizing conditional draw probabilities is a different operation.
This implements Eq.~\eqref{eq:bart-main-working-prior}'s
constrained-product formulation on the finite grid.

The pair calculation is linear in $Q$. For numerically scaled
$a_q=w_{Lq}$ and $b_r=w_{Rr}$, the suffix recurrence is
\[
 B_{Q+1}=0,\qquad B_q=b_q+B_{q+1},\qquad
 I_2=\sum_{q=1}^{Q}a_qB_q.
\]
Choose $q$ with probability proportional to $a_qB_q$, then
$r\geq q$ with probability proportional to $b_r$. The implementation
uses the residual of the same uniform variate to preserve the
lexicographic pair distribution.
Scaled likelihood weights and a log-sum-exp fallback avoid
underflow, with scale factors restored in the MH ratio.
For equally spaced leaf points, successive quadratic-likelihood
ratios change by $\exp(-n_\ell\Delta_u^2/\sigma^2)$, where
$\Delta_u$ is the spacing. These recurrences preserve the fixed
numerical target; they do not refine its support.

\subsection{Refitting, retention and inversion}
At every completed cohort, including the final one, each method
uses all accrued observations. The sampled methods start two
fresh, separately seeded chains; no preceding chain is used as a
warm start and no retained sweeps are thinned
(Table~\ref{tab:posterior-computation-settings}).
For 2PLD and 3PND, each chain starts from the stated prior.
iBART starts from zero-valued stump contributions with offset 2
and residual SD 0.2. mBART starts from stump values summing to the
centered sample mean and the sample residual SD, with positive
fallback $\sqrt{\lambda}$ for a degenerate initial sample.
These starting values do not alter the fixed priors.

\begin{table}[htbp]
\centering
\caption{Full posterior computation at each cohort in the $N=45$ study.
Burn-in and retained counts are complete sweeps per chain. Retained
draws from both chains are pooled, without thinning.}
\label{tab:posterior-computation-settings}
\small
\small
\setlength{\tabcolsep}{3pt}
\renewcommand{\arraystretch}{1.0}
\begin{tabular*}{\textwidth}{@{}@{\extracolsep{\fill}}lrrrl@{}}
\toprule
Method & Chains & Burn-in & Retained & Main transition\\
\midrule
1PLD & 0 & 0 & 0 & Analytic Gaussian posterior\\
2PLD & 2 & 4,000 & 64,000 & Two-coordinate slice sweep\\
3PND & 2 & 4,000 & 64,000 & Three-coordinate slice sweep\\
iBART & 2 & 1,000 & 16,000 & Tree MH, Gaussian leaves, variance\\
mBART & 2 & 1,000 & 4,000 & Tree MH, ordered grid leaves, variance\\
\bottomrule
\end{tabular*}
\end{table}

For a retained tree draw, calculate
$p_k=\Phi\{(f_k-\eta)/\sigma\}$ on the 101 dose cells and count the
acceptable prefix with $p_k\leq\theta=1-\gamma$.
No acceptable cell gives $\xi=x_{\min}$; all acceptable cells give
$\xi=x_{\max}$; otherwise $\xi$ is the cut at the prefix's right
boundary. This implements the supremum in Eq.~\eqref{eq:bart-root},
including jumps and endpoint atoms, without imposing risk equality
at $\xi$. MTD quantiles use lower empirical quantiles (type~1).
Assigned-dose risk is recomputed after the interior adjustment and
dose restrictions, with a dose equal to a cut entering the cell
on its right. The last administered dose remains the reported MTD
estimate; the final posterior is an additional inferential summary.

% END SOURCE: bart_numerical_appendix.tex
% BEGIN SOURCE: bart_rl_formulation.tex
\section{Bayesian planning under dose restrictions and count-based stopping}
\label{app:rl-formulation}
This appendix develops the planning formulation in
Section~\ref{sec:rl-framework} for the protocol used in this paper.
The unknown response model is static; doses affect both patient outcomes
and what can be learned before subsequent assignments. Joint posterior
states are a standard device for Bayesian planning
\citep{ross2011bayesian,guez2012bayes}, and dynamic programming for
phase I trials has a precedent in \citet{bartroff2010dp}.
Here the two dose controls make the last administered dose consequential:
Only Escalation prevents reversing an upward assignment, and the cap
limits the doses reachable within the remaining cohorts. The results
below specialize the decision problem to those controls, the cumulative
DLT stopping rule, and a terminal estimate equal to the actual last dose.
We first derive exact-policy guarantees and then evaluate finite-$B$ rollout
under the conjugate 1PLD model using a fixed linear response curve
(Section~\ref{sec:rl-seven-curve}).
The main simulation study compares the five EWOC procedures.

% BEGIN SOURCE: bart_rl_primer.tex
\subsection{From the agent--environment loop to dose finding}
\label{app:rl-primer}
The \emph{agent--environment loop} describes sequential decisions:
an agent uses its state and policy to choose an action, receives
feedback from the environment, and updates its knowledge for the next
decision. Reward defines the objective; return accumulates rewards
over an episode \citep{ross2011bayesian}.
Figure~\ref{fig:rl-dose-correspondence} maps this loop to dose finding.

\begin{figure}[!htbp]
\centering
\includegraphics[width=\textwidth]{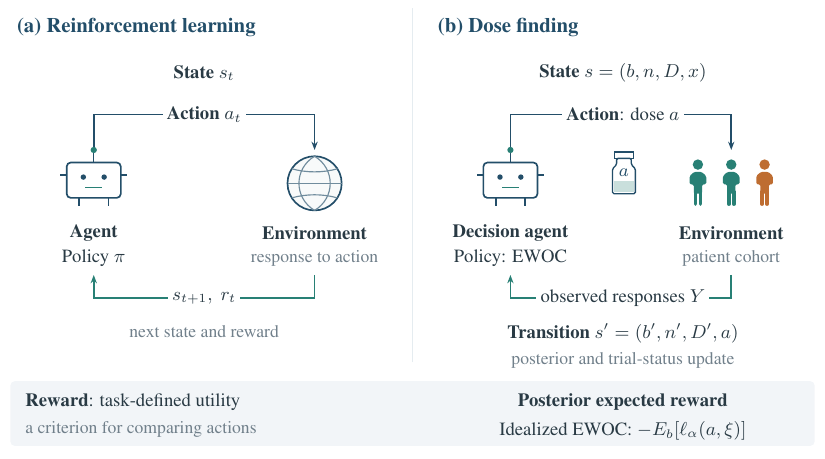}
\caption{\textbf{The RL loop and its dose-finding counterpart.}
An action changes patient exposure and supplies observations for the
next decision. Cohort responses update the posterior and trial state;
posterior expected reward evaluates dose choices. The reward's
MTD-dependent component is latent.}
\label{fig:rl-dose-correspondence}
\end{figure}

\paragraph{Agent, environment, state, and action.}
The \emph{agent} is the dose-selection procedure; the \emph{environment}
is the patient response process with a fixed unknown response model.
After a completed cohort, the \emph{state} $s=(b,n,D,x)$ records the
joint parameter posterior, enrollment, cumulative DLT count, and last
dose. The \emph{action} $a$ is the next cohort's dose. The implemented
\emph{policy} maps this state to a dose using EWOC and the enabled
controls. The joint posterior supports MTD inference and predicts
responses to candidate doses.

\paragraph{Feedback, transition, and episode.}
At dose $a$, the full cohort supplies observed responses $\mathbf Y$.
These update the posterior, enrollment, and DLT count; the new last
dose is $a$. This \emph{transition} links patient exposure to subsequent
learning. The trial \emph{episode} ends when the completed-cohort
stopping check gives $D>K_N$ or $n=N$.

\paragraph{Reward and the EWOC connection.}
The idealized EWOC proposal uses asymmetric dose loss and posterior
expected reward
\[
 \ell_\alpha(a,\xi)=\alpha(\xi-a)_+ +(1-\alpha)(a-\xi)_+,
 \qquad \bar r_b(a)=-\E_b[\ell_\alpha(a,\xi)].
\]
The unadjusted posterior $\alpha$-quantile maximizes this one-step
expected reward; smaller $\alpha$ weights overdosing more heavily
\citep{bartroff2011}. The implemented policy subsequently applies its
numerical adjustment and dose controls. The observed response
$\mathbf Y$ supplies learning feedback, not this reward: the true MTD
$\xi$ is unknown, so dose quality is assessed through its posterior.

\paragraph{Learning and planning roles.}
Value-based RL uses expected future return to compare actions, while
policy-based RL optimizes an action-selection rule; these approaches
can be combined. Here, Bayesian updating learns the response model
within a model-based belief-state formulation \citep{ross2011bayesian}.
EWOC is a prescribed posterior-dependent policy, with no policy-gradient
training. The proposed rollout uses a different objective from the EWOC
one-step loss: it compares expected remaining DLT and above-MTD
counts plus terminal last-dose error. This connects the formulation
to action-value planning
\citep{bartroff2010dp}. The main five-model experiments evaluate EWOC.

% END SOURCE: bart_rl_primer.tex

\subsection{Joint posterior state and the cohort transition}
Fix the model, prior, dose interval, and protocol. Let $\vartheta$ denote
the static model parameter, $\mu_\vartheta$ its likelihood mean, and
$f_\vartheta$ the curve used to define the MTD. For iBART these are
$\vartheta=(g,\sigma)$, $\mu_\vartheta=g$, and
$f_\vartheta=\mathcal P g$; for mBART they are
$\vartheta=(f,\sigma)$ and $\mu_\vartheta=f_\vartheta=f$.
The parametric models use their stated mean, risk, and endpoint
conventions. Write $\xi(\vartheta)=\xi(f_\vartheta,\sigma)$ and
$p_\vartheta(a)=\Phi\{(f_\vartheta(a)-\eta)/\sigma\}$.
The physical bounds satisfy $0<x_{\min}<x_{\max}<\infty$, so
$\xi(\vartheta)\in[x_{\min},x_{\max}]$ is positive even at an endpoint.

After a completed cohort the state is $s=(b,n,D,x)$, where
$b=\Pi(d\vartheta\mid\mathcal F_n)$, $n$ is accrued enrollment, $D$
is the cumulative DLT count, and $x$ is the last administered dose.
At dose $a$, the next cohort contains $m=\min(c,N-n)$ patients.
For $\mathbf y=(y_1,\ldots,y_m)$ and
$\phi_\sigma(v)=\sigma^{-1}\phi(v/\sigma)$, its likelihood,
predictive density, and posterior are
\begin{align}
 L_a(\vartheta;\mathbf y)&=\prod_{j=1}^m
       \phi_\sigma\{y_j-\mu_\vartheta(a)\},&
 q_b(\mathbf y\mid a)&=\int L_a(\vartheta;\mathbf y)b(d\vartheta),
 \nonumber\\
 b^{a,\mathbf y}(d\vartheta)&=
 \frac{L_a(\vartheta;\mathbf y)b(d\vartheta)}{q_b(\mathbf y\mid a)},&
 T(s,a,\mathbf y)&=(b^{a,\mathbf y},n+m,D+d(\mathbf y),a),
 \label{eq:rl-app-update}
\end{align}
where $d(\mathbf y)=\sum_j\mathbf1\{y_j\geq\eta\}$.
The same parameter generates the entire cohort. The integral therefore
encloses the product; a product of marginal predictive densities would
instead redraw the latent parameter for each patient. The raw $g$
appears in the iBART likelihood and update, even though its projected
curve defines the target and feasibility event.

\begin{proposition}[Sufficiency of the joint belief state]
\label{prop:rl-markov}
Suppose the parameter space is standard Borel, the likelihood and
parameter functionals are measurable, and within each cohort responses
are conditionally independent given the preceding history, assigned
dose, and static parameter, with the Gaussian densities above. Suppose
policy randomization is independent of that parameter conditional on
the observed history. Then $s=(b,n,D,x)$ is a controlled Markov state
for continuation decisions under the specified protocol, with transition
\eqref{eq:rl-app-update}. A state is absorbing when $D>K_N$ or $n=N$.
\end{proposition}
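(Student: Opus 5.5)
The plan is to verify the controlled Markov property directly. The aim is to show that, conditional on the full observed history $\mathcal F_n$ at a completed-cohort boundary and on the chosen action $a$, the joint law of the next cohort's responses, the induced costs, and the next state depends on the history only through $s=(b,n,D,x)$, and that it is given by \eqref{eq:rl-app-update}. I will then check that the admissible action sets, the stopping indicator, and the costs are functions of $s$ alone.

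\textbf{Step 1: disintegration of the joint law.} On the standard Borel product of parameter and history spaces, the joint law of $(\vartheta,\mathcal F_n)$ under any admissible policy admits a regular conditional distribution $b=\Pi(d\vartheta\mid\mathcal F_n)$. Measurability of the likelihood, together with the cancellation of allocation factors (Subsection~\ref{sec:bart-likelihood}), identifies $b$ with the Bayes formula \eqref{eq:bart-posterior}. The hypothesis that policy randomization is conditionally independent of $\vartheta$ given history is what makes the allocation kernel parameter-free, so the cancellation also holds for randomized policies.

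\textbf{Step 2: one-cohort predictive law.} Given $\mathcal F_n$ and $X_{n+1}=\cdots=X_{n+m}=a$, the cohort responses are conditionally iid $\mathcal N(\mu_\vartheta(a),\sigma^2)$ given $\vartheta$. Write $\mathbf Y$ for the cohort response vector. Applying the tower property over $\vartheta\sim b$ gives
\[
\Pr(\mathbf Y\in d\mathbf y\mid\mathcal F_n,a)=\int L_a(\vartheta;\mathbf y)\,b(d\vartheta)\,d\mathbf y=q_b(\mathbf y\mid a)\,d\mathbf y .
\]
This shows the product must sit inside the integral. The cohort size $m=\min(c,N-n)$ depends only on $n$. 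The next posterior is then $b^{a,\mathbf y}$, obtained by Bayes' rule applied to the prior $b$; this is the standard sequential-updating identity, checked by comparing the densities of $(\vartheta,\mathbf y)$ in both factorizations. The remaining components update as $n+m$, $D+d(\mathbf y)$, and $a$. Hence the next state is $T(s,a,\mathbf Y)$, whose conditional law given $(\mathcal F_n,a)$ is a measurable kernel in $(s,a)$.

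\textbf{Step 3: actions, costs, and termination.} $I(s)$ depends only on $x$. The quantity $r_b(a)$, and therefore $F(s)$ and $\mathcal A(s)$, depends only on $b$. The stopping rule $D>K_N$ or $n=N$ depends only on $(n,D)$, and terminal states are absorbing because the trial ends there. The cohort costs $w_Dd(\mathbf y)$ and $w_Om\,o_b(a)$, and the terminal $R(b,x)$, are functions of $(s,a,\mathbf y)$. Latent MTD-dependent rewards are replaced by their conditional expectations given $\mathcal F_n$, and these equal the corresponding $b$-integrals by Step 1.

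\textbf{Main obstacle.} The delicate point is measurability of the state space of posteriors. I would treat $b$ as an element of the space of probability measures on $\Theta$ with its weak (Borel) topology. The task is to show that the map $(b,a,\mathbf y)\mapsto b^{a,\mathbf y}$ is measurable, and that $q_b$ is a measurable kernel. Both follow from measurability of $L_a$, using standard results on measurable Bayes operators, because $q_b>0$ for Gaussian likelihoods. An induction over cohorts then extends the one-step statement to the whole continuation.
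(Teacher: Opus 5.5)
Your proposal is correct and follows essentially the same route as the paper. Both arguments rest on the same points: the action carries no parameter information, so the parameter law given history and action is still $b$; conditional independence yields $L_a$ and integration yields $q_b$; Bayes' rule gives $b^{a,\mathbf y}$, and the standard Borel structure makes this update measurable; and $m$, the count increment, the new last dose, and the post-cohort terminal check depend only on $(s,a,\mathbf y)$.

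One small refinement: Bayes' formula also needs $q_b(\mathbf y\mid a)<\infty$, not just positivity. With unknown $\sigma$ this finiteness is guaranteed only predictive-almost surely (by Tonelli, since $q_b$ integrates to one). The paper therefore defines $b^{a,\mathbf y}$ off that null set and fixes an arbitrary measurable version on it, and you should add the same step.
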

\begin{proof}
Conditional on the observed history and chosen action, the parameter
law is still $b$: prescribed or independently randomized actions add
no parameter information. Conditional independence gives $L_a$, and
integration gives $q_b$. Bayes' formula gives $b^{a,\mathbf y}$ wherever
the denominator is positive and finite, which holds predictive-almost
surely; choose an arbitrary measurable version on the remaining null
set. Probability measures on a standard Borel space admit the required
regular conditional distributions, and this dominated likelihood gives
a measurable update. The cohort size, count increment, and new last dose
depend only on $(s,a,\mathbf y)$. The full cohort, including its posterior
update, precedes the terminal check. Hence no further history is needed
for the future transition or permitted actions.
\end{proof}

\paragraph{Why the MTD marginal is insufficient.}
For an interior candidate cut $z$, fixed $\sigma>0$, and distinct
$v_1,v_2>0$, consider the nondecreasing step means
\[
 f_k(a)=\eta-\sigma\Phi^{-1}(\gamma)
           +v_k\{2\mathbf1(a\geq z)-1\},\qquad k=1,2.
\]
Both have MTD $z$: doses below the cut are acceptable and those at or
above it are not. Degenerate beliefs at these two parameters have the
same MTD marginal, but different predictive response distributions.
Their DLT probabilities also differ. This counterexample to a general
state reduction applies to the tree construction as well as to monotone
Gaussian mean models. Keeping only the MTD posterior discards information
needed to simulate future observations and update the model.

\subsection{Two independent dose controls and the EWOC baseline}
Let $\ell_0=x_{\min}+\delta_{\mathrm{num}}$ and
$u_0=x_{\max}-\delta_{\mathrm{num}}$. At a post-initial, nonterminal
state define
\[
 \ell(s)=\begin{cases}x,&\text{Only Escalation on},\\
                       \ell_0,&\text{Only Escalation off},\end{cases}
 \qquad
 u(s)=\begin{cases}\min(x+\Delta_{\max},u_0),&\text{cap on},\\
                       u_0,&\text{cap off},\end{cases}
\]
Set $I(s)=[\ell(s),u(s)]$.
Every reached last dose is in $[\ell_0,u_0]$, so $I(s)$ is nonempty.
Define three different posterior quantities:
\begin{equation*}
 \begin{split}
 \rho_b(a)&=\int\Phi\{(\mu_\vartheta(a)-\eta)/\sigma\}b(d\vartheta),\\
 r_b(a)&=b\{p_\vartheta(a)>1-\gamma\},\qquad
 o_b(a)=b\{\xi(\vartheta)<a\}.
 \end{split}
\end{equation*}
These are, respectively, predictive DLT probability, posterior probability
of exceeding the toxicity-risk target, and posterior probability of
assigning above the MTD. In particular $r_b$ is not $\rho_b$; at tree
jumps it need not equal $o_b$ either. For iBART, $r_b$ uses the projected
curve whereas $\rho_b$ uses the raw likelihood mean.

The nominal set is $F(s)=\{a\in I(s):r_b(a)\leq\alpha\}$.
Use $F(s)$ if nonempty and the singleton $\{\ell(s)\}$ otherwise.
Thus the default fallback holds $x$; disabling Only Escalation instead
permits the lower floor. The fallback's actual risk is recorded as an
exception, not converted into a stopping condition. Initialization is
the separate prescribed action $x_1=6$. Termination occurs only after
an observed cohort when $D>K_N$ or $n=N$, regardless of the dose options.

For the exact posterior lower $\alpha$-quantile $q_\alpha(b)$, write the
complete EWOC baseline as
\begin{equation*}
 \widetilde a(b)=\max\{\ell_0,\min(x_{\max},q_\alpha(b))-
                    \delta_{\mathrm{num}}\},\qquad
 \pi_0(s)=\max\{\ell(s),\min(\widetilde a(b),u(s))\}.
\end{equation*}
This composition applies the inward margin, floor, cap, and hold in that
order. The admissibility result below is stated for the exact posterior
quantile.

For nondecreasing curves, excessive risk at $a$ implies
$\xi(\vartheta)\leq a$, including an unacceptable boundary point of a
step. Consequently, for $a<q_\alpha(b)$,
\[
 r_b(a)\leq b\{\xi\leq a\}<\alpha.
\]
The same implication holds under the stated 1PLD endpoint convention:
its negative slopes are all acceptable under the fixed calibration.
The inward proposal is below $q_\alpha$ unless the floor binds, and a
cap only lowers it. If $\pi_0(s)$ fails the nominal bound, it must
therefore equal $\ell(s)$. Monotonicity of the exceedance event then
makes all of $I(s)$ infeasible, so this is exactly the required fallback.
Otherwise it belongs to $F(s)$. Thus $\pi_0(s)\in\mathcal A(s)$ for all
four option combinations, with the lower-endpoint fallback when
$F(s)=\varnothing$.

\subsection{Last-dose loss and protocol-imposed reachability}
Fix nonnegative planning weights $w_D,w_O,w_R$ before comparing policies.
For a fixed parameter $\vartheta$ and current state $s$, define the
continuation risk of policy $\pi$ by
\begin{equation}
 \mathcal R^\pi_\vartheta(s)=
 \mathbb E^\pi_\vartheta\!\left[
 w_D\sum_{i=n+1}^{n_{\mathrm{end}}}\mathbf1\{Y_i\geq\eta\}
 +w_O\sum_{i=n+1}^{n_{\mathrm{end}}}\mathbf1\{X_i>\xi(\vartheta)\}
 +w_R\frac{|X_{n_{\mathrm{end}}}-\xi(\vartheta)|}{\xi(\vartheta)}
 \right].
 \label{eq:rl-app-fixed-risk}
\end{equation}
The policy continues to update its belief and obey the count rule while
this expectation holds the generating parameter fixed. The Bayes
continuation value is $V^\pi(s)=\int\mathcal R^\pi_\vartheta(s)b(d\vartheta)$.
At initialization the sums are the full NPD and NPO. At later states,
past loss contributions are fixed additive constants and may be omitted
from the continuation value without changing future decisions.
Signed BTM remains a reporting criterion for the direction of error;
the scalar planning loss uses its relative absolute magnitude.

Let
\begin{equation*}
 c_b(s,a)=m\{w_D\rho_b(a)+w_Oo_b(a)\},\qquad
 R(b,x)=w_R\int\frac{|x-\xi(\vartheta)|}{\xi(\vartheta)}b(d\vartheta).
\end{equation*}
Let $s_t=(b_t,n_t,D_t,x_t)$, with $s_0=s$, and let $\tau$ count the
remaining cohorts until termination. Averaging
\eqref{eq:rl-app-fixed-risk} over the current posterior and applying
iterated expectation gives
\[
 V^\pi(s)=\mathbb E_b^\pi\!\left[
 \sum_{t=0}^{\tau-1}c_{b_t}(s_t,a_t)+R(b_\tau,x_\tau)\right].
\]
Thus the count terms accrue to the realized stopping time, and the
terminal term scores last-dose accuracy; the objective assigns no
separate value to enrollment itself. The terminal posterior includes
the final responses, but $x_\tau$ remains the dose selected beforehand,
with no replacement estimate or terminal dose optimization.

\begin{proof}[Proof of Proposition~\ref{prop:rl-reachability}]
Write $H=H(s)$, equal to zero at terminal states and to
$\lceil(N-n)/c\rceil$ otherwise, and
$u_H=\min\{x+H\Delta_{\max},u_0\}$.
There are at most $H$ remaining cohort assignments. Each is at least
the preceding dose and at most that dose plus $\Delta_{\max}$, with
upper limit $u_0$. Induction gives the asserted interval. Early DLT
stopping can only reduce the number of changes. For every fixed
$\vartheta$, distance from $\xi(\vartheta)$ to this interval is
$(x-\xi(\vartheta))_++(\xi(\vartheta)-u_H)_+$. Divide the pathwise
distance bound by the positive $\xi(\vartheta)$ and integrate over the
fixed initial belief and future outcomes. Later adaptive posterior
updates do not change that joint-expectation argument.
\end{proof}

The bound isolates irreversible overshoot and a target too far above
the current dose to reach in time. Feasibility and count-based stopping
can further restrict a path. For the default
$N=45$, $c=3$, and $\Delta_{\max}=3.5$, the first completed cohort leaves
$n=3$, $x=6$, and $H=14$, giving $u_H=55$. Cohort $j$ cannot exceed
$6+(j-1)3.5$; the first 39 patients cannot exceed 48. For the evaluated
truths with MTD 50, NPO is therefore at most six irrespective of the
model. This bound follows directly from the dose schedule. Disabling
Only Escalation removes the irreversible lower bound $x$; disabling
the cap removes its incremental upper bound.

\subsection{A finite-candidate Bellman problem}
For existence and implementation, take a fixed finite list of measurable
dose functions $a_1(s),\ldots,a_M(s)$ in $I(s)$, including both
$\ell(s)$ and the complete baseline action $\pi_0(s)$. At nonterminal
post-initial states retain candidates with $r_b(a)\leq\alpha$; if none
pass, retain only $\ell(s)$. Denote the resulting nonempty set by
$C(s)$. Because the lower endpoint is included and $r_b$ is
nondecreasing, this fallback occurs exactly when $F(s)$ is empty.
The baseline proof above gives $\pi_0(s)\in C(s)$ everywhere.
For the initial cohort the permitted set is simply $\{x_1\}$, without
nominal feasibility certification, and $\pi_0(b_0,0,0,x_1)=x_1$.
Its last-dose coordinate only becomes an administered dose after that
forced cohort.

\begin{proposition}[Bellman recursion and an optimal finite-candidate policy]
\label{prop:rl-bellman}
Under Proposition~\ref{prop:rl-markov}'s measurability assumptions and
the finite candidate construction, a measurable deterministic Markov
policy attains the optimal Bayes continuation value $V_C$. It obeys
\begin{equation*}
 \begin{gathered}
 V_C(s)=R(b,x),\qquad D>K_N\ \text{or}\ n=N,\\
 V_C(s)=\min_{a\in C(s)}Q_C(s,a),\qquad\text{otherwise},\\
 Q_C(s,a)=c_b(s,a)+\int V_C\{T(s,a,\mathbf y)\}
                           q_b(\mathbf y\mid a)d\mathbf y.
 \end{gathered}
\end{equation*}
The same statement uses the singleton prescribed action at initialization.
\end{proposition}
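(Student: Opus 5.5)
We argue by backward induction on the number of remaining cohorts $H(s)$, which is finite and decreases by at least one at every transition because each action enrolls $m=\min(c,N-n)\ge1$ patients. The first task is to fix the function class. Let $\mathcal S_k$ be the measurable set of states with $H(s)\le k$. We show by induction on $k$ that there is a bounded measurable function $V_C$ on $\mathcal S_k$ and a measurable selector $\pi^*_k$ attaining the minimum.

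Boundedness follows from the cost structure. The weights are nonnegative, at most $m\le c$ patients accrue per cohort, and at most $H(s)$ cohorts remain. Since $x_{\min}>0$, the terminal term satisfies $R(b,x)\le w_R\,(x_{\max}-x_{\min})/x_{\min}$. Together these give $0\le V\le H(s)\,c\,(w_D+w_O)+w_R L_x/x_{\min}$ for every policy.

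\textbf{Base case ($k=0$).} The state is terminal, so $V_C(s)=R(b,x)$. This is measurable in $b$ because $\vartheta\mapsto|x-\xi(\vartheta)|/\xi(\vartheta)$ is a bounded measurable function, and $b\mapsto\int h\,db$ is measurable on the space of probability measures for bounded measurable $h$.

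\textbf{Inductive step.} Given $V_C$ bounded and measurable on $\mathcal S_{k-1}$, the next step is to check that each $Q_C(s,a_j(s))$ is measurable in $s$. This uses three facts:
\begin{itemize}
\item $c_b(s,a)$ is a measurable functional of $(b,a)$, since $\rho_b$ and $o_b$ are integrals of measurable, bounded integrands jointly measurable in $(\vartheta,a)$.
\item The update $T(s,a,\mathbf y)$ is jointly measurable, by Proposition~\ref{prop:rl-markov}.
\item $(s,\mathbf y)\mapsto V_C\{T(s,a_j(s),\mathbf y)\}\,q_b(\mathbf y\mid a_j(s))$ is jointly measurable and integrable. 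Tonelli's or Fubini's theorem then makes its $\mathbf y$-integral measurable in $s$.
\end{itemize}

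The candidate set $C(s)$ is a finite sublist of $a_1(s),\dots,a_M(s)$, selected by measurable tests $r_b(a_j(s))\le\alpha$, with $\ell(s)$ as the fallback. We define $\pi^*(s)$ as the lowest-index minimizer among retained candidates. Each event ``index $j$ is retained and is the first minimizer'' is a finite Boolean combination of measurable sets, so $\pi^*$ and $V_C=\min_j Q_C$ are measurable. The initialization state, with its singleton $\{x_1\}$, is handled identically.

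It remains to show that this recursively defined $V_C$ is the optimal Bayes continuation value over all policies choosing from $C(\cdot)$, including history-dependent and randomized ones. We use the representation $V^\pi(s)=\E_b^\pi[\sum_t c_{b_t}(s_t,a_t)+R(b_\tau,x_\tau)]$ and condition on the first cohort. The tower property, combined with the controlled Markov property of Proposition~\ref{prop:rl-markov}, reduces the continuation after any first action and response to a policy started at $T(s,a,\mathbf y)$. Because that continuation may depend on the first-step history, it is a mixture of such policies. By the induction hypothesis, its value is at least $V_C\{T(s,a,\mathbf y)\}$. Hence $V^\pi(s)\ge\int Q_C(s,a)\,\pi(da\mid s)\ge V_C(s)$. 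Conversely, $\pi^*$ attains equality at every stage, so $V^{\pi^*}=V_C$ by the same unrolling.

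\textbf{Main obstacle.} The delicate step is measurability of the belief update and of $s\mapsto\int V_C\{T(s,a,\mathbf y)\}q_b\,d\mathbf y$. The posterior is defined only up to predictive null sets, and the state space is a space of measures. We handle this by fixing, once and for all, the measurable version of $T$ supplied by Proposition~\ref{prop:rl-markov}. We also give the space of probability measures on the standard Borel parameter space its weak-topology Borel structure, under which $b\mapsto\int h\,db$ is measurable for every bounded measurable $h$. Null-set modifications then do not affect the integrals.
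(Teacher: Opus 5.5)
Your proof is correct and follows the paper's own argument. Both use backward induction on remaining enrollment (your $H(s)$ plays the role of the paper's $N-n$), a uniform bound on the nonnegative costs, measurability preserved by integrating against the measurable transition kernel, first-minimizer selection over the finite measurable candidate list, and a conditional-expectation lower bound for randomized or history-dependent policies that the selected Markov policy attains; you additionally spell out measurability details (the Borel structure on beliefs and Tonelli's theorem) that the paper only asserts.
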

\begin{proof}
All continuation costs are nonnegative and bounded by
$(w_D+w_O)(N-n)+w_R(x_{\max}-x_{\min})/x_{\min}$.
The forced terminal value is measurable. Induct on $N-n$; each action
observes at least one patient, so successor values are already defined.
Integration against the measurable transition kernel preserves
measurability. A finite minimum with measurable eligibility indicators
is measurable; choose the first minimizer in the fixed enumeration.
Conditional expectation bounds the value of any randomized or
history-dependent continuation below by this minimum. The selected
action followed by the induction policies attains it. This proves both
the recursion and optimality. Count-based terminal states encountered
before $N$ use the same forced terminal value.
\end{proof}

The finite formulation admits the measurable optimal policy in
Proposition~\ref{prop:rl-bellman}; continuous-action variants follow
under suitable measurable-selection conditions.

\subsection{Exact rollout improvement}
Let $V^{\pi_0}$ be the continuation cost of the complete EWOC baseline
under the same candidate and stopping conventions. Define
\[
 Q^{\pi_0}(s,a)=c_b(s,a)+\int V^{\pi_0}\{T(s,a,\mathbf y)\}
                                     q_b(\mathbf y\mid a)d\mathbf y.
\]
This evaluates one candidate cohort followed by baseline continuation
to termination, rather than only the next response.

\begin{proof}[Proof of Proposition~\ref{prop:rl-rollout}]
The values agree at terminal states. Assume the inequality holds for
states with fewer remaining patients. At $a=\pi_+(s)$ it gives
\[
 \begin{aligned}
 V^{\pi_+}(s)
 &\leq c_b(s,a)+\int V^{\pi_0}\{T(s,a,\mathbf y)\}q_b(\mathbf y\mid a)d\mathbf y\\
 &=\min_{v\in C(s)}Q^{\pi_0}(s,v)
 \leq Q^{\pi_0}\{s,\pi_0(s)\}=V^{\pi_0}(s).
 \end{aligned}
\]
The second inequality uses inclusion of the complete baseline action,
including its forced fallback. Induction finishes the proof.
\end{proof}

The next section implements finite-$B$ rollout under the same transition
model, candidate sets, stopping rule, and weighted loss.

\subsection{Monte Carlo rollout specification and approximation error}
\label{sec:rl-mc-rollout}
Algorithm~\ref{alg:rl-app-rollout} gives one repeated-rollout policy
for all five response models. Model-specific routines supply joint
posterior draws, likelihood predictions, posterior updates, and MTD
inversion; $T_{\mathcal M}$ is the whole-cohort update in
\eqref{eq:rl-app-update} for the chosen model. The algorithm starts after
the prescribed initial cohort. It samples one latent model per simulated
continuation and retains it throughout that path. The latent draw
generates responses and scores the path; subsequent EWOC decisions use
only the updated posterior and observed trial state. Each hypothetical
update conditions on all actual and earlier simulated observations.
For iBART, the raw
$g$ supplies the likelihood and update, while $\mathcal P g$ with the
same $\sigma$ determines the MTD and feasibility. For mBART, the
constrained $f$ supplies both. The candidate list below uses the two
enabled dose controls of the numerical study. Posterior updates and EWOC
quantiles are analytic in the 1PLD evaluation in
Section~\ref{sec:rl-seven-curve};
the other models require numerical posterior procedures. The accompanying \textsf{\textbf{Dose Trial Lab}} application
supports EWOC and rollout with all five response models and a
user-selected simulation budget.

\begin{algorithm}[H]
\caption{Repeated rollout with any of the five response models}
\label{alg:rl-app-rollout}
\label{alg:rl-all-models}
\label{alg:rl-1pld-implemented}
\small
\begin{algorithmic}[1]
\Require Model $\mathcal M\in\{\mathrm{1PLD},\mathrm{2PLD},\mathrm{3PND},\mathrm{iBART},\mathrm{mBART}\}$; joint posterior state $s=(b,n,D,x)$ after the prescribed initial cohort.
\Require Shared protocol and EWOC baseline $\pi_0$, fixed weights $(w_D,w_O,w_R)$, and $B$ simulated continuations per candidate.
\While{$n<N$ and $D\leq K_N$}
 \State Set $a_0=\pi_0(s)$, including inward adjustment, dose bounds, upward cap, and Only Escalation.
 \State Form $(a_0,x,x+\tfrac14(a_0-x),x+\tfrac12(a_0-x),x+\tfrac34(a_0-x))$ and remove duplicates.
 \State Retain candidates with $r_b(a)\leq\alpha$; if none pass, retain the permitted hold $x$. Call this ordered list $C(s)$.
 \If{$|C(s)|=1$}
  \State Set $a$ to its sole entry.
 \Else
  \State Draw independent joint parameters $\vartheta_r\sim b$, retaining their paired $\sigma_r$, and innovations $e_{r,n+1:N}\sim\mathcal N(0,I)$, for $r=1,\ldots,B$; share them across candidates.
  \For{each $a\in C(s)$}
   \For{$r=1,\ldots,B$}
    \State Copy $s_r=s$; set $a_r=a$ and $Z_r=0$.
    \While{$n_r<N$ and $D_r\leq K_N$}
     \State Set $m_r=\min(c,N-n_r)$; generate $y_{r,j}=\mu_{\vartheta_r}(a_r)+\sigma_r e_{r,n_r+j}$ for $j=1,\ldots,m_r$.
     \State Add $w_Dd(\mathbf y_r)+w_Om_r\mathbf1\{a_r>\xi(\vartheta_r)\}$ to $Z_r$.
     \State Update $s_r=T_{\mathcal M}(s_r,a_r,\mathbf y_r)$ using the complete simulated cohort.
     \State If $s_r$ is nonterminal, set $a_r=\pi_0(s_r)$ using only its updated posterior and trial state.
    \EndWhile
    \State Add $w_R|x_r-\xi(\vartheta_r)|/\xi(\vartheta_r)$ to $Z_r$.
   \EndFor
   \State Set $\widehat Q(s,a)=B^{-1}\sum_{r=1}^B Z_r$.
  \EndFor
  \State Set $a$ to the first minimizer of $\widehat Q(s,a)$ in $C(s)$.
 \EndIf
 \State Administer $\min(c,N-n)$ patients at $a$, observe the complete actual cohort, and update $s=T_{\mathcal M}(s,a,\mathbf y)$.
\EndWhile
\Ensure Terminal estimate $x$, the final administered dose.
\end{algorithmic}
\end{algorithm}

\begin{figure}[!p]
\centering
\includegraphics[width=\textwidth]{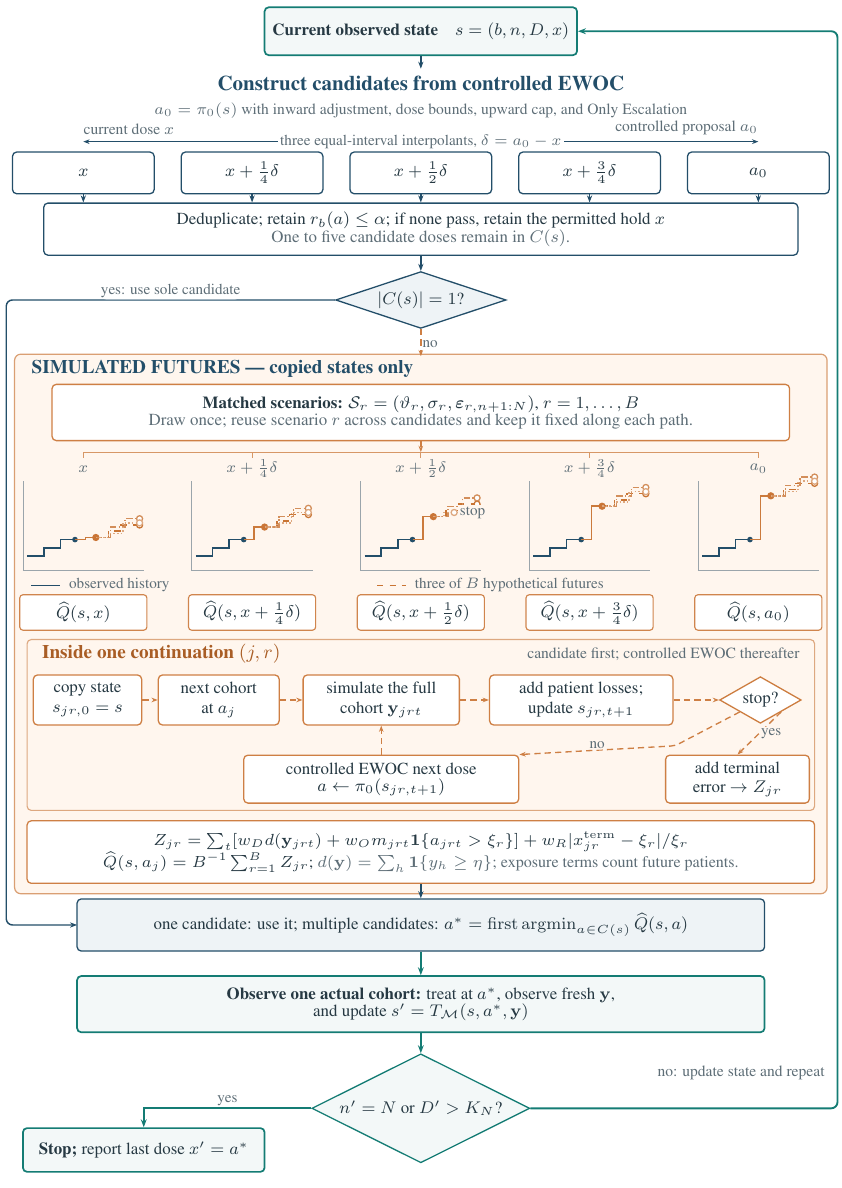}
\caption{\textbf{Repeated EWOC-based rollout at one observed trial state.}
Controlled EWOC generates up to five feasible candidates. A sole candidate is
used directly; otherwise, $B$ matched posterior-predictive continuations
estimate candidate losses, using each candidate first and controlled EWOC
thereafter. The minimum-loss candidate treats the next cohort, the state is
updated, and the comparison repeats until stopping.}
\label{fig:rl-repeated-rollout}
\end{figure}

With exact posterior primitives and independent continuations,
$\widehat Q(s,a)$ is unbiased for $Q^{\pi_0}(s,a)$ at a fixed state and
candidate. Selecting the smallest Monte Carlo estimate introduces
selection error, motivating a conditional bound under a uniform
approximation assumption.

\begin{proposition}[Rollout under uniform approximation error]
\label{prop:rl-approx-rollout}
Let $\mathcal S_{\mathrm{nt}}$ denote the nonterminal states in the
continuation problem. Suppose a fixed measurable approximation, using
the same legal candidate sets, satisfies
\begin{equation*}
 \sup_{s\in\mathcal S_{\mathrm{nt}}}\ \max_{a\in C(s)}
       |\widehat Q(s,a)-Q^{\pi_0}(s,a)|\leq\varepsilon
\end{equation*}
Its first-minimizing policy $\widehat\pi$ satisfies
\[
 V^{\widehat\pi}(s)\leq V^{\pi_0}(s)+2H(s)\varepsilon.
\]
Here $H(s)$ is zero at terminal states and
$\lceil(N-n)/c\rceil$ otherwise, the maximum remaining cohort count.
\end{proposition}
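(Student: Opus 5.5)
The plan is backward induction on the number of remaining cohorts, following the argument used for Proposition~\ref{prop:rl-rollout} but carrying an error term. The claim to prove by induction is $V^{\widehat\pi}(s)\leq V^{\pi_0}(s)+2H(s)\varepsilon$ for every state $s$. At terminal states $H(s)=0$, and both values equal $R(b,x)$, so the base case holds.

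For the inductive step, fix a nonterminal state $s$ and let $a=\widehat\pi(s)\in C(s)$ be the first minimizer of $\widehat Q(s,\cdot)$. Since $\widehat\pi$ is a deterministic Markov policy, Proposition~\ref{prop:rl-markov} and iterated expectation give the one-step policy-evaluation identity
\[
 V^{\widehat\pi}(s)=c_b(s,a)+\int V^{\widehat\pi}\{T(s,a,\mathbf y)\}\,q_b(\mathbf y\mid a)\,d\mathbf y .
\]
Every successor $s'=T(s,a,\mathbf y)$ has $n'=n+m$ with $m=\min(c,N-n)$. If $s'$ is terminal, then $H(s')=0$. Otherwise $H(s')=\lceil(N-n-c)/c\rceil=H(s)-1$. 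In both cases $H(s')\leq H(s)-1$.

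The induction hypothesis applied at each successor then gives
\[
 V^{\widehat\pi}(s)\leq Q^{\pi_0}(s,a)+2\{H(s)-1\}\varepsilon .
\]
The probabilities $q_b(\cdot\mid a)$ integrate to one, so the constant error term passes through the integral unchanged.

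Next, the uniform approximation bound is used twice. It compares $a$ with the baseline action $\pi_0(s)$, which lies in $C(s)$ by the candidate construction:
\[
 Q^{\pi_0}(s,a)\leq\widehat Q(s,a)+\varepsilon\leq\widehat Q\{s,\pi_0(s)\}+\varepsilon\leq Q^{\pi_0}\{s,\pi_0(s)\}+2\varepsilon=V^{\pi_0}(s)+2\varepsilon .
\]
The last equality is the baseline's own evaluation identity. Combining this with the previous display yields $V^{\widehat\pi}(s)\leq V^{\pi_0}(s)+2H(s)\varepsilon$, which completes the induction.

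The main obstacle is bookkeeping rather than any inequality. One point to check is the decrement $H(s')\leq H(s)-1$ when the final cohort is partial or the count rule stops the trial early. Another is the evaluation identity for $V^{\widehat\pi}$. This needs $\widehat\pi$ to be measurable, so that the successor values are integrable. That follows from the assumed measurability of $\widehat Q$, from the finite candidate list, and from the fixed first-minimizer tie rule, exactly as in the proof of Proposition~\ref{prop:rl-bellman}. Boundedness of all continuation costs guarantees that every integral is finite.
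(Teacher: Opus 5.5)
Your proposal is correct and follows the paper's proof. Both argue by induction on $H(s)$, with equality at terminal states. Including the baseline action $\pi_0(s)\in C(s)$ and applying the uniform error bound twice gives $Q^{\pi_0}(s,\widehat\pi(s))\leq V^{\pi_0}(s)+2\varepsilon$, and the successor states contribute at most $2\{H(s)-1\}\varepsilon$. You spell out what the paper leaves terse: the one-step evaluation identity for $V^{\widehat\pi}$, and the decrement $H(s')\leq H(s)-1$ under partial final cohorts and early count stops.
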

\begin{proof}
The baseline candidate and the two approximation errors give
$Q^{\pi_0}(s,\widehat\pi(s))\leq V^{\pi_0}(s)+2\varepsilon$.
Induct on $H(s)$, with zero error at compulsory terminal states.
The successor continuation contributes at most
$2\{H(s)-1\}\varepsilon$, and the current candidate comparison
contributes at most $2\varepsilon$. Early stopping only reduces these
remaining stages, giving the stated bound.
\end{proof}
The bound requires uniform approximation error over the continuation
domain with the same legal candidate sets. The numerical comparison
below reports finite-Monte-Carlo rollout performance.

\FloatBarrier
% BEGIN SOURCE: bart_rl_rollout_experiment.tex
\section{Rollout under a linear 1PLD model}
\label{sec:rl-seven-curve}

We use the fixed linear scenario in Table~\ref{tab:n45-truth-definitions},
for which the 1PLD working model is correctly specified and analytic
posterior updates isolate rollout planning from numerical posterior
approximation. The mean response
has true MTD 50 and Gaussian noise with standard deviation 0.1.
All policies fit the conjugate 1PLD working model
$Y_i=\beta(X_i-5)+\epsilon_i$, with
$\epsilon_i\sim\mathcal N(0,0.1^2)$ and the untruncated prior
$\beta\sim\mathcal N(0.04,0.02^2)$.

The study uses 1,000 paired trial replications. For each loss profile,
rollout is evaluated with $B=1{,}000$ and $B=2{,}000$ simulated
continuations per candidate, with the latter assessing sensitivity to
the planning budget.
All policies use initial dose 6, dose bounds 5 and 80, inward adjustment
$7.5\times10^{-7}$, feasibility level $\alpha=0.05$, target probability
$\gamma=0.9$, and toxicity threshold $\eta=3$. We compare upward caps
$\Delta_{\max}\in\{3.5,7,10.5\}$, with Only Escalation enabled. With $N=45$
and cohorts of three patients,
the whole cohort is observed and the posterior updated before stopping
at cumulative DLT count above four or enrollment 45.
The terminal estimate is the final administered dose. Write
$d=\mathrm{NPD}/N$, $o=\mathrm{NPO}/N$, and $e=\mathrm{RAE}$, where
planned $N$ remains the denominator if a trial stops early. We use the
equal-component objective $L_B=d+o+e$, corresponding to
$(w_D,w_O,w_R)=(1/N,1/N,1)$, and two sensitivity profiles:
\[
 L_S=2d+2o+e,
 \qquad
 L_P=d+o+2e.
\]
Thus $L_S$ emphasizes safety, whereas $L_P$ increases the relative
weight on MTD precision. Each weighted loss is compared only within
its own profile.

Rollout follows Algorithm~\ref{alg:rl-app-rollout} and is recomputed after
each observed cohort. The same patient-indexed innovations pair all
policies and all cap settings within each replication, while planning
draws are generated independently of the observed outcomes. The policy
calculation uses only the current posterior and protocol state.

Table~\ref{tab:rl-seven-curve} reports the mean and Monte Carlo standard
error (sample standard deviation divided by $\sqrt{1{,}000}$) for each
policy. Weighted loss is the objective in
Proposition~\ref{prop:rl-rollout}; the component metrics show how its
weights affect safety and precision.

\paragraph{Results and sensitivity analyses.}
Across all three cap settings, rollout attained lower observed mean
weighted loss than EWOC under every loss profile and planning budget.
Reductions relative to EWOC ranged from 4.1\% to 25.9\% with
$\Delta_{\max}=3.5$, from 55.8\% to 80.9\% with $\Delta_{\max}=7$,
and from 62.6\% to 84.3\% with $\Delta_{\max}=10.5$. Safety emphasis
reduced NPD and NPO, whereas precision emphasis reduced RAE relative to
balanced rollout. Results changed little when $B$ increased from 1,000
to 2,000.

Relaxing the cap exposed the safety--precision trade-off. Under EWOC,
mean NPD increased from 0.318 to 1.364 and 1.759, and mean NPO from
0.384 to 2.460 and 3.309, as the cap increased from 3.5 to 7 and 10.5;
mean RAE decreased from 1.250\% to 0.855\% and 0.795\%. At the 10.5
cap, rollout kept mean NPD between 0.096 and 0.118 and mean NPO between
0.189 and 0.198 across the three profiles and two planning budgets.
All rollout policies completed enrollment. EWOC stopped early in 1.1\%
of trials under the 7 cap and 3.8\% under the 10.5 cap.

\paragraph{Computational scope and software.}
Let $K$ bound the decision stages, $A_k\leq A$ the candidates at stage
$k$, $H_k\leq K-k+1$ the remaining cohorts, and $c_m$ the cost of one
simulated cohort for model $m$, including posterior updating, baseline-dose
selection, response generation, and loss accounting. Full rollout planning
per trial has work
\[
 T_{\mathrm{plan}}
 = O\!\left(B\sum_{k=1}^{K} A_k(1+H_k)c_m\right)
 = O(BAK^2c_m).
\]
Early stopping shortens the realized paths. For fixed cohort size, analytic
1PLD updates and EWOC quantiles have constant arithmetic cost. A numerical
update with $I$ sampling or move iterations instead costs
$O(Ic_{\mathrm{iter},m})$. Table~\ref{tab:rl-seven-curve} therefore uses
1PLD; \textsf{\textbf{Dose Trial Lab}} implements rollout for all five
models with a user-selected planning budget.

\begin{table}[!p]
\centering
\caption{EWOC and rollout under the fitted 1PLD model for the fixed linear response curve, with 1,000 paired replications under each upward cap. Entries are mean (MCSE); RAE and early stopping are displayed as percentages, and RAE enters the losses as a proportion. Loss reduction is $100(1-\overline L_{\mathrm{policy}}/\overline L_{\mathrm{EWOC}})$ within each cap and profile. Only Escalation is enabled in all three panels.}
\label{tab:rl-seven-curve}
\begingroup
\scriptsize
\setlength{\tabcolsep}{1.2pt}
\renewcommand{\arraystretch}{1.0}
\begin{tabular*}{\textwidth}{@{}@{\extracolsep{\fill}}llrrrrrrrr@{}}
\toprule
 & & & & \multicolumn{2}{c}{\textbf{Safety performance}} & \multicolumn{2}{c}{\textbf{Precision performance}} & & \\
\cmidrule(lr){5-6}\cmidrule(lr){7-8}
Loss profile & Policy & \shortstack{Weighted\\loss} & \shortstack{Loss reduction\\vs. EWOC\\(\%)} & \multicolumn{1}{c}{NPD} & \multicolumn{1}{c}{NPO} & \multicolumn{1}{c}{BTM} & \multicolumn{1}{c}{RAE (\%)} & \shortstack{Mean\\enrollment} & \shortstack{Early stop\\(\%)}\\
\midrule
\multicolumn{10}{@{}l}{\textbf{(a) Upward cap $\Delta_{\max}=3.5$}}\\[2pt]
\shortstack[l]{$L_B=d+o+e$\\[0.25ex]\textit{(balanced)}} & EWOC & \shortstack[c]{0.0281\\(0.0011)} & 0.0 & \shortstack[c]{0.3180\\(0.0175)} & \shortstack[c]{0.3840\\(0.0449)} & \shortstack[c]{-0.593\\(0.013)} & \shortstack[c]{1.250\\(0.022)} & \shortstack[c]{45.00\\(0.00)} & \shortstack[c]{0.0\\(0.0)}\\*
 & Rollout, \(B=1{,}000\) & \shortstack[c]{0.0249\\(0.0006)} & 11.5 & \shortstack[c]{0.1320\\(0.0113)} & \shortstack[c]{0.1500\\(0.0207)} & \shortstack[c]{-0.907\\(0.021)} & \shortstack[c]{1.860\\(0.040)} & \shortstack[c]{45.00\\(0.00)} & \shortstack[c]{0.0\\(0.0)}\\*
 & Rollout, \(B=2{,}000\) & \shortstack[c]{0.0247\\(0.0006)} & 12.0 & \shortstack[c]{0.1320\\(0.0113)} & \shortstack[c]{0.1440\\(0.0203)} & \shortstack[c]{-0.906\\(0.021)} & \shortstack[c]{1.858\\(0.040)} & \shortstack[c]{45.00\\(0.00)} & \shortstack[c]{0.0\\(0.0)}\\
\addlinespace[4pt]
\shortstack[l]{$L_S=2d+2o+e$\\[0.25ex]\textit{(safety-emphasized)}} & EWOC & \shortstack[c]{0.0437\\(0.0023)} & 0.0 & \shortstack[c]{0.3180\\(0.0175)} & \shortstack[c]{0.3840\\(0.0449)} & \shortstack[c]{-0.593\\(0.013)} & \shortstack[c]{1.250\\(0.022)} & \shortstack[c]{45.00\\(0.00)} & \shortstack[c]{0.0\\(0.0)}\\*
 & Rollout, \(B=1{,}000\) & \shortstack[c]{0.0328\\(0.0009)} & 24.8 & \shortstack[c]{0.1000\\(0.0101)} & \shortstack[c]{0.0690\\(0.0142)} & \shortstack[c]{-1.254\\(0.029)} & \shortstack[c]{2.534\\(0.057)} & \shortstack[c]{45.00\\(0.00)} & \shortstack[c]{0.0\\(0.0)}\\*
 & Rollout, \(B=2{,}000\) & \shortstack[c]{0.0324\\(0.0009)} & 25.9 & \shortstack[c]{0.1010\\(0.0102)} & \shortstack[c]{0.0600\\(0.0133)} & \shortstack[c]{-1.248\\(0.029)} & \shortstack[c]{2.522\\(0.057)} & \shortstack[c]{45.00\\(0.00)} & \shortstack[c]{0.0\\(0.0)}\\
\addlinespace[4pt]
\shortstack[l]{$L_P=d+o+2e$\\[0.25ex]\textit{(precision-emphasized)}} & EWOC & \shortstack[c]{0.0406\\(0.0011)} & 0.0 & \shortstack[c]{0.3180\\(0.0175)} & \shortstack[c]{0.3840\\(0.0449)} & \shortstack[c]{-0.593\\(0.013)} & \shortstack[c]{1.250\\(0.022)} & \shortstack[c]{45.00\\(0.00)} & \shortstack[c]{0.0\\(0.0)}\\*
 & Rollout, \(B=1{,}000\) & \shortstack[c]{0.0388\\(0.0007)} & 4.4 & \shortstack[c]{0.1440\\(0.0118)} & \shortstack[c]{0.1590\\(0.0213)} & \shortstack[c]{-0.778\\(0.016)} & \shortstack[c]{1.603\\(0.030)} & \shortstack[c]{45.00\\(0.00)} & \shortstack[c]{0.0\\(0.0)}\\*
 & Rollout, \(B=2{,}000\) & \shortstack[c]{0.0389\\(0.0007)} & 4.1 & \shortstack[c]{0.1420\\(0.0117)} & \shortstack[c]{0.1590\\(0.0213)} & \shortstack[c]{-0.782\\(0.016)} & \shortstack[c]{1.611\\(0.030)} & \shortstack[c]{45.00\\(0.00)} & \shortstack[c]{0.0\\(0.0)}\\
\addlinespace[7pt]
\multicolumn{10}{@{}l}{\textbf{(b) Upward cap $\Delta_{\max}=7$}}\\[2pt]
\shortstack[l]{$L_B=d+o+e$\\[0.25ex]\textit{(balanced)}} & EWOC & \shortstack[c]{0.0935\\(0.0050)} & 0.0 & \shortstack[c]{1.3640\\(0.0388)} & \shortstack[c]{2.4600\\(0.2104)} & \shortstack[c]{-0.367\\(0.011)} & \shortstack[c]{0.855\\(0.017)} & \shortstack[c]{44.92\\(0.03)} & \shortstack[c]{1.1\\(0.3)}\\*
 & Rollout, \(B=1{,}000\) & \shortstack[c]{0.0266\\(0.0006)} & 71.5 & \shortstack[c]{0.1160\\(0.0108)} & \shortstack[c]{0.1590\\(0.0213)} & \shortstack[c]{-0.994\\(0.020)} & \shortstack[c]{2.053\\(0.037)} & \shortstack[c]{45.00\\(0.00)} & \shortstack[c]{0.0\\(0.0)}\\*
 & Rollout, \(B=2{,}000\) & \shortstack[c]{0.0268\\(0.0006)} & 71.3 & \shortstack[c]{0.1140\\(0.0107)} & \shortstack[c]{0.1650\\(0.0216)} & \shortstack[c]{-0.999\\(0.020)} & \shortstack[c]{2.062\\(0.037)} & \shortstack[c]{45.00\\(0.00)} & \shortstack[c]{0.0\\(0.0)}\\
\addlinespace[4pt]
\shortstack[l]{$L_S=2d+2o+e$\\[0.25ex]\textit{(safety-emphasized)}} & EWOC & \shortstack[c]{0.1785\\(0.0100)} & 0.0 & \shortstack[c]{1.3640\\(0.0388)} & \shortstack[c]{2.4600\\(0.2104)} & \shortstack[c]{-0.367\\(0.011)} & \shortstack[c]{0.855\\(0.017)} & \shortstack[c]{44.92\\(0.03)} & \shortstack[c]{1.1\\(0.3)}\\*
 & Rollout, \(B=1{,}000\) & \shortstack[c]{0.0341\\(0.0011)} & 80.9 & \shortstack[c]{0.1060\\(0.0104)} & \shortstack[c]{0.1560\\(0.0211)} & \shortstack[c]{-1.090\\(0.024)} & \shortstack[c]{2.242\\(0.045)} & \shortstack[c]{45.00\\(0.00)} & \shortstack[c]{0.0\\(0.0)}\\*
 & Rollout, \(B=2{,}000\) & \shortstack[c]{0.0342\\(0.0011)} & 80.9 & \shortstack[c]{0.1050\\(0.0104)} & \shortstack[c]{0.1560\\(0.0211)} & \shortstack[c]{-1.097\\(0.025)} & \shortstack[c]{2.256\\(0.046)} & \shortstack[c]{45.00\\(0.00)} & \shortstack[c]{0.0\\(0.0)}\\
\addlinespace[4pt]
\shortstack[l]{$L_P=d+o+2e$\\[0.25ex]\textit{(precision-emphasized)}} & EWOC & \shortstack[c]{0.1021\\(0.0049)} & 0.0 & \shortstack[c]{1.3640\\(0.0388)} & \shortstack[c]{2.4600\\(0.2104)} & \shortstack[c]{-0.367\\(0.011)} & \shortstack[c]{0.855\\(0.017)} & \shortstack[c]{44.92\\(0.03)} & \shortstack[c]{1.1\\(0.3)}\\*
 & Rollout, \(B=1{,}000\) & \shortstack[c]{0.0451\\(0.0007)} & 55.8 & \shortstack[c]{0.1210\\(0.0110)} & \shortstack[c]{0.1710\\(0.0220)} & \shortstack[c]{-0.934\\(0.019)} & \shortstack[c]{1.931\\(0.034)} & \shortstack[c]{45.00\\(0.00)} & \shortstack[c]{0.0\\(0.0)}\\*
 & Rollout, \(B=2{,}000\) & \shortstack[c]{0.0450\\(0.0007)} & 55.9 & \shortstack[c]{0.1210\\(0.0110)} & \shortstack[c]{0.1680\\(0.0218)} & \shortstack[c]{-0.933\\(0.019)} & \shortstack[c]{1.930\\(0.034)} & \shortstack[c]{45.00\\(0.00)} & \shortstack[c]{0.0\\(0.0)}\\
\addlinespace[7pt]
\multicolumn{10}{@{}l}{\textbf{(c) Upward cap $\Delta_{\max}=10.5$}}\\[2pt]
\shortstack[l]{$L_B=d+o+e$\\[0.25ex]\textit{(balanced)}} & EWOC & \shortstack[c]{0.1206\\(0.0063)} & 0.0 & \shortstack[c]{1.7590\\(0.0444)} & \shortstack[c]{3.3090\\(0.2674)} & \shortstack[c]{-0.323\\(0.011)} & \shortstack[c]{0.795\\(0.016)} & \shortstack[c]{44.66\\(0.06)} & \shortstack[c]{3.8\\(0.6)}\\*
 & Rollout, \(B=1{,}000\) & \shortstack[c]{0.0293\\(0.0006)} & 75.7 & \shortstack[c]{0.0980\\(0.0101)} & \shortstack[c]{0.1890\\(0.0231)} & \shortstack[c]{-1.106\\(0.022)} & \shortstack[c]{2.292\\(0.041)} & \shortstack[c]{45.00\\(0.00)} & \shortstack[c]{0.0\\(0.0)}\\*
 & Rollout, \(B=2{,}000\) & \shortstack[c]{0.0292\\(0.0006)} & 75.8 & \shortstack[c]{0.0960\\(0.0100)} & \shortstack[c]{0.1920\\(0.0232)} & \shortstack[c]{-1.099\\(0.022)} & \shortstack[c]{2.275\\(0.040)} & \shortstack[c]{45.00\\(0.00)} & \shortstack[c]{0.0\\(0.0)}\\
\addlinespace[4pt]
\shortstack[l]{$L_S=2d+2o+e$\\[0.25ex]\textit{(safety-emphasized)}} & EWOC & \shortstack[c]{0.2332\\(0.0127)} & 0.0 & \shortstack[c]{1.7590\\(0.0444)} & \shortstack[c]{3.3090\\(0.2674)} & \shortstack[c]{-0.323\\(0.011)} & \shortstack[c]{0.795\\(0.016)} & \shortstack[c]{44.66\\(0.06)} & \shortstack[c]{3.8\\(0.6)}\\*
 & Rollout, \(B=1{,}000\) & \shortstack[c]{0.0367\\(0.0012)} & 84.3 & \shortstack[c]{0.0970\\(0.0101)} & \shortstack[c]{0.1890\\(0.0231)} & \shortstack[c]{-1.160\\(0.023)} & \shortstack[c]{2.401\\(0.042)} & \shortstack[c]{45.00\\(0.00)} & \shortstack[c]{0.0\\(0.0)}\\*
 & Rollout, \(B=2{,}000\) & \shortstack[c]{0.0369\\(0.0012)} & 84.2 & \shortstack[c]{0.0960\\(0.0100)} & \shortstack[c]{0.1980\\(0.0236)} & \shortstack[c]{-1.149\\(0.023)} & \shortstack[c]{2.381\\(0.042)} & \shortstack[c]{45.00\\(0.00)} & \shortstack[c]{0.0\\(0.0)}\\
\addlinespace[4pt]
\shortstack[l]{$L_P=d+o+2e$\\[0.25ex]\textit{(precision-emphasized)}} & EWOC & \shortstack[c]{0.1285\\(0.0063)} & 0.0 & \shortstack[c]{1.7590\\(0.0444)} & \shortstack[c]{3.3090\\(0.2674)} & \shortstack[c]{-0.323\\(0.011)} & \shortstack[c]{0.795\\(0.016)} & \shortstack[c]{44.66\\(0.06)} & \shortstack[c]{3.8\\(0.6)}\\*
 & Rollout, \(B=1{,}000\) & \shortstack[c]{0.0481\\(0.0008)} & 62.6 & \shortstack[c]{0.1170\\(0.0108)} & \shortstack[c]{0.1890\\(0.0231)} & \shortstack[c]{-0.996\\(0.020)} & \shortstack[c]{2.063\\(0.036)} & \shortstack[c]{45.00\\(0.00)} & \shortstack[c]{0.0\\(0.0)}\\*
 & Rollout, \(B=2{,}000\) & \shortstack[c]{0.0478\\(0.0008)} & 62.8 & \shortstack[c]{0.1180\\(0.0109)} & \shortstack[c]{0.1890\\(0.0231)} & \shortstack[c]{-0.988\\(0.020)} & \shortstack[c]{2.049\\(0.036)} & \shortstack[c]{45.00\\(0.00)} & \shortstack[c]{0.0\\(0.0)}\\
\bottomrule
\end{tabular*}
\endgroup
\end{table}

\FloatBarrier
% END SOURCE: bart_rl_rollout_experiment.tex

% END SOURCE: bart_rl_formulation.tex
% BEGIN SOURCE: bart_protocol.tex
\section{Simulation design}
\label{sec:bart-experiment}
\subsection{Enrollment, allocation, and stopping}
The study compares all five procedures with $N=45$, cohort size
$c=3$, initial dose 6, interval $(5,80)$, $\eta=3$, $\gamma=0.9$, and
$\alpha=0.05$. Both the upward cap $\Delta_{\max}=3.5$ and
Only Escalation are enabled. A cohort receives a common dose, and all
its responses are observed before posterior updating and the next
allocation. The procedure stops at a completed cohort if
$D_n>\lfloor0.1N\rfloor=4$, or if $n=45$. The fifth observed DLT triggers
the count rule; a complete cohort may overshoot it. Posterior feasibility
exceptions do not stop the trial. The full dose calculation is
\eqref{eq:bart-restricted-dose}.

Starting at 6, a complete trial has 14 adaptive cohort changes and therefore a
maximum reachable dose of 55. Cohort $j$ receives at most
$6+(j-1)3.5$, giving a maximum of 48 at cohort 13. Only the final
six patients can be assigned above the true MTD 50, so $\mathrm{NPO}\leq6$
for every path, including early stops. This reachability constraint is part of
the study design, not information supplied to a fitted model about the
true MTD.

\subsection{Seven true dose--toxicity curves}
\label{sec:bart-seven-curves}
Let $t=(x-5)/45$ and $A=3-0.1\Phi^{-1}(0.9)$. Responses satisfy
\[
 Y_i=f_0(X_i)+\epsilon_i,\qquad f_0(x)=A h(x),\qquad
 \epsilon_i\stackrel{\mathrm{iid}}{\sim}\mathcal N(0,0.1^2).
\]
The seven shapes are defined in Table~\ref{tab:n45-truth-definitions} and
displayed in Figure~\ref{fig:bart-truth-scenarios}. Each is nondecreasing
with $h(5)=0$ and $h(50)=1$, giving true risk $0.1$ and MTD 50 at dose 50.
The mean there is below the response threshold $\eta=3$ because residual
variation is positive. All models receive dose inputs in physical units;
$t$ only defines the data-generating curves.

\begin{table}[!htbp]\centering\small
\caption{The seven data-generating shapes, exactly matching the software.}
\label{tab:n45-truth-definitions}
\small
\setlength{\tabcolsep}{3pt}
\renewcommand{\arraystretch}{1.0}
\begin{tabular*}{\textwidth}{@{}@{\extracolsep{\fill}}ll@{}}\toprule
Curve & $h(x)$\\\midrule
Linear & $t$\\
Piecewise linear & Linear interpolation of the knots below\\
Quadratic (convex) & $t^2$\\
Square root (concave) & $\sqrt{t}$\\
Sigmoid & $\{L(x)-L(5)\}/\{L(50)-L(5)\}$,
  $L(x)=\{1+\exp[-(x-42.5)/8]\}^{-1}$\\
Exponential (convex) & $(e^{2t}-1)/(e^2-1)$\\
Logarithmic (concave) & $\log(1+4t)/\log(5)$\\\bottomrule
\end{tabular*}
\end{table}
The piecewise linear knots are
\[
\begin{array}{c|rrrrrrr}
x&5&20&35&45&50&60&80\\\hline
h(x)&0&0.04&0.08&0.5&1&1.2&1.25.
\end{array}
\]
% BEGIN SOURCE: bart_truth_figure_supplement.tex
\begin{figure}[!htbp]
\centering\includegraphics[width=\textwidth]{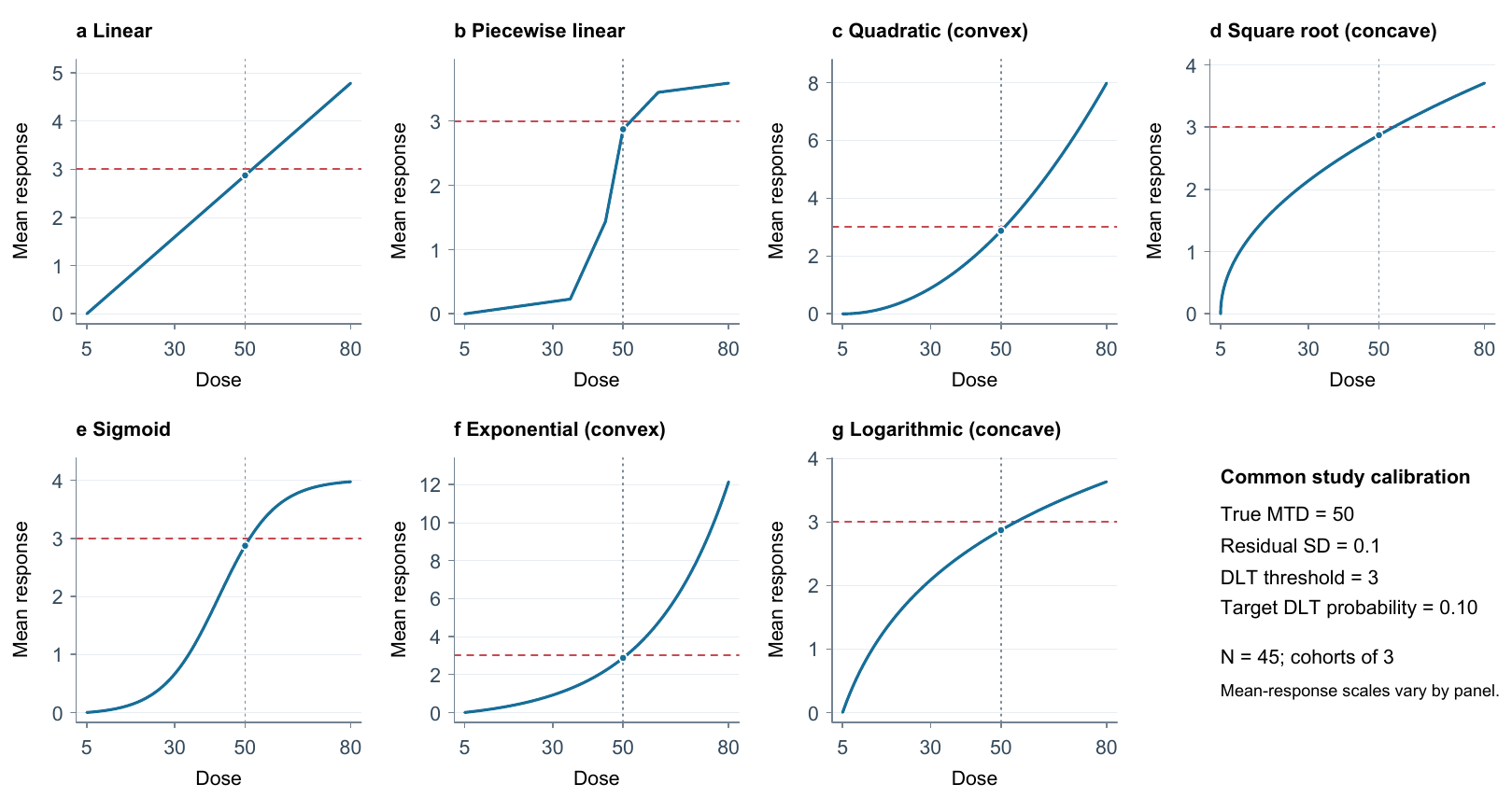}
\caption{Seven true mean dose--toxicity curves used in the $N=45$ study.
All panels share the physical dose axis $(5,80)$; mean-response axes are
scaled separately to show each shape. The red horizontal line is the
toxicity-response threshold $\eta=3$, and the dotted vertical line is
the true MTD 50. At that dose the mean is
$3-0.1\Phi^{-1}(0.9)<3$, so the Gaussian response tail has DLT probability
0.1. The curve is a mean, not a posterior interval or an individual outcome.}
\label{fig:bart-truth-scenarios}
\end{figure}
% END SOURCE: bart_truth_figure_supplement.tex

\subsection{Priors, replication, and computation}
There are 100 replications for each of the seven curves and five methods,
totaling 3,500 trials. Patient innovations are paired
across methods within curve and replication; posterior random streams
are separate. No truth or target location is supplied to allocation.
The 1PLD slope prior is $\mathcal N(0.04,0.02^2)$ with known $\sigma=0.1$.
The 2PLD and 3PND priors follow Appendix~\ref{app:bayesian-comparators},
with $\delta=0.1$ for 3PND. Tree models use 200 trees, shrinkage $k=2$,
response reference range $[0,4]$, 100 physical-dose cuts, and the
inverse-$\chi^2_3$ variance prior with $\Pr(\sigma<0.2)=0.9$.
The reference range does not truncate observed responses. Full sampling
settings appear in Appendix~\ref{app:bart-numerics}.

\paragraph{Exact random-stream schedule.}
The study uses R 4.5.1 with the Mersenne--Twister generator,
Inversion normal generation, and Rejection sampling. Number the curves
$s=1,\ldots,7$ in Table~\ref{tab:n45-truth-definitions} order, the
replications $r=1,\ldots,100$, and methods $m=1,\ldots,5$ as 1PLD,
2PLD, 3PND, iBART, and mBART. The job index is
\[
 j=5\{100(s-1)+(r-1)\}+m.
\]
The patient-innovation seed is $801000000+10000s+r$, shared by all
five methods within a curve and replication. The posterior seed for
cohort $k$ and chain $\ell\in\{1,2\}$ is
$810000000+100000j+100k+\ell$. The analytic 1PLD posterior requires no
MCMC stream.

\subsection{Four performance criteria}
\label{sec:bart-metrics}
At the realized endpoint $n=n_{\mathrm{end}}$, define
\begin{equation}
 \begin{aligned}
 \mathrm{NPD}(n)&=\sum_{i=1}^{n}\mathbf1\{Y_i\geq\eta\},&
 \mathrm{NPO}(n)&=\sum_{i=1}^{n}\mathbf1\{X_i>\xi_0\},\\
 \mathrm{BTM}(n)&=X_n-\xi_0,&
 \mathrm{RAE}(n)&=|X_n-\xi_0|/\xi_0.
 \end{aligned}
 \label{eq:bart-metrics}
\end{equation}
These follow the last-dose convention of \citet{lee2023}. Reported
RAE is multiplied by 100. Counts include every patient in the final
cohort. Trials stopped early remain in all summaries with their actual
endpoint and final dose. Each model--curve cell reports the replicate mean, median, and MCSE,
where MCSE is the sample standard deviation divided by $\sqrt{100}$.
Mean enrollment and early-stop frequency provide context for the four
criteria.

All 3,500 trials enrolled 45 patients, with no early stops. Of these,
2,942 reached the enrollment limit without crossing the DLT-count limit;
558 also crossed that limit in the final cohort. Thus, lower NPD or NPO
cannot be attributed to shorter trials.
% END SOURCE: bart_protocol.tex
% BEGIN SOURCE: bart_experiment_guide.tex
\subsection{Relation to the other analyses}
\label{app:experiment-guide}
Section~\ref{sec:rl-seven-curve} reports a separate 1PLD comparison of
EWOC and rollout at two planning budgets under the linear scenario;
its replications are not pooled with the five-model comparison in
Table~\ref{tab:primary-learning}. The pharmacodynamic analysis in
Appendix~\ref{app:bart-case} is a separate 30-patient adaptive
application and contributes no replications to either simulation
comparison.
% END SOURCE: bart_experiment_guide.tex
% BEGIN SOURCE: bart_supplement_case.tex

\section{Pharmacodynamic example}
\label{app:bart-case}
\label{sec:bart-case}
\subsection{Published data and common target}
We compare five models using the $O^6$-benzylguanine example of
\citet[pp.~12--15]{lee2023}, which
specifies $(x_{\min},x_{\max})=(20,140)$\,mg/m$^2$ and reports
24 tumor $O^6$-alkylguanine-DNA alkyltransferase (AGT) measurements
at 40, 60, 80, and 100\,mg/m$^2$ (Table~\ref{tab:bart-case-data}).

\begin{table}[ht]
\centering\small
\caption{Published AGT measurements used as dose-specific resampling pools.}
\label{tab:bart-case-data}
\small
\setlength{\tabcolsep}{3pt}
\renewcommand{\arraystretch}{1.0}
\begin{tabular*}{\textwidth}{@{}@{\extracolsep{\fill}}rrl@{}}
\toprule
Dose (mg/m$^2$)&Count&AGT (fmol/mg)\\
\midrule
40&3&26.35, 42.00, 15.00\\
60&3&23.00, 13.50, 11.00\\
80&9&31.67, 8.00, 9.00, 14.50, 11.50, 7.00, 11.70, 9.03, 8.00\\
100&9&4.07, 5.00, 8.70, 2.50, 4.07, 6.13, 3.60, 5.00, 5.00\\
\bottomrule
\end{tabular*}
\end{table}

Following \citet{lee2023}, let the continuous response be
$R=60-\mathrm{AGT}$, with threshold 55. For 1PLD and the two tree
models we use the equivalent score $Y=R/60$ and $\eta=55/60$;
2PLD and 3PND retain the original response units. Doses remain in
mg/m$^2$. With $\gamma=0.6$ and $\theta=1-\gamma=0.4$, the
common pharmacodynamic boundary is
\begin{equation*}
 p_{\mathrm{dep}}(x)=\Phi\left\{\frac{f(x)-55/60}{\sigma}\right\},
 \qquad
 \xi=\sup\bigl(\{20\}\cup\{x\in(20,140):p_{\mathrm{dep}}(x)\le0.4\}\bigr).
\end{equation*}
Here $p_{\mathrm{dep}}(x)=\Pr(\mathrm{AGT}<5\mid x)$; $f$ and
$\sigma$ in the display use the score scale. For iBART,
$f=\mathcal P g$ is the projected mean functional with each draw's
residual SD, whereas its likelihood uses $g$. The other methods
compute the tail from their fitted Gaussian mean and SD. The endpoints
20 and 140 represent an empty and an entirely acceptable range.

Although AGT depletion measures pharmacodynamic activity rather than
clinical toxicity, the analysis retains the sequential dose-finding
task of learning a monotone dose--response boundary while assigning
doses. These published measurements make the application reproducible;
comparable patient-level toxicity data are difficult to access and
publish because of confidentiality restrictions. The inferred AGT
boundary is a pharmacodynamic target, not a toxicity MTD or a
validated clinical dose recommendation.

For each method we generate a separate 30-patient allocation path in ten cohorts
of three, beginning at 40\,mg/m$^2$. At the assigned dose, each cohort
draws three AGT values without replacement from its published pool and
adds independent centered uniform perturbations to $R=60-\mathrm{AGT}$.
The half-widths are 8, 6, 4, and 2\,fmol/mg at 40, 60, 80, and
100\,mg/m$^2$, respectively, following the redesign of
\citet{lee2023}. A source value may reappear in a later cohort. A common precomputed
cohort-by-dose response bank gives two methods identical outcomes if
they assign the same dose in the same cohort. Each posterior update uses
only that method's accrued cohort responses. The 24 source values
provide the resampling pools and 1PLD's fixed SD but are not counted
as an initial cohort.

\subsection{Prior specification and computation}
The five models share the target above but require model-specific
calibration of the response scale and residual variation.
For iBART and mBART we retain the 200-tree architecture,
method-specific tree-depth settings, and $k=2$ described in
Appendices~\ref{sec:bart-priors} and~\ref{app:bart-numerics}. For the
normalized AGT score, we use mean center 0.5,
response span 1, and residual-SD reference 0.20 on that scale;
the leaf scale is thereby recalibrated to the score units.
The mean center and span correspond to the score units rather than
transferring the simulation outcome's numerical 2 and 4 unchanged.
The 100 equally spaced interior split points are placed on the
case's physical interval, $20+120j/101$, $j=1,\ldots,100$.
This is an outcome-scale calibration of the BART prior, not a change
to the clinical threshold or dose domain.

For 2PLD and 3PND, we use the normalized hierarchical priors in
Appendix~\ref{app:bayesian-comparators} on the source's raw response
$60-\mathrm{AGT}$, with threshold 55, and use $\delta=0.5$ for 3PND
as in the source's pharmacodynamic example \citep{lee2023}.
This retains the scale-one truncated half-Cauchy prior in its original
response units. Dividing fitted means and SDs by 60 then gives the
common display scale; applying a scale-one half-Cauchy directly to the
score-scale SD would specify a different prior.
All three parametric mean functions are anchored at zero at dose 20;
the tree means are not constrained to that endpoint value.

For 1PLD, the residual SD is estimated once by pooling within-dose
variation across the four groups and then treated as fixed:
\[
 s_{\mathrm{pool}}^2
 =\frac{\sum_{d\in\{40,60,80,100\}}\sum_{i:X_i=d}(Y_i-\bar Y_d)^2}
        {24-4},
 \qquad s_{\mathrm{pool}}=0.1142754.
\]
This is 6.8565 fmol/mg before division by 60.
The untruncated score-scale slope prior is
$\beta\sim\mathcal N(0.005,0.0025^2)$, using the fixed calibration
$a=0.60/(140-20)$ and $b=0.30/(140-20)$.
These are implementation prior constants, not estimates from the AGT
observations. The analytic 1PLD posterior conditions on the plug-in SD
and does not propagate uncertainty from estimating it.

At each cohort, the 2PLD and 3PND posterior calculations use 4,096
sequential Monte Carlo particles with eight rejuvenation sweeps and a
0.75 effective-sample-size resampling threshold. Each tree posterior
is refitted using 1,000 burn-in iterations and 4,000 retained draws;
the 1PLD update is analytic. Independent posterior random seeds are
fixed by method and cohort. Posterior calculation follows
Appendix~\ref{app:bart-numerics}.

\subsection{Sequential EWOC allocation}
\label{sec:bart-case-allocation}
All five methods use the same decision rule and differ only in their
dose--response posterior. Let $q_{m,t}$ be model $m$'s posterior
$\alpha$-quantile of $\xi$ after cohort $t$, with $\alpha=0.05$.
We adopt the published 3PND application's strict upward grid mapping,
while adding a 20\,mg/m$^2$ upward cap and Only Escalation. We first cap
the \emph{continuous} EWOC proposal, then choose the first eligible dose
strictly above it. Specifically, let
\[
\begin{aligned}
u_{m,t+1}&=\min\{q_{m,t},d_{m,t}+20\},\\
\mathcal D_{m,t}&=\{d\in\mathcal D:d\le d_{m,t}+20\},\\
\mathcal D&=\{40,60,80,100\}\,\mathrm{mg/m^2}.
\end{aligned}
\]
Choose $r_{m,t+1}$ as the smallest dose in $\mathcal D_{m,t}$ strictly
above $u_{m,t+1}$; if none exists, use $\max\mathcal D_{m,t}$. Then
$d_{m,t+1}=\max\{d_{m,t},r_{m,t+1}\}$, so the assignment rises by at
most one level and never decreases. Strict rounding advances an
exact-grid proposal when the cap permits it.
The published pool contains no AGT values at 120\,mg/m$^2$, so 100 is
the highest available action. The first cohort is assigned
40\,mg/m$^2$ by protocol. Every method completes ten cohorts; an
AGT-depletion event does not trigger toxicity stopping. This
application uses no rollout.

The posterior $\alpha$-quantile is a continuous-dose EWOC proposal.
Upward grid rounding or holding a previous dose can give an assigned
dose with $\Pr\{p_{\mathrm{dep}}(d_{m,t+1})>0.4\mid\mathcal F_{m,3t}\}>\alpha$.
Consequently, this discrete rule is \emph{EWOC-guided}
but does not guarantee the nominal posterior feasibility bound for
the administered doses. We record every such post-initial exception.

\subsection{Five-method sequential results}
Figure~\ref{fig:bart-case-paths} shows the administered doses by
cohort, and Table~\ref{tab:bart-case-sequential} reports the
last-cohort dose and terminal posterior summary for each method.

\begin{figure}[H]
 \centering
 \includegraphics[width=\textwidth]{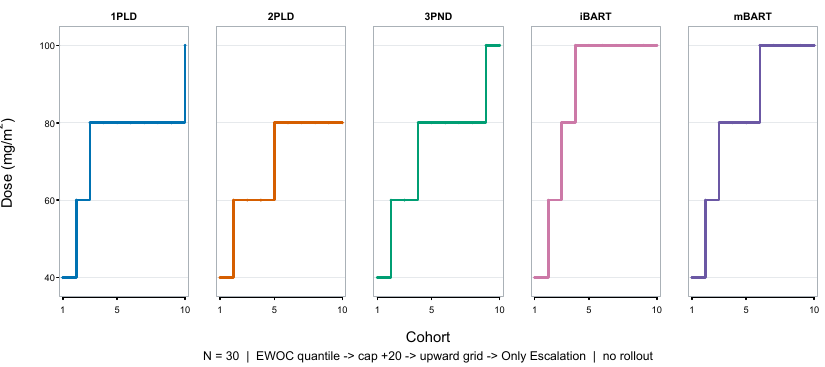}
 \caption{Administered-dose paths for five 30-patient runs under a common
 $\alpha=0.05$ EWOC quantile, a 20\,mg/m$^2$ cap, strict upward-grid
 assignment, and Only Escalation. Table~\ref{tab:bart-case-sequential}
 reports posterior-bound exceptions.}
 \label{fig:bart-case-paths}
\end{figure}

\begin{table}[H]
\centering\small
\caption{Single-path AGT results. Last dose follows 27 outcomes;
terminal $q_{0.05}(\xi)$ uses all 30. ``Bound exceeded'' counts
post-initial assignments with
$\Pr\{p_{\mathrm{dep}}(d_{\mathrm{next}})>0.4\mid\mathcal F\}>0.05$.
AGT $<5$ is pharmacodynamic, not a clinical DLT.}
\label{tab:bart-case-sequential}
\setlength{\tabcolsep}{6pt}
\renewcommand{\arraystretch}{1.12}
\begin{tabular}{@{}lcccc@{}}
\toprule
Method & Last dose & AGT $<5$ & Terminal $q_{0.05}(\xi)$ &
Bound exceeded\\
& (mg/m$^2$) & (of 30) & (mg/m$^2$) & (of 9)\\
\midrule
1PLD  & 100 & 3  & 82.93 & 8\\
2PLD  & 80  & 1  & 76.17 & 9\\
3PND  & 100 & 6  & 85.38 & 9\\
iBART & 100 & 14 & 82.97 & 7\\
mBART & 100 & 10 & 82.97 & 8\\
\bottomrule
\end{tabular}
\end{table}

\FloatBarrier
The methods escalate at different times. 3PND reaches 60, 80, and
100\,mg/m$^2$ in cohorts 2, 4, and 9; iBART, mBART, and 1PLD reach
100 in cohorts 4, 6, and 10; 2PLD ends at 80. These last doses are
allocation outputs, not terminal posterior boundaries. AGT-depletion
counts reflect the distinct paths; a single path per method cannot
establish comparative operating characteristics.

\subsection{Interpretation}
The original glioma trial selected 100\,mg/m$^2$ as a biochemical
modulatory dose using an AGT threshold of 10\,fmol/mg
\citep{friedman1998}. In a 30-patient redesign with a threshold of 5\,fmol/mg,
\citet[pp.~13--15]{lee2023} reported 80\,mg/m$^2$ for standard 3PND
escalation and 100\,mg/m$^2$ for accelerated escalation. Our five
paths use a common resampled response bank, five posteriors, and capped
upward-grid allocation with Only Escalation; they are not numerical
replications of either published trajectory.

The nominal $\alpha=0.05$ posterior bound is exceeded at 7--9 of nine
post-initial assignments per model (Table~\ref{tab:bart-case-sequential}).
For example, 3PND's first continuous proposal is 42.84\,mg/m$^2$;
upward rounding assigns 60, with posterior exceedance probability
0.501. Neither the cap nor Only Escalation restores feasibility at the
\emph{administered} dose. Because AGT depletion is pharmacodynamic and
the true boundary is unknown, we cannot compute NPO, BTM, or RAE
against a known target or infer a validated clinical optimal dose.
Within these limits, the five-model comparison places terminal doses
in the 80--100\,mg/m$^2$ region reported in the earlier AGT analyses,
while showing how the posterior model changes escalation timing and
depletion outcomes.

% END SOURCE: bart_supplement_case.tex
\FloatBarrier
% BEGIN SOURCE: bart_software_manual.tex
\section{Software manual}
\label{app:software-manual}

\textsf{\textbf{Dose Trial Lab}} 0.1.4 provides EWOC, rollout, and paired comparisons for
all five models and seven scenarios. The controls illustrate policy selection;
the dashboard and summaries show saved EWOC examples from the linear
scenario with 45 patients. These examples are not simulation-study results.

\begin{figure}[H]
\centering
\begin{minipage}[t]{.36\textwidth}
\centering\textbf{(a) Trial design}\par\smallskip
\includegraphics[width=\linewidth,trim={18bp 374.25bp 825bp 99bp},clip]{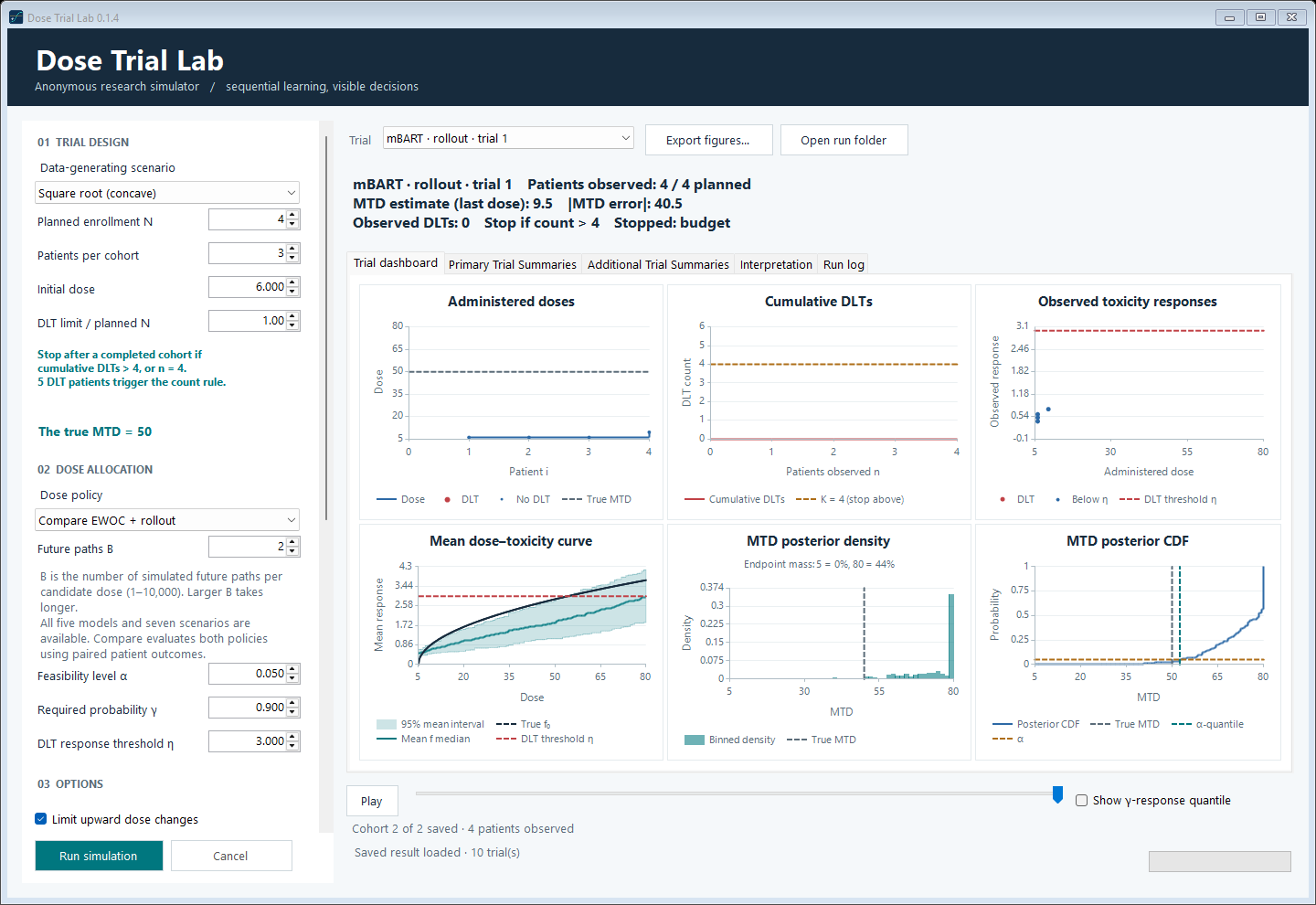}
\end{minipage}\hspace{.08\textwidth}
\begin{minipage}[t]{.36\textwidth}
\centering\textbf{(b) Policy and planning budget}\par\smallskip
\includegraphics[width=\linewidth,trim={18bp 121.5bp 825bp 375bp},clip]{figures/software_014_options.png}
\end{minipage}
\caption{Trial and policy controls. Choose \textbf{EWOC}, \textbf{Rollout},
or \textbf{Compare EWOC + rollout}. \textbf{Future paths B} is the number
of hypothetical continuations per candidate; larger values require more
computation. Scroll down for the dose controls and model choices.}
\label{fig:software-setup}
\end{figure}

\subsection{Install, choose settings, and run}
\label{app:software-installation}
\label{app:software-controls}

\begin{enumerate}
\setlength{\itemsep}{2pt}
\setlength{\parsep}{0pt}
\setlength{\topsep}{4pt}
\item Run \path{DoseTrialLab-0.1.4-Setup.exe} on 64-bit Windows with
.NET Framework 4.8, then open \textsf{\textbf{Dose Trial Lab}}. Other required
components are bundled.
\item Choose the scenario, planned enrollment, cohort size, and initial
dose. Defaults are \(N=45\), cohorts of three, and initial dose 6. Under
these defaults and with DLT-limit fraction 0.10, observe the whole cohort,
then stop at five or more
DLTs or enrollment 45.
\item Choose the \textbf{Dose policy}. For rollout or comparison, set
\textbf{Future paths B} from 1 to 10,000 (default 30). Comparison pairs
patient innovations between policies within each trial. Set the
feasibility level and response thresholds as required.
\item Both dose controls are on by default, with upward cap 3.5.
Turn off \textbf{Only Escalation} to permit dose decreases.
\item Select models and replicates; retain or change the random seed.
Select \textbf{Run simulation}; \textbf{Cancel} interrupts a run.
\end{enumerate}
Standalone EWOC uses manuscript computation settings. Rollout and
comparison use faster settings, matched between comparison policies.
These settings apply to interactive software runs and differ from those
used in the reported experiments.

\subsection{View the trial and replay its updates}
\label{app:software-plots}

Select the model, policy, and replicate from \textbf{Trial}. The
\textbf{Trial dashboard} shows doses, DLTs, responses, the mean curve,
and MTD posterior. Use the cohort slider or \textbf{Play}/\textbf{Pause}
to inspect updates. \textbf{Show \(\gamma\)-response quantile} adds the
optional curve overlay.

\begin{figure}[!ht]
\centering
\includegraphics[width=\linewidth,trim={651bp 213.75bp 51bp 525bp},clip]{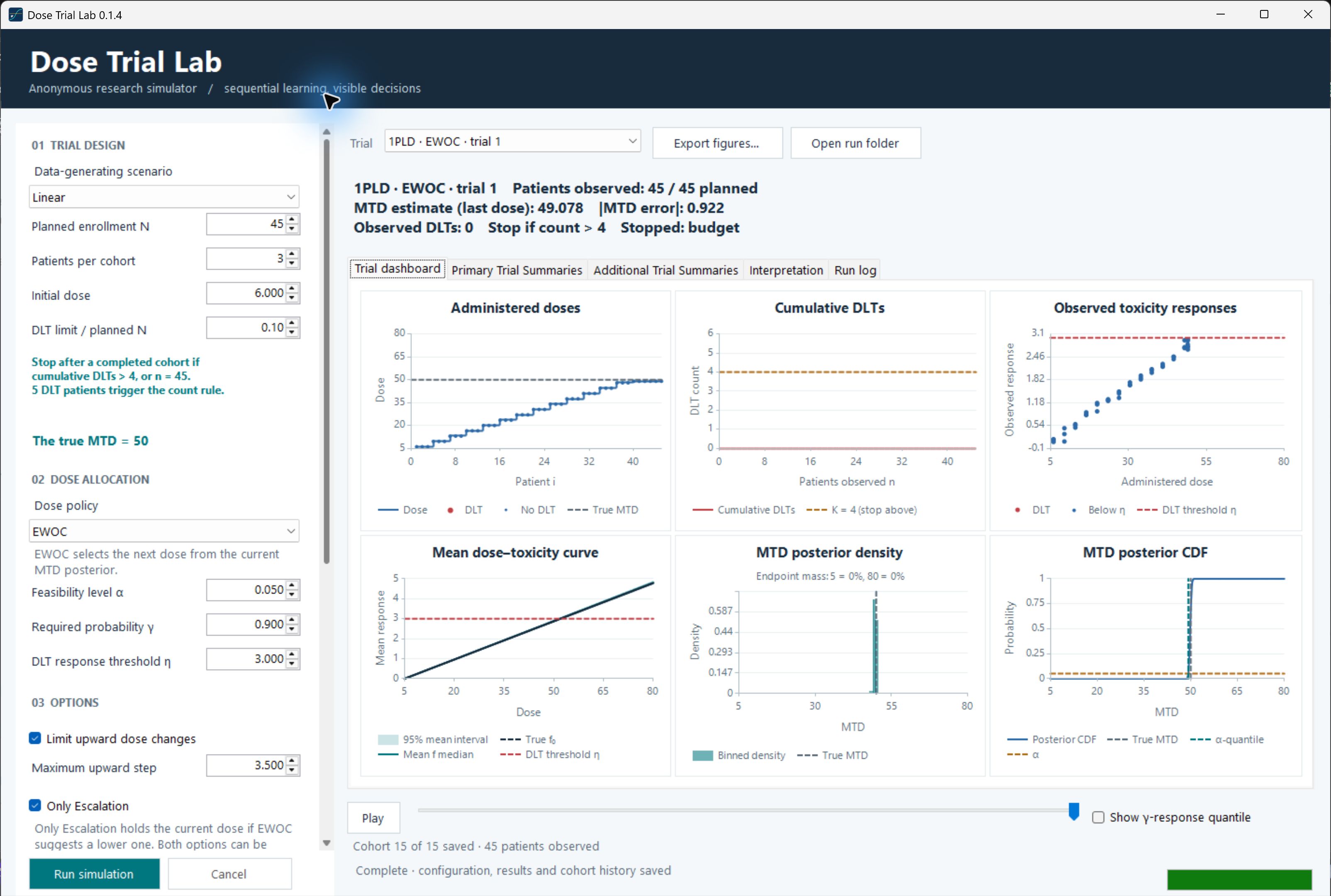}
\caption{Six-panel dashboard for a saved 1PLD EWOC trial under the linear
scenario after 45 patients: administered doses, cumulative DLTs, observed
responses, mean dose--toxicity curve, and MTD posterior density and CDF.
Change \textbf{Trial} or the cohort slider to inspect another saved update.}
\label{fig:software-plots}
\end{figure}

\textbf{Primary Trial Summaries} reports the last-dose estimate, NPD,
NPO, BTM, RAE, weighted loss, enrollment, and stopping reason.
\textbf{Additional Trial Summaries} contains posterior summaries.
The header identifies the selected model, policy, and endpoint.
Paired summaries can coincide when both policies choose the same doses.

\subsection{Export figures and reopen a run}
\label{app:software-output}

Choose \textbf{Export figures} to save all six plots for the selected
trial and cohort. \textbf{Open run folder} locates saved results and
paired exports; \textbf{Load saved result} reopens them.
\textbf{Save configuration} retains the settings.

\begin{figure}[H]
\centering
\includegraphics[width=\textwidth,trim={628.5bp 780bp 159.75bp 224.25bp},clip]{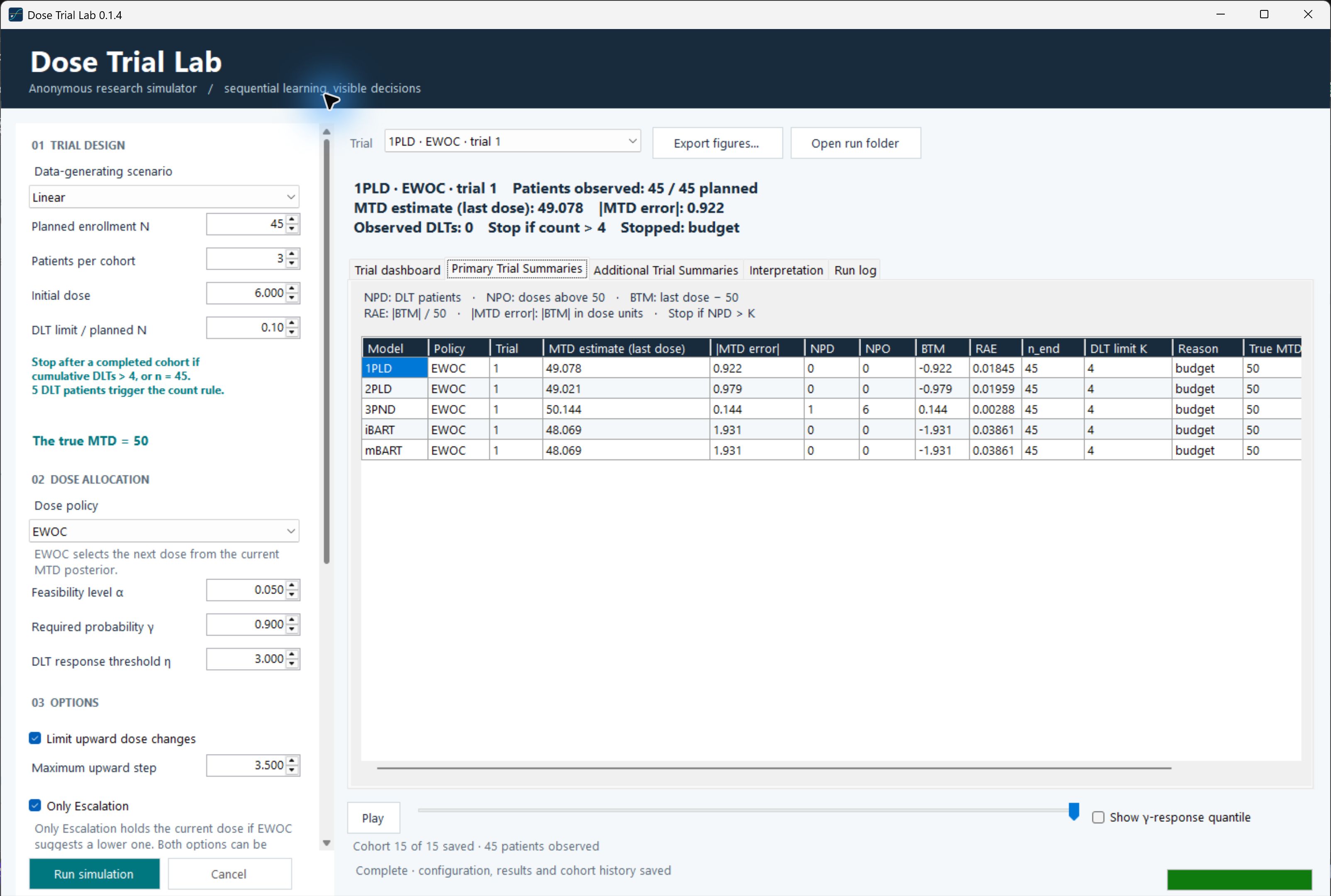}
\caption{EWOC-only \textbf{Primary Trial Summaries} for the five models
in a saved linear-scenario run with 45 patients per trial. The selected
1PLD trial is the dashboard example above. These individual trials
illustrate the interface, not a comparison of average performance.}
\label{fig:software-export}
\end{figure}
% END SOURCE: bart_software_manual.tex
% END SOURCE: bart_supplement.tex
\end{document}